\documentclass[12pt,reqno]{article}

\usepackage[authoryear]{natbib}
\usepackage{mathrsfs}
\usepackage{ragged2e}

\usepackage{natbib}
\setlength{\bibsep}{0.0pt}

\usepackage[Symbol]{upgreek}
\usepackage[T1]{fontenc}
\usepackage[latin9]{inputenc}
\usepackage[letterpaper]{geometry}
\pagestyle{plain}
\usepackage{color}
\usepackage{amsmath}
\usepackage{amsthm}
\usepackage{amssymb}
\usepackage{setspace}
\usepackage[authoryear]{natbib}
\onehalfspacing
\usepackage[unicode=true,pdfusetitle,
 bookmarks=true,bookmarksnumbered=false,bookmarksopen=false,
 breaklinks=false,pdfborder={0 0 0},pdfborderstyle={},backref=false,colorlinks=true]
 {hyperref}
\hypersetup{
 linkcolor=magenta,citecolor=blue}
 
\usepackage{graphicx}
\graphicspath{ {./graphs/} }
\usepackage{babel}

\makeatother
\usepackage{array}
\usepackage{booktabs}
\usepackage{multirow}

\numberwithin{equation}{section}

\theoremstyle{remark}
\newtheorem{rem}{\protect\remarkname}[section]
\theoremstyle{definition}
\newtheorem{defn}{\protect\definitionname}
\theoremstyle{plain}
\newtheorem{thm}{\protect\theoremname}
\theoremstyle{plain}
\newtheorem{prop}{\protect\propositionname}
\theoremstyle{definition}
\newtheorem{example}{\protect\examplename}
\theoremstyle{definition}

\newtheorem{lem}{\protect\lemmaname}[section]

\newtheorem{assumption}{Assumption}

\providecommand{\conditionname}{Condition}
\providecommand{\definitionname}{Definition}
\providecommand{\examplename}{Example}
\providecommand{\lemmaname}{Lemma}
\providecommand{\propositionname}{Proposition}
\providecommand{\remarkname}{Remark}
\providecommand{\theoremname}{Theorem}

\newcommand{\continuation}{??}

 \newcommand{\sieved}{d_p}
 \newcommand{\sievedstar}{d_p^*}

\begin{document}

\sloppy

\newgeometry{verbose,tmargin=1in,bmargin=1in,lmargin=1.25in,rmargin=1.25in,footskip=1cm}

\title{Leave No One Undermined:  Policy Targeting with Regret Aversion{\thanks{We would like to thank Don Andrews, Isaiah Andrews, Tim Armstrong, Yong Cai, Kevin Chen,  Xiaohong Chen, Harold Chiang, Tim Christensen, Bruce Hansen, Marc Henry, Lihua Lei, Yuan Liao, Doug Miller, Francesca Molinari, Jos\'e Luis Montiel Olea, Mikkel Plagborg-M{\o}ller, Jack Porter, Sophie Sun, Chris Taber, Max Tabord-Meehan,   Aleksey Tetenov, Davide Viviano, Ed Vytlacil, Kohei Yata, and the participants at the Bravo/JEA/SNSF Workshop on
``Using Data to Make Decisions'' (Brown), Madison, Yale, Greater NY Metropolitan Area Econometrics Colloquium (Rochester), SEA 2025 (Tampa), and SETA 2025 (Macau) for helpful comments.
Chen Qiu gratefully acknowledges financial support from NSF grant (number SES-2315600).
}}}
\author{Toru Kitagawa\thanks{ Department of Economics, Brown University. Email: toru\_kitagawa@brown.edu} \and Sokbae Lee\thanks{Department of Economics, Columbia University. Email: sl3841@columbia.edu} \and Chen Qiu\thanks{Department of Economics, Cornell University. Email: cq62@cornell.edu}}
\date{April 2026}
\maketitle

\begin{abstract}
While the importance of personalized policymaking is widely recognized, fully personalized implementation remains rare in practice, often due to legal, fairness or cost concerns. We study the problem of policy targeting for a regret-averse planner when training data gives a rich set of observables while the assignment rules can only depend on its subset. Our regret-averse criterion reflects a planner's concern about regret inequality across the population. This, in general, leads to a fractional optimal rule due to treatment effect heterogeneity beyond the average treatment effects conditional on the subset of  observables. We propose a debiased empirical risk minimization approach to learn the optimal rule from data and establish favorable, new upper and lower bounds for the excess risk, indicating a convergence rate of $1/n$ and asymptotic efficiency in certain cases. We apply our approach to the National JTPA Study and the International Stroke Trial.
\end{abstract}

\newpage
\section{Introduction}

Personalization and policy targeting gained wide recognition in social and medical sciences. Yet, practice of complete personalization is rare. For example, as noted by \cite{manski2022patient}, President's Council of Advisors on Science and Technology (\citeyear{PCAST}) defines  ``personalized medicine'' as:

\emph{``...the tailoring of medical treatment to the specific characteristics of each patient. In an operational sense, however, personalized medicine does not literally mean the creation of drugs or medical devices that are unique to a patient...''}

Indeed, even though a rich set of observables $X$ are present in the data, policy makers (PM) often only form allocation rules based on a few covariates $W$, a subset of and not as rich as $X$. If treatment responses are significantly different in $X$ even after conditioning on $W$, how should the PM design its optimal treatment policy?

For instance, consider the mentoring program studied by \cite*{resnjanskij2024can} that aims to improve the labor market prospects for disadvantaged  adolescents in Germany. \cite{resnjanskij2024can} collected a rich set of covariate information on the participants of their randomized control trial. Their study highlights drastically different treatment responses depending on the social economic status (SES, classified based on answers to questions like how many books are at home, whether the adolescent is a first-generation migrant or has a single parent, etc.) of the adolescents: Low SES adolescents respond positively to the mentoring program while  higher SES adolescents respond negatively (see left panel of Figure \ref{fig:mentor}).

\begin{figure}[h!]
    \centering
    \includegraphics[width=0.46\linewidth]{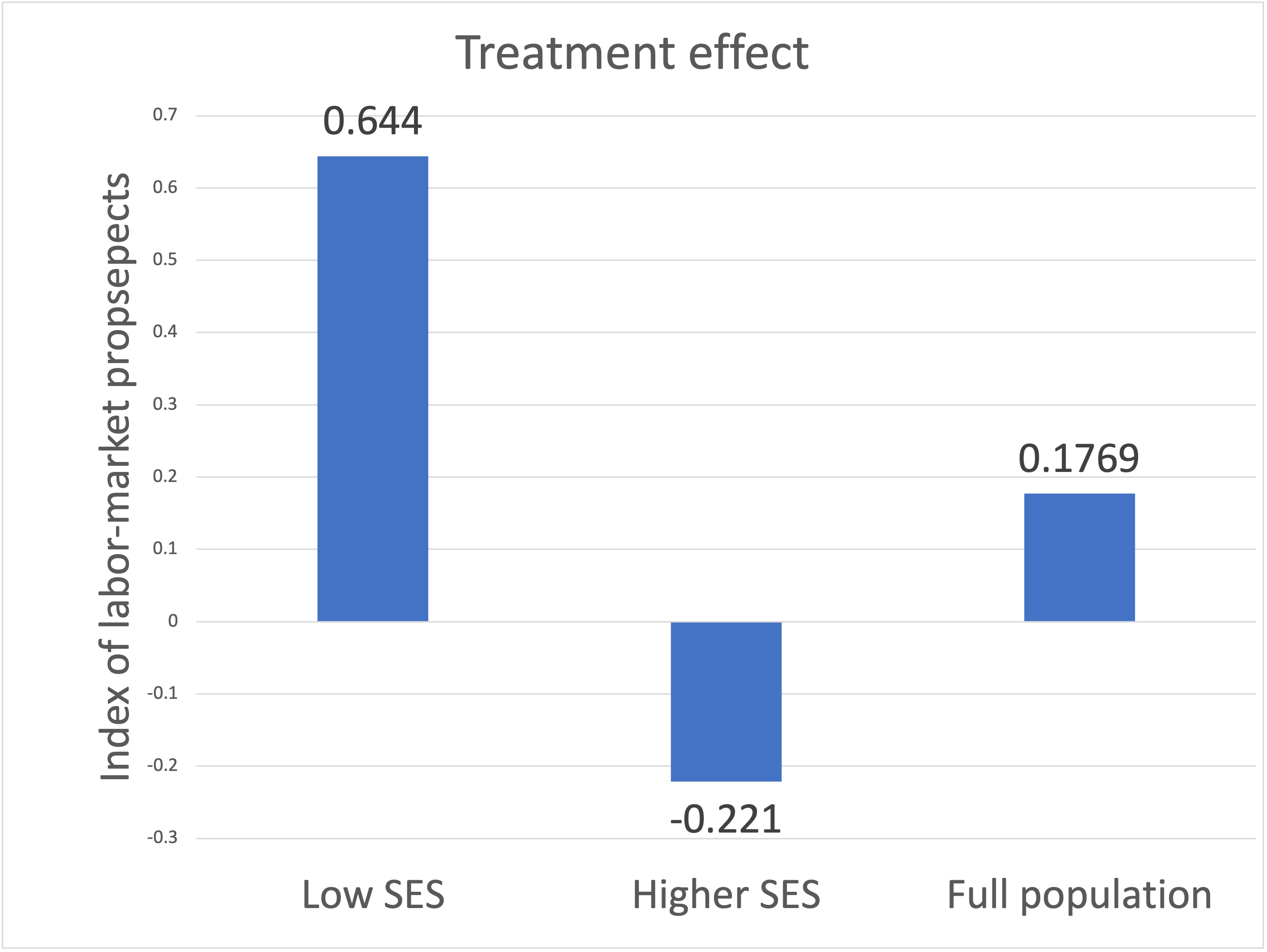}
    \includegraphics[width=0.46\linewidth]{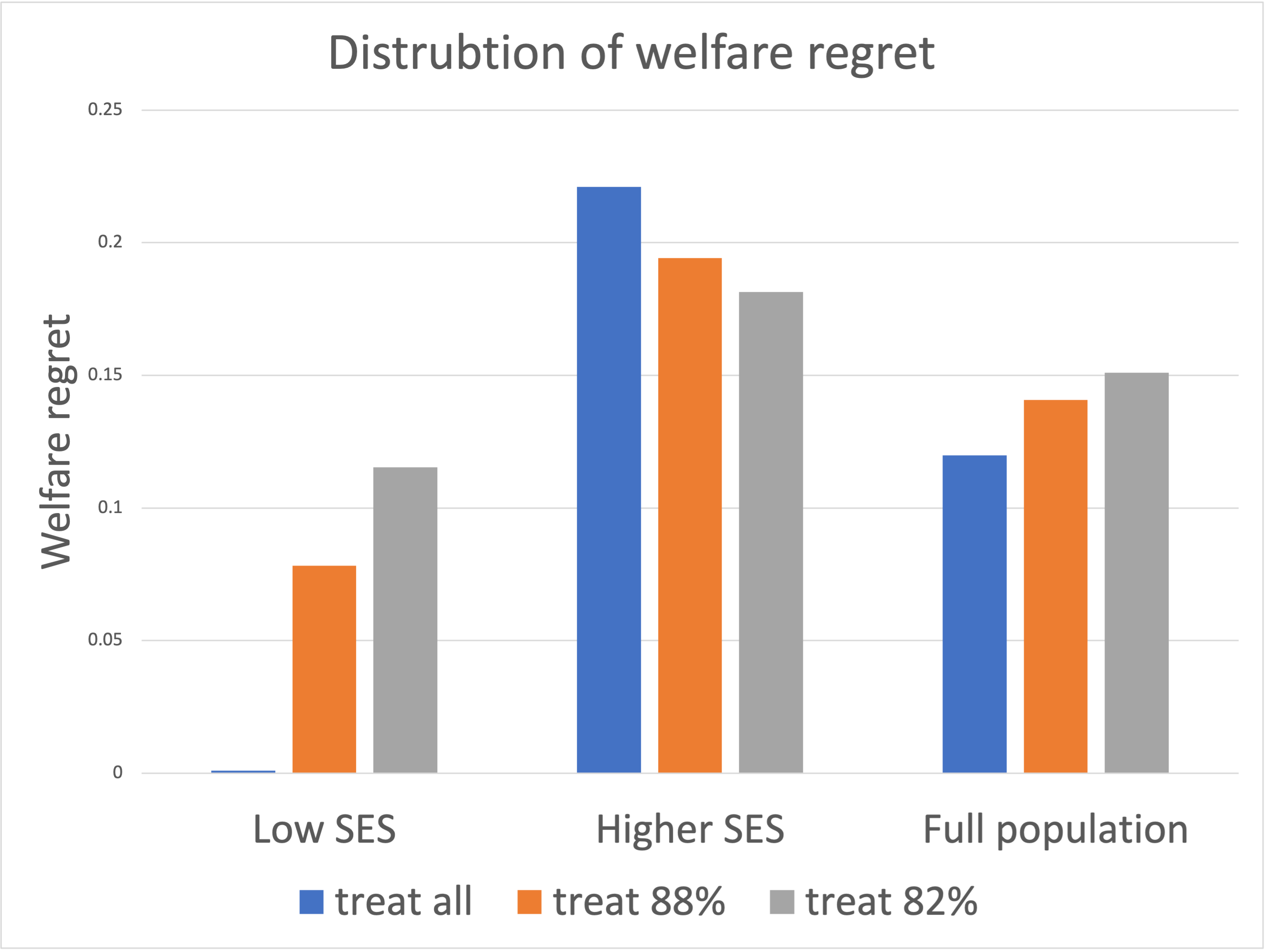}
    \caption{Left: Treatment effects taken from \citet[][Column 4, Table 2]{resnjanskij2024can}; Right: Our calculation of welfare regrets for three hypothetical decision rules (note for the rule that treats all, regret of the low SES group is zero).}
    \label{fig:mentor}
\end{figure}

Motivated by their heterogeneity analyses,  imagine a PM who decides what fraction of the population should be treated. However, suppose the PM cannot condition their decision on SES, because including it may be perceived as  illegal or unfair, or measuring it precisely at the decision making stage is too costly. This corresponds to a scenario in which $X$ refers to SES and there is no $W$. Since the full population average  treatment effect is positive, the common approach that aims to maximize the average welfare (e.g., \citealt{manski2004statistical,kitagawa2018should,AW20,MT17}) would inform the PM to treat all adolescents, even though higher SES adolescents lose from the program. Is the common approach still reasonable?  And if not, what alternative decision criterion can reflect the heterogeneous treatment responses along SES?

In this paper, we answer these questions by studying the problem of policy targeting for a regret averse PM. The PM aims to find an optimal allocation rule $\delta$ that maps subset characteristics $W$ to  $[0,1]$ indicating treatment fractions, with a loss function $L(\delta):=\mathbb{E}\left[Reg^{\alpha}(X,\delta)\right]$,
where $Reg(x,\delta)$ refers to the welfare regret of group $X=x$ when applied with $\delta$, i.e., the efficiency loss compared to its best achievable welfare,     $\alpha\geq1$ and $\mathbb{E}[\cdot]$ denotes expectation with respect to the marginal distribution of $X$. The case of $\alpha>1$ is what we call regret aversion, while $\alpha=1$ corresponds to the  common approach, i.e., minimizing mean welfare regret  (equivalent to  maximizing mean welfare).

We link the regret-averse loss to the PM's aversion to unequal distributions of regret among groups defined by covariates that are not allowed to use as input to the treatment rule.\footnote{\cite*{liang2021algorithm} provide microfoundation for preference types that yield a preference for fairness, which in turn can be interpreted as inequality aversion. \cite*{auerbach2024testing,liu2024inference} conduct statistical inference based on the theoretical characterization of the fairness-accuracy frontier in \cite{liang2021algorithm}.}  For example, viewing the index of labor market prospects as a proxy for welfare, the right panel of Figure \ref{fig:mentor} plots distributions of welfare regrets  for three hypothetical rules (treat all, treat 88\%, and treat 82\%).  Treating all benefits and induces no regret for low SES adolescents, but hurts and generates a high regret for higher SES adolescents.\footnote{As explained by \cite{resnjanskij2024can}, mentoring may actaully crowd out more useful inputs offered by higher SES families, such as parental attachment or participation in other useful activities.} The other two fractional rules enjoy a more equal distribution of regret between the two groups, although they also have higher mean regrets. If $\alpha = 1$, PM displays no aversion to inequality of regrets and evaluates rules solely based on the mean, indicating treating all is indeed optimal. However, if $\alpha>1$, the policymaker dislikes and penalizes rules with higher inequality. We formally quantify such aversion  via an Atkinson inequality index applied to regret (calculated in Table \ref{tab:mentor} according to \eqref{eq:inequality}).\footnote{\cite{atkinson1970measurement}'s measure of inequality is at an individual level rather than group level. We
may interpret $X$ in our setup as individuals in his framework.} The higher the value of  $\alpha$, the more the PM dislikes inequality. In fact, treating  88\% (82\%) of the population is optimal if $\alpha=2$ (3).  As one of the fundamental goals of sustainable development policy is to  ``reduce the inequalities (\ldots) that (\ldots) undermine the potential of individuals''\footnote{United Nations Sustainable Development Group, \url{https://unsdg.un.org/2030-agenda/universal-values/leave-no-one-behind}.}, and a wide literature in  program evaluation \citep*[e.g.,][]{angrist2012benefits,angrist2022marginal,gray2023long} documents significant subgroup treatment heterogeneity along gender, race and other costly or sensitive attributes, our approach develops a new perspective in the current policy learning literature. 

\begin{table}[htbp]
\centering
\caption{Regret and inequality of various allocation rules in \cite{resnjanskij2024can}}
\begin{tabular}{lcccccc}
\toprule
& \multicolumn{3}{c}{Regret} & \multicolumn{3}{c}{Atkinson inequality index} \\
\cmidrule(lr){2-4} \cmidrule(lr){5-7}
Allocation rule & Low SES & Higher SES & Mean & $\alpha=1$ & $\alpha=2$ & $\alpha=3$ \\
\midrule
Treat all  & 0 & 0.22 & 0.12 & 0 & 0.36 & 0.51 \\
Treat 88\% & 0.08 & 0.19 & 0.14 & 0 & 0.08 & 0.14 \\
Treat 82\% & 0.12 & 0.18 & 0.15 & 0 & 0.02 & 0.05 \\
\bottomrule
\end{tabular}\label{tab:mentor}
\end{table}

In general, both $W$ and $X$ can be continuous and multidimensional. We show the key insights persist:  If $\alpha=1$, the optimal rule is to treat everyone or no one in the same $W$ group,  depending on the  sign of the corresponding CATE($W$), even if significantly  heterogeneous treatment responses remains  beyond $W$. Whenever $\alpha>1$,  the optimal rule is fractional, if the treatment effect heterogeneity is severe enough to alter the sign of  CATE($X$) within the same $W$.\footnote{We stress that the mechanism for fractional rules is different from that in \cite{kitagawa2022treatment}, in which fractional rules arise due to sampling uncertainty. Fractional rules also show up with  partially identified welfare with \citep{Manski2000,manski2005social,manski2007identification, Manski2007} or without \citep{ stoye2012minimax,yata2021,manski2022identification,montielolea2023decision, kitagawa2023treatment} true knowledge of the identified set, with nonlinear welfare \citep{manski2007admissible,manski20092009}, when the decision maker targets a functional of the outcome distribution that is not quasi-convex \citep*{kock2022functional,kock2023treatment}, or when agents respond with strategic behavior \citep{munro2020learning}.}  This insight carries over analogously even with a capacity constraint.

The preceding example  ignores the statistical uncertainty in the estimation of heterogeneous treatment effects. Focusing on $\alpha=2$ for tractability, we  propose an empirical squared regret minimization approach with debiasing and cross fitting to properly account for the estimation uncertainty from training data. Our procedure accommodates a wide range of black-box machine learning methods for estimating the heterogeneous treatment effects. We develop new asymptotic upper and lower bounds for the excess risk of our proposal. 
In the case of a correctly specified linear sieve policy class, our procedure achieves a fast convergence rate of  $O(1/n)$  and is in fact asymptotically efficient. 
In terms of computation, our squared regret approach is attractive due to the weighted least squares structure of the objective function. It will be straightforward to accommodate policy classes with convex constraints,  compared to the mean regret approach (which often relies on maximum score type optimization).   


We further illustrate the value of our approach using two real datasets. For the National JTPA Study that measured the benefit and cost of employment and training programs, 
we consider a PM  who designs treatment policies based on  pre-program years of education and earnings only, even though a wider range of characteristics like gender, race and marital status are available in the data. For the International Stroke Trial data that assessed
the effect of aspirin treatment for patients with acute ischemic stroke, we considered a hypothetical scenario  in which a doctor determines whether a patient should be treated with aspirin based on their age only, even though the training data contains many covariates of the patients, including their  demographic and medical history. 
In both datasets, the estimated optimal policy fractions with our squared regret approach reveal considerable treatment effect heterogeneity  for some population subgroups, which can lead to significant regret inequality  should a singleton policy be applied instead. 
These exercises suggest that our squared regret approach reveals  additional important information that may not be assessed with the mean regret paradigm alone. 


The treatment choice literature has become an area of active research since the pioneering works of \cite{Manski2000, manski2002treatment,manski2004statistical} and \cite{Dehejia2005}. When the policy maker cannot differentiate individuals based on observable characteristics, finite and asymptotic results are developed by  \cite{schlag2006eleven}, \cite{stoye2009minimax},  \cite{HiranoPorter2009}, \cite{tetenov2012statistical}, \cite{masten2023minimax}, and \cite{chen2024note}.  \cite{manski2014quantile} and \cite*{guggenberger2024minimax}  in point-identified situations, and by  \cite{Manski2000,manski2005social,manski2007identification, Manski2007, manski20092009},  \cite{stoye2012minimax}, \cite{christensen2020}, \cite{ishihara2021}, \cite{yata2021}, \cite{manski2022identification}, \cite{ishihara2023bandwidth} and \cite{montielolea2023decision} in partially-identified settings. When the policy maker is able to condition on individual characteristics, studies on personalization and policy targeting include \cite{manski2004statistical}, \cite{BhattacharyaDupas2012}, \cite{kitagawa2018should, KT21}, \cite{MT17}, \cite{AW20}, \cite{sun2021empirical}, \cite{adjaho2022externally}, \cite{han2023optimal}, \cite{cui2023individualized},
\cite{terschuur2024locally}, \cite{viviano2024policy} and \cite{viviano2024fair}, among others, for analyses in different settings. 

Our squared regret criterion coincides with a quadratic surrogate criterion for a cost-sensitive classification problem to predict the sign of CATE($X$) with cost being squared CATE($X$). See \cite{Zhang_2004}, \cite*{Bartlett_et_al_2006}, and references therein. Note, however, that our approach fundamentally differs from classification with the quadratic surrogate since our approach takes the squared regret as the ultimate objective function to minimize rather than a surrogate for the binary classification loss. As a result, our analysis can allow constrained $W$-individualized rules without raising the inconsistency issue of the constrained classification studied in \cite*{KST21}.

The rest of the paper is organized as follows: Section \ref{sec:setup} sets the stage, motivates our regret-averse loss function via inequality aversion and  studies the population optimal allocation rule. Section \ref{sec:proposal} presents our main proposal. Section \ref{sec:stat} develops the statistical performance guarantee. Section \ref{sec:capacity} discusses the case with capacity constraint. Empirical applications are in Section \ref{sec:emp.app}. Additional proofs, lemmas and technical results  are reserved in the Appendix and Online Supplement.





\section{Setup}\label{sec:setup}


Consider a policymaker who has access to a random sample of size $n$:
$\ensuremath{Z^{n}:=\left\{ Z_i\right\} {}_{i=1}^{n}\in\mathcal{Z}^{n}}$,
where $Z_{i}:=\{X_{i},D_{i},Y_{i}\}$, $X_{i}\in\mathcal{X}\subseteq\mathbb{R}^{d_X}$ is
the observed pre-treatment characteristics (covariates) of unit $i$,
e.g., their gender, race, pre-treatment education level, etc., $D_{i}\in\left\{ 0,1\right\} $
is the binary treatment indicator ($D_{i}=1$ means unit $i$ is under
treatment and $D_{i}=0$ means under control), and $Y_{i}\in\mathbb{R}$
is the observed outcome of interest of unit $i$, generated as
\begin{equation}\label{eq:observed.outcome}
Y_{i}=D_{i}Y_{i}(1)+(1-D_{i})Y_{i}(0),  
\end{equation}
in which $Y_{i}(1),Y_i(0)\in\mathcal{Y}\subseteq\mathbb{R}$ are the
potential outcomes under treatment and control, respectively. Denote
by $P\in\mathcal{P}$ the joint distribution of $\left\{ X_{i},D_{i},Y_{i}(1),Y_{i}(0)\right\} $.
Then, the random sample $Z^{n}$ follows a joint distribution written as $P^{n}\in\mathcal{P}^n$, determined jointly by $P$, $n$ and \eqref{eq:observed.outcome}. In this paper, we maintain the following unconfoundeness and overlap assumptions:
\begin{assumption}\label{asm:unconfounded}
For each $i=1,...,n$, we have:
\begin{itemize}
\item[(i)] $Y_i(1),Y_i(0)\perp D_i\mid X_i$, i.e., $Y_i(1)$ and $Y_i(0)$ are independent
of $D_i$ conditional on $X_i$;

\item[(ii)] $\pi(x):=Pr\{D_i=1|X_i=x\}$ is bounded away from zero and one, i.e.,
$0<\underline{\pi}<\pi(x)<\bar{\pi}<1$ for some $\underline{\pi}$ and
$\bar{\pi}$, for all $x\in\mathcal{X}$.
\end{itemize}
\end{assumption}

The policymaker wishes to allocate a binary treatment $D\in\left\{ 0,1\right\} $
to a future population that shares the same marginal distribution
of $\left\{ X_{i},Y_{i}(1),Y_{i}(0)\right\} $ induced by $P$. For
each subpopulation group $X=x$ in which the covariate takes a specific
value $x\in\mathcal{X}$, write 
\begin{align*}
\gamma_{1}(x)  :=\mathbb{E}[Y(1)\mid X=x],\quad
\gamma_{0}(x)  :=\mathbb{E}[Y(0)\mid X=x],
\end{align*}
where $\mathbb{E}[\cdotp|\cdotp]$ denotes the conditional expectation
under $P$. Write
$\tau(x):=\gamma_{1}(x)-\gamma_{0}(x)
$
as the conditional average treatment effect (CATE) for subgroup $X=x$. Under Assumption \ref{asm:unconfounded}, the CATE $\tau(x)$ is point identified for
all $x\in\mathcal{X}$. 
We focus on the situation when --- at the decision-making
stage --- the set of covariates that the policymaker can actually
condition on, denoted by $W\in\mathcal{W}\subseteq\mathbb{R}^{d_W}$, is only a subset of and
not as rich as $X$, i.e., we may partition $X=\{W,X_1\}$ for some $X_1\in \mathbb{R}^{d_{X_1}}$. 
\begin{example}[Legal or fairness concern] A randomized control trial may collect sensitive characteristics (e.g., gender and race), while  the policymaker cannot differentiate treatment decisions  based on them due to legal or fairness concerns. For example, many countries have anti-discrimination laws  that prohibit  treating an individual differently because of their membership to a protected class. Calls for the removal of race in many clinical diagnoses are also growing,  see, e.g., \cite*{briggs2022healing,manski2022patient,manski2023using} and the debates therein. 
\end{example}


\begin{example}[Costly or manipulated variables at decision-making stage] In some scenarios, certain covariates are known to be important and are diligently recorded at the data-collecting stage. However, these variables could be costly to collect in practice and as a result, the decision maker does not observe these variables at the actual decision-making stage. For example, for patients in severe life-threatening
conditions such as sepsis, a physician must make a timely bedside intervention before lab measurements regarding key conditions of the patients can be returned \citep*{tan2022rise}. Moreover, some covariates may also be manipulated easily (e.g., they are not reported precisely) at the decision making stage, which makes them unsuitable to be included for treatment allocation. 
\end{example}


\begin{example}[Single-index rules and subgroup analyses] Even in the absence of legal, fairness, or cost concerns, policy makers may prefer simple and interpretable rules. For example, the decision maker may determine treatment eligibility based on a single scalar variable $W:=\varphi(X)$, a function of  the whole observed covariate $X$. 
See, e.g., \cite{kitagawa2018should,crippa2024regret} and references therein. Policy makers may also have particular pre-defined subgroups of interest for policy making, e.g., subgroups based on income or education brackets that are much coarser than observed income and education levels. These subgroups of interest may  be determined ex-ante in the pre-analysis plan prior to the data-collecting stage, or  determined ex-post by certain machine learning algorithms with data collected from earlier studies \citep{chernozhukov2018generic}.
\end{example}


\subsection{A regret averse planner's problem}

We start from the planner's problem without sample data $Z^{n}$.
Since the policymaker can only allocate policy decisions conditional
on $W$, we call their action plan a \emph{$W$-individualized decision
rule}, i.e., a mapping $
\delta:\mathcal{W}\rightarrow[0,1]$ from the support of $W$ to the unit interval. Here, $\delta(w)$
is the fraction of the subpopulation $W=w$ to be treated. For example,
$\delta(w)=0.5$ means half of the subpopulation
with $W=w$ will be treated, leaving the rest untreated. Although the policy
rule of the planner can only condition on $W$, treatment effect heterogeneity
may still vary at the more refined level $X$. For each group with $X=x$, let its corresponding  covariate $W$
take a value at $W=w$. Suppose that applying a generic $W$-individualized rule
$\delta$ to $X=x$ yields a linearly additive welfare for
the planner:
\begin{equation}
W(x,\delta,\gamma_{1},\gamma_{0}):=\gamma_{1}(x)\delta(w)+\gamma_{0}(x)(1-\delta(w)).\label{eq:welfare}
\end{equation}
Note the form of the welfare in (\ref{eq:welfare}) implies that the
optimal level of the welfare for $X=x$ is achieved by the
infeasible rule $\mathbf{1}\left\{ \tau(x)\geq0\right\} $. Then, for  $X=x$, define the regret of rule $\delta$ as the welfare
gap between $\delta$ and $\mathbf{1}\left\{ \tau(x)\geq0\right\}$:
\begin{align*}
Reg(x,\delta,\tau) & :=\max\left\{ \gamma_{1}(x),\gamma_{0}(x)\right\} -W(x,\delta,\gamma_{1},\gamma_{0})\\
 & =\tau(x)\left[\mathbf{1}\left\{ \tau(x)\geq0\right\} -\delta(w)\right].
\end{align*}
We consider a regret-averse policy maker who chooses an optimal $W$-individualized
policy rule by minimizing a nonlinear 
transformation of regret, a notion advocated by \cite{kitagawa2022treatment}
and axiomatized by \cite{hayashi2008regret,stoye2011axioms}. Specifically, the policy maker
aggregates regrets among different subpopulation groups via the average
nonlinear regret loss:
\begin{align*}
L(\delta,\tau)  :=\int Reg^{\alpha}(x,\delta,\tau)dF_{X}(x) :=\mathbb{E}\left[Reg^{\alpha}(X,\delta,\tau)\right],
\end{align*}
where $\alpha\geq1$ is the degree of regret aversion, $F_{X}(\cdotp)$
denotes the marginal distribution of $X$ induced by the population
distribution $P$, and $\mathbb{E}[\cdotp]$ is the expectation operator
under $P$. A nonlinear regret optimal decision rule $\delta^{*}$
then solves 
\begin{equation}\label{eq:population.optimal}
\min_{\delta:\mathcal{W}\rightarrow[0,1]}L(\delta,\tau).   
\end{equation}
The rule $\delta^{*}$ characterizes an optimal action plan (conditional
on $W$) for the planner  if $P$ were known. Since $P$ in fact is unknown and needs to be learned from 
data $\ensuremath{Z^{n}\in\mathcal{Z}^{n}}$,
the decision of the planner becomes statistical, i.e., selecting a
$W$-individualized \emph{statistical} decision rule:
\[
\hat{\delta}:\mathcal{Z}^{n}\times\mathcal{W}\rightarrow[0,1]
\]
that instructs an action for each subgroup $W=w$ given each possible
realization of data $Z^{n}=z^{n}\in\mathcal{Z}^{n}$. Denote by $\mathbb{E}_{P^{n}}[\cdotp]$
the expectation with respect to the randomness of 
$Z^{n}\sim P^{n}$. The planner's ultimate goal is to find a ``good''
rule $\hat{\delta}$ from the training data so that 
\begin{equation}\label{eq:goal.risk}
\sup_{P^{n}\in\mathcal{P}^{n}}\mathbb{E}_{P^{n}}\left[L(\hat{\delta},\tau)-L(\delta^{*},\tau)\right]
\end{equation}
is small and converges to zero at a fast rate (hopefully the
fastest) uniformly across a set of possible distributions $P^{n}\in\mathcal{P}^{n}$. 

\subsection{Regret aversion as inequality aversion}\label{sec:inequalty.aversion}

We argue that our regret aversion loss
$L(\delta,\tau)$ has baked in an aversion to regret  inequality
in the population.\footnote{Regret measures the welfare loss of a group compared to what \emph{could have achieved} in terms of its best potential. Thus, our $L(\delta,\tau)$ reflects the preference of a planner who cares about to what extent personalized policies are equally fulfilling each sub-population's potential.} 
More concretely, write $Reg_{\delta}(x):=Reg(x,\delta,\tau)$. Inspired
by the seminal work of \cite{atkinson1970measurement}, let 
\begin{equation}
I_{\alpha}(Reg_{\delta}):=\frac{\left\{ \mathbb{E}[Reg_{\delta}(X)^{\alpha}]\right\} ^{1/\alpha}}{\mathbb{E}[Reg_{\delta}(X)]}-1\label{eq:inequality}
\end{equation}
be the Atkinson inequality measure of the regret
distribution $Reg_{\delta}(\cdotp)$ in the population induced by
rule $\delta$.\footnote{We may take $I_\alpha(Reg_\delta)=0$ if $\mathbb{E}[Reg_{\delta}(X)]=0$ as a convention.} Call $\left\{ \mathbb{E}[Reg_{\delta}(X)^{\alpha}]\right\} ^{1/\alpha}$
as the \textit{equally distributed equivalent} level of regret ---
the level of regret, if equally distributed to each subgroup in $X$, would yield
the same level of the loss as the actual distribution $Reg_{\delta}(\cdotp)$.
As $\alpha\geq1$, $I_{\alpha}(Reg_{\delta})\geq0$. The index $I_{\alpha}(Reg_{\delta})$
then measures how much larger the \textit{equally
distributed equivalent} is compared to the actual mean of regret $\mathbb{E}[Reg_{\delta}(X)]$.
A larger value of $I_{\alpha}$ indicates a higher degree of regret inequality. In this context, the regret aversion coefficient
$\alpha$ can be alternatively interpreted as a degree of inequality
aversion of the policymaker. A larger value of $\alpha$ means the
planner is more averse to regret inequality  among the population.
From (\ref{eq:inequality}), we may rewrite our nonlinear regret loss as
\begin{align*}
\mathbb{E}[Reg_{\delta}(X)^{\alpha}] & =\left[\underset{\text{mean regret}}{\underbrace{\mathbb{E}[Reg_{\delta}(X)]}}\left(1+\underset{\text{penalty for regret inequality}}{\underbrace{I_{\alpha}(Reg_{\delta})}}\right)\right]^{\alpha}.
\end{align*}
The mean regret paradigm corresponds
$\alpha=1$: $I_{1}(Reg_{\delta})=0$ for all distributions
of regret, meaning the policy maker displays no aversion to regret inequality
and ranks each distribution of regret only by their mean. If $\alpha>1$,
the planner is averse to regret inequality and penalizes rules that
lead to large inequality among the population. 
\begin{figure}
\caption{Equally distributed equivalent of regret in an illustrative example}
\begin{center}
\includegraphics[scale=0.2]{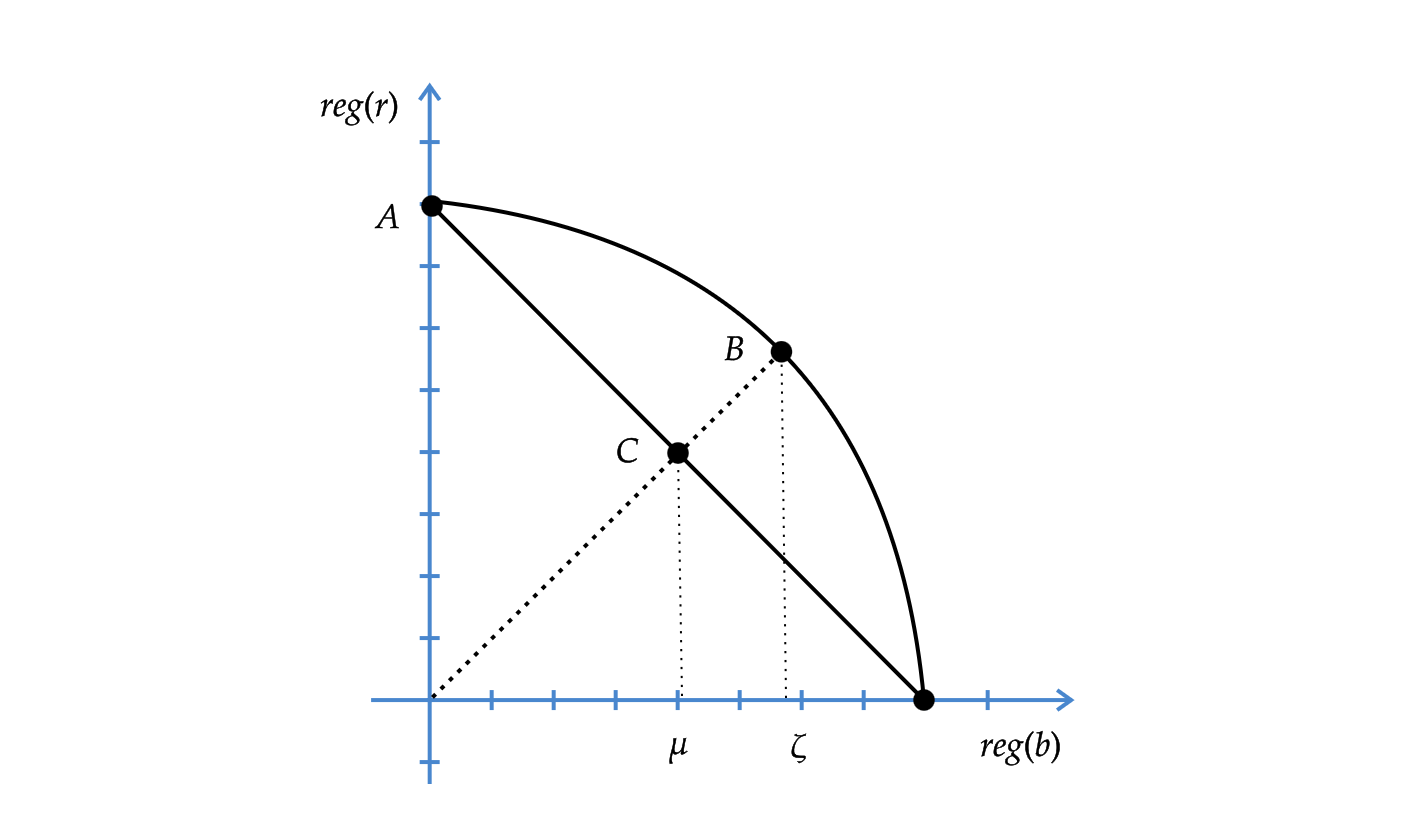}\label{fig:atkinson.inequality}
\end{center}
{\raggedright\footnotesize \textit{Notes}: The distribution of the regret for rule $\delta=1$ is represented at point $A$. Its equally distributed equivalent $\zeta$ can be found as the $x$-coordinate of point $B$, where the dotted $45^\circ$   line intersects the curved isoquant that shows the same level of the loss as point $A$. The actual mean of
the regret corresponds to the $x$-coordinate of point $C$, where the dotted line intersects the solid black line perpendicular to it through point $A$. Then, the inequality
index of $\delta=1$ is  $\frac{\zeta}{\mu}-1$.}

\end{figure}

\begin{example}\label{ex:simple}
Suppose $X=\left\{ b,r\right\} $ is a binary group identity (blue
or red) with equal shares in the population, with $CATE(b)=\tau_{b}>0$ and $CATE(r)=\tau_{r}<0$ ($0<\left|\tau_{r}\right|<\tau_{b}$). 
The policymaker cannot differentiate the two groups and can
only make a single treatment decision $\delta\in[0,1]$ to be applied to
the whole population. See Figure \ref{fig:atkinson.inequality} for an illustration of  
the equally distributed equivalent of the regret for rule $\delta=1$ (which benefits  group $b$ but hurts group $r$). 

\end{example}

\subsection{Population analysis}\label{sec:population}
We say a $W$-individualized
rule $\delta$ is a singleton rule if for almost all $w\in\mathcal{W}$,
it holds $\delta(w)\in\left\{ 0,1\right\} $; otherwise, we say $\delta$
is fractional. 
With a slight modification of notation, we write
$\tau(w):=\mathbb{E}[Y(1)-Y(0)\mid W=w]$. 
\begin{prop}
\label{prop:population}Consider the population optimal rule $\delta^{*}$
that solves (\ref{eq:population.optimal}).
\begin{itemize}
    \item[(i)] If $\alpha>1$, $\delta^{*}$ satisfies 
\[
\mathbb{E}\left[\tau(X)\left\{ \tau(X)\left[\mathbf{1}\left\{ \tau(X)\geq0\right\} -\delta^{*}(W)\right]\right\} ^{\alpha-1}\mid W=w\right]=0,
\]
for almost all $w\in\mathcal{W}$, and is fractional unless 
\[\min\left\{ Pr\{\tau(X)>0|W=w\},Pr\{\tau(X)<0|W=w\}\right\} =0\]
for almost all $w\in\mathcal{W}$;
 \item[(ii)] 
If $\alpha=1$,  then $\delta^{*}(w)=1$ if $\tau(w)>0$, $\delta^{*}(w)=0$ if $\tau(w)<0$, and $\delta^{*}(w)\in[0,1]$ if $\tau(w)=0$,
for almost all \textup{$w\in\mathcal{W}$.}
\end{itemize}
\end{prop}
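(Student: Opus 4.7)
The plan is to minimize $L(\delta,\tau)=\mathbb{E}\bigl[\mathbb{E}[Reg^{\alpha}(X,\delta,\tau)\mid W]\bigr]$ pointwise in $w$, exploiting that $\delta(W)$ is $W$-measurable and that $Reg(X,\delta,\tau)\geq0$ so $Reg^{\alpha}$ is well defined. For each fixed $w$, define
\[
f_{w}(d):=\mathbb{E}\!\left[\bigl(\tau(X)\,[\mathbf{1}\{\tau(X)\geq 0\}-d]\bigr)^{\alpha}\,\Big|\,W=w\right],\qquad d\in[0,1],
\]
and observe that the integrand splits as $\tau(X)^{\alpha}(1-d)^{\alpha}$ on $\{\tau(X)\geq 0\}$ and $(-\tau(X))^{\alpha}d^{\alpha}$ on $\{\tau(X)<0\}$. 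Each piece is convex (strictly convex for $\alpha>1$) in $d\in[0,1]$, so $f_{w}$ is convex on $[0,1]$ and the program $\min_{\delta(w)\in[0,1]}f_{w}(\delta(w))$ characterizes $\delta^{*}(w)$.

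For part (i), I would differentiate $f_w$ under the conditional expectation. The derivative of the integrand in $d$ is $-\alpha\,\tau(X)\bigl(\tau(X)[\mathbf{1}\{\tau(X)\geq 0\}-d]\bigr)^{\alpha-1}$, which is dominated by an integrable envelope because $|\tau(X)[\mathbf{1}\{\tau(X)\geq 0\}-d]|\leq|\tau(X)|$, permitting dominated convergence (this step deserves a brief verification appealing to the boundedness/moment conditions on $\tau$ one would naturally impose, or alternatively one can argue directly via the monotone difference quotients guaranteed by convexity). Setting $f_{w}'(\delta^{*}(w))=0$ yields exactly the stated first-order condition. To rule out boundary solutions, I would evaluate
\[
f_{w}'(0)=-\alpha\,\mathbb{E}\!\left[\tau(X)^{\alpha}\mathbf{1}\{\tau(X)\geq 0\}\mid W=w\right],\quad f_{w}'(1)=\alpha\,\mathbb{E}\!\left[(-\tau(X))^{\alpha}\mathbf{1}\{\tau(X)<0\}\mid W=w\right],
\]
which satisfy $f_{w}'(0)\leq 0\leq f_{w}'(1)$. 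Strict inequalities hold precisely when $\Pr\{\tau(X)>0\mid W=w\}>0$ and $\Pr\{\tau(X)<0\mid W=w\}>0$; combined with strict convexity of $f_{w}$ (for $\alpha>1$, whenever $\tau(X)\neq 0$ with positive conditional probability), this forces $\delta^{*}(w)\in(0,1)$. Conversely, if one of the conditional probabilities is zero, then $f_{w}$ becomes (a multiple of) $(1-d)^{\alpha}$ or $d^{\alpha}$, whose minimizer on $[0,1]$ is $1$ or $0$, respectively; this pins down the singleton case.

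For part (ii), with $\alpha=1$ the loss is linear in $\delta(W)$. By the law of iterated expectations,
\[
L(\delta,\tau)=\mathbb{E}\!\left[\tau(X)\mathbf{1}\{\tau(X)\geq 0\}\right]-\mathbb{E}\!\left[\delta(W)\,\tau(W)\right],
\]
so the planner maximizes $\mathbb{E}[\delta(W)\tau(W)]$ over $\delta(w)\in[0,1]$. The maximizer is $\delta^{*}(w)=1$ where $\tau(w)>0$, $\delta^{*}(w)=0$ where $\tau(w)<0$, and arbitrary on $[0,1]$ where $\tau(w)=0$, which is the claim.

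The only delicate step I foresee is the formal interchange of differentiation and conditional expectation when $1<\alpha<2$, where the integrand's derivative is only Hölder in $d$; this can be handled either by a dominated-convergence argument using the bound $|\tau(X)|^{\alpha}$ or, more cleanly, by working directly with the convex finite difference $(f_{w}(d+h)-f_{w}(d))/h$ and invoking monotone convergence to obtain one-sided derivatives, which is sufficient for characterizing the minimizer of a convex function on a compact interval. Everything else is routine algebra on the two regions $\{\tau(X)\geq 0\}$ and $\{\tau(X)<0\}$.
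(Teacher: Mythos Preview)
Your proposal is correct and follows essentially the same route as the paper: reduce to pointwise minimization in $w$, exploit convexity of $f_w$ on $[0,1]$ when $\alpha>1$, derive the first-order condition, and evaluate the one-sided derivatives at the endpoints to conclude that the minimizer is interior precisely when both $\Pr\{\tau(X)>0\mid W=w\}$ and $\Pr\{\tau(X)<0\mid W=w\}$ are positive; for $\alpha=1$, linearize via iterated expectations. Your treatment is in fact slightly more explicit than the paper's in flagging the differentiation-under-the-expectation step and in writing out $f_w'(0)$ and $f_w'(1)$, but the substance and structure are the same.
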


Proposition \ref{prop:population} shows that a regret-averse planner
concerned about regret inequality will
often prefer a fractional $W$-individualized rule.  They would prefer a singleton
rule if, for almost all $w\in\mathcal{W}$, CATE($x$) shares the same sign for all $x$ with  the same value of $w$, which also nests the case when $W=X$.
Our results offer a novel justification of implementing fractional
rules at the population level: Treatment effect heterogeneity at the $X$ level plus a concern for regret inequality induces a planner
to ``diversify'' their treatment allocation. We illustrate the optimal rule of Example \ref{ex:simple} in Figure \ref{fig:ex.2}.
\begin{figure}[h!]
\caption{Optimal treatment rule in Example \ref{ex:simple}}
\begin{center}
\includegraphics[scale=0.27]{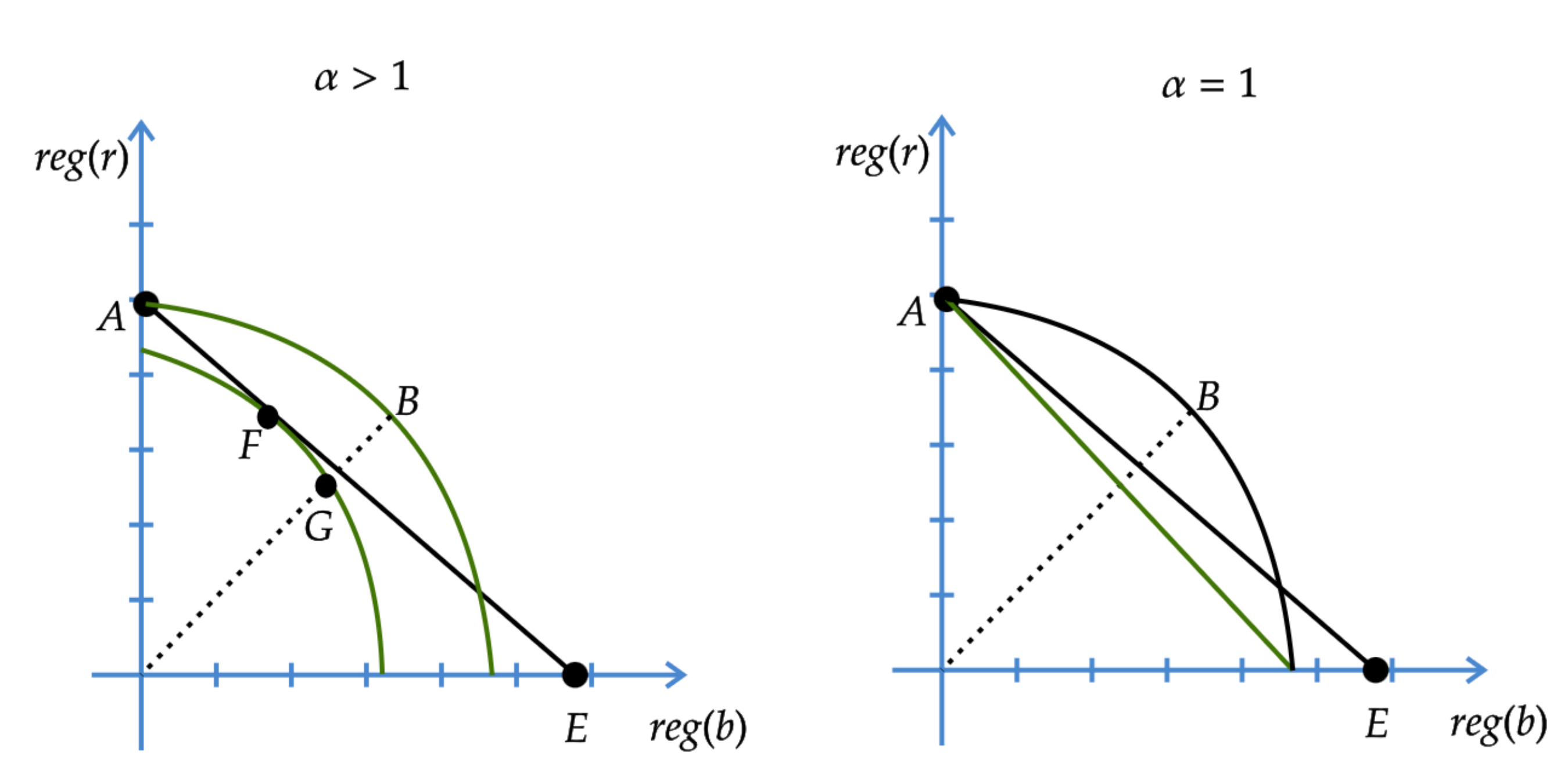}\label{fig:ex.2} 
\end{center}

{\raggedright\footnotesize
\textit{Notes}: Line AE, viewed as a budget line, collects all the feasible allocations
of regrets between groups $b$ and $r$ with a decision $\delta\in[0,1]$.
Point $A$ represents $\delta=1$, under which the regret of 
group $b$ is zero while the regret for group $r$ is $-\tau_{r}$.
Point $E$ is the regret distribution for rule $\delta=0$, in which the regret for group $r$ is zero but now group $b$ incurs
a regret of $\tau_{b}$. Every interior point of Line AE represents
a regret allocation of a fractional rule between the two groups. Since
$-\tau_{r}<\tau_{b}$, line AE tilts more heavily toward the horizontal
line. The green curves are the isoquants, each of them showing all
the allocations giving the same level of the loss function. The planner's
goal is to search for a point on Line AE that yields the smallest
loss. As long as $\alpha>1$, the isoquants will be strictly concave,
yielding an interior solution, i.e., point F in Figure \ref{fig:ex.2} (left), and
its equally distributed equivalent can be found as via point $G$. However, if $\alpha=1$,
the isoquants become linear, yielding a corner solution, i.e., point
A in Figure \ref{fig:ex.2} (right).}
\end{figure}


When $\alpha=2$, $L(\delta,\tau)$ becomes the average squared
regret, and the planner's problem becomes a weighted least squares
problem:
\[
\min_{\delta:\mathcal{W}\rightarrow[0,1]}\mathbb{E}\left\{ \tau^{2}(X)\left[\mathbf{1}\left\{ \tau(X)\geq0\right\} -\delta(W)\right]^{2}\right\} .
\]
Moreover, the associated
optimal rule also has an explicit form
\begin{equation}\label{eq:squared.regret.optimal}
\delta^{*}(w)=\frac{E\left[\tau^{2}(X)\mathbf{1}\{\tau(X)\geq0\}\mid W=w\right]}{E\left[\tau^{2}(X)\mid W=w\right]},   
\end{equation}
whenever $\mathbb{E}\left[\tau^{2}(X)\mid W=w\right]\neq0$.\footnote{If $\mathbb{E}\left[\tau^{2}(X)\mid W=w\right]=0$ for some $w\in\mathbb{R}^{d_W}$, then $\delta^*(w)\in[0,1]$, i.e., any action in $[0,1]$ is optimal for those $w$ values. Our theory accommodates this ``non-uniqueness'' situation to some extent. See Assumptions \ref{asm:reg} and \ref{asm:margin} and discussions therein. }
\eqref{eq:squared.regret.optimal} shows that the fractional nature of the optimal rule $\delta^*$ depends not only on the sign and magnitude of the average treatment effect $\tau(x)$, but also the conditional distribution of $X$ given $W$. The optimal treatment assignment would be more fractional (i.e., closer to 0.5) if the values of $\int_{\tau(x)>0}\tau^{2}(x)dF_{X|W}(x\mid w)$ and $\int_{\tau(x)<0}\tau^{2}(x)dF_{X|W}(x\mid w)$ are closer to each other.

\begin{rem}
\cite{atkinson1970measurement}'s original proposal concerns inequality of \emph{welfare levels}. In our context, that corresponds to picking a concave transformation $U(\cdotp)$
and solving:
\begin{equation}\label{eq:concave.welfare}
\max_{\delta:\mathcal{W}\rightarrow[0,1]}\int U\left[W(x,\delta,\gamma_{1},\gamma_{0})\right]dF_{X}(x).
\end{equation}
We adapt
\cite{atkinson1970measurement}'s framework to focus on  inequality of \emph{regret}. It would be easy to construct examples in which low inequality of welfare levels imply high inequality of regret, and vice versa.  Given regret measures how far away each group is compared to their optimal welfare level, we think our approach could be  suitable when the policy maker mainly cares about   supporting each group to their full potential and not considerably hurting any group.
%
\end{rem}

\begin{rem}
One possibility of incorporating regret aversion is through an alternative loss function $\tilde{L}(\delta, \tau):=\{\mathbb{E}\left[Reg(X,\delta,\tau)\right]\}^{\alpha}$ for the same $\alpha\geq 1$. That is, the planner first aggregates subgroup level regret $Reg(X,\delta,\tau)$ and then takes a nonlinear transformation of the aggregate average regret ${\mathbb{E}\left[Reg(X,\delta,\tau)\right]}$.\footnote{C.f. \cite{manski2016sufficient} for discussions on achieving an optimality criterion for each observed covariate group or within the overall population only.} However, in this case, minimizing $\tilde{L}$ for any $\alpha\geq1$  is the same as minimizing  $\mathbb{E}\left[Reg(X,\delta,\tau)\right]$, i.e., the case of $\alpha=1$ for our loss $L$. Such a planner also does not display aversion to regret inequality and only ranks rules according to their group-wise average regret.  One may also twist our loss function by redefining the regret for each subgroup $X=x$ as:
\begin{align*}
\widetilde{Reg}(x,\delta,\tau):= \tau(w)\left[\mathbf{1}\left\{ \tau(w)\geq0\right\} -\delta(w)\right],
\end{align*}
leading to an alternative loss 
$\widetilde{L}(\delta,\tau):=\mathbb{E}\left[\widetilde{Reg}^{\alpha}(X,\delta,\tau)\right]$.
That is, the regret of each subgroup $X=x$ is evaluated according to the welfare gap of its corresponding coarser $W=w$ group compared with the best welfare for the same coarser $W=w$ group. However, a planner with a loss $\widetilde{L}(\delta,\tau)$ is not concerned about regret inequality within the $W$ group.\footnote{For instance, in Example \ref{ex:simple}, as  $0<-\tau_r<\tau_b$, the overall average treatment effect is positive. Hence, according to $\widetilde{L}$, rule $\delta=1$ would yield a regret of $0$ for both $r$ and $b$ groups --- meaning there is no inequality between the two groups. However, we know $\delta=1$ actually hurts group $r$ dramatically as $CATE(r)<0$.}
\end{rem}

\begin{rem}
If $\alpha>1$ but the action space of the planner is restricted, i.e, $\delta(w)\in\left\{ 1,0\right\}$  for all $w\in\mathcal{W}$, the optimal rule $\delta^{*}$ would still be different from that of $\alpha=1$.\footnote{Moreover, to what extent one shall view the action space  as ``restricted'' is subject to the interpretation of the researcher. Even when a planner cannot take a fractional treatment allocation \emph{per se}, considering an extended action space $[0,1]$ is still valuable, as $\delta(w)\in[0,1]$ may be interpreted as a probabilistic recommendation, instead of an actual allocation of treatment. } For example, when $\alpha=2$, the average squared regret for action $1$, conditioning on  $W=w$,  is $
\mathbb{E}\left[\tau^{2}(X)\left(\mathbf{1}\left\{ \tau(X)\geq0\right\} -1\right)^{2}\mid W=w\right]$,
while that of  action $0$ is 
$\mathbb{E}\left[\tau^{2}(X)\left(\mathbf{1}\left\{ \tau(X)\geq0\right\} \right)^{2}\mid W=w\right]$.
Thus, the optimal restricted $W$-individualized rule is 
$
\delta^{*}_{\text{restricted}}(w)=\mathbf{1}\left\{ \mathbb{E}\left[\tau^{2}(X)\text{sgn}(\tau(X))\mid W=w\right]\geq0\right\}$.
\end{rem}




\section{Main proposal}\label{sec:proposal}

From now on and for the rest of
the paper, we focus on $\alpha=2$ for tractability. Other
values of $\alpha>1$ may  be analyzed analogously  with
more technicalities. Our
 proposal for learning a good rule $\hat{\delta}$
from training data involves two steps. The first step is
the efficient estimation of the loss function for each fixed $\delta$, i.e., 
\begin{equation}
L(\delta,\tau)=\mathbb{E}\left\{ \tau^{2}(X)\left[\mathbf{1}\left\{ \tau(X)\geq0\right\} -\delta(W)\right]^{2}\right\}.\label{eq:loss.proposal}
\end{equation}
Once
$L(\delta,\tau)$ is efficiently estimated from data, 
the second step is to minimize the estimated loss 
among a class of $W-$individualized rules that must be specified
by the researcher.
\begin{figure}[http]
\centering
\begin{minipage}{0.43\textwidth}
\centering
\includegraphics[width=0.9\textwidth]{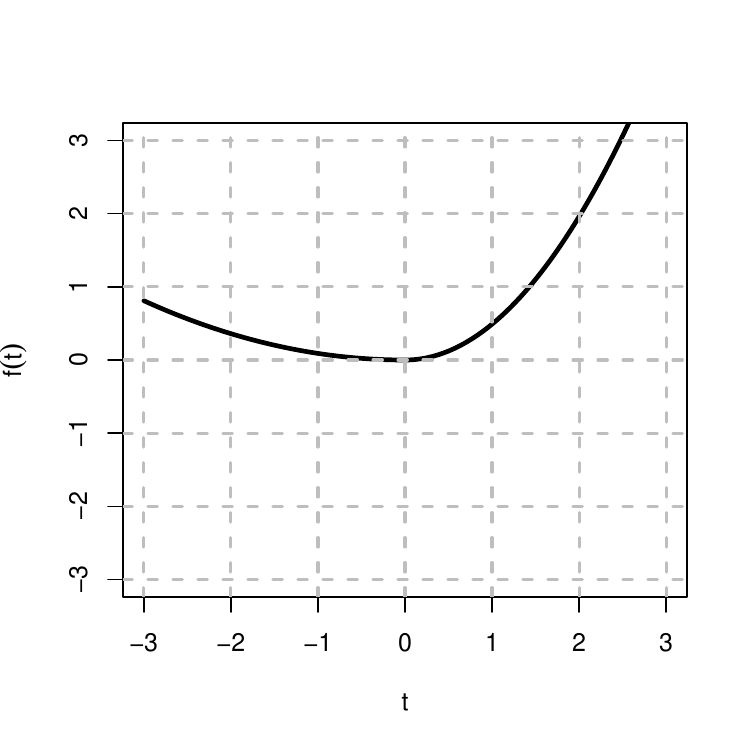} 
\text{$f(t)=t^{2}\left(\mathbf{1}\left\{ t\geq0\right\} -\delta\right)^{2}$}
\end{minipage}
\begin{minipage}{0.43\textwidth}
\centering
\includegraphics[width=0.9\textwidth]{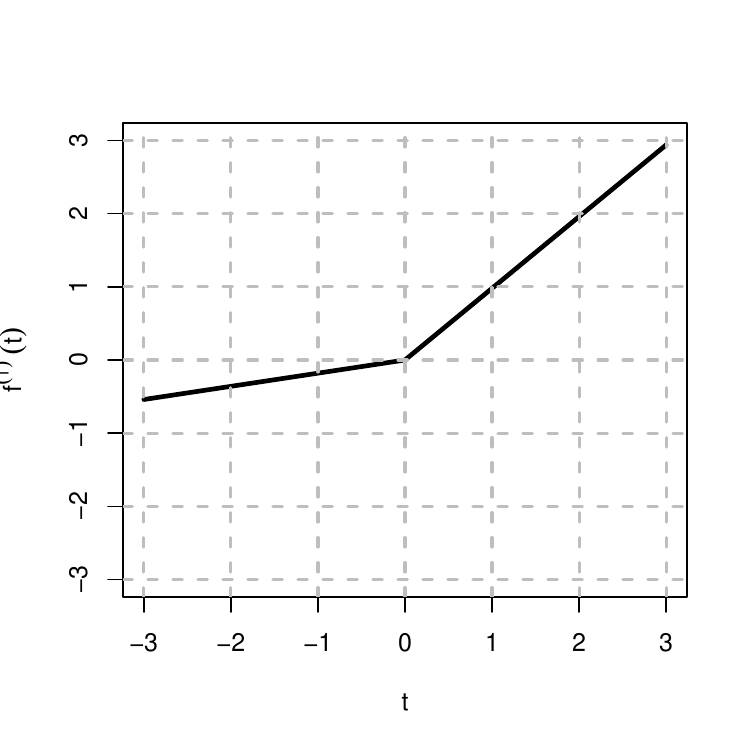} 
        \text{$f^{(1)}(t)=2t\left(\mathbf{1}\left\{ t\geq0\right\} -\delta\right)^{2}$}
    \end{minipage}
    \caption{Average squared regret functional is continuously differentiable}\label{fig:diff}
    \label{fig:enter-label}
\end{figure}

To allow for a wide range of ML algorithms in the first step and to potentially improve the statistical qualities of the second step, we consider
``debiasing'' (in a sense we make precise below) the loss function together with cross-fitting.
\footnote{See Remark \ref{rem:plug.in} for discussions on the connections with more direct ``plug-in'' approaches that do not involve debiasing.} Note although the nuisance function $\tau$ appears inside the indicator function, the loss $L(\delta, \tau)$ is still continuously differentiable in $\tau$ (though not twice continuously differentiable). This is due to the squaring of the term $[\mathbf{1}\{\tau(X) \geq 0\} - \delta(W)]$, which smooths out the discontinuity (See Figure \ref{fig:diff}).  Furthermore, we may view $L(\delta,\tau)$ as a finite dimensional parameter $\theta_0\in\mathbb{R}$
in a moment condition 
\begin{equation}\label{eq:moment.loss}
\mathbb{E}\left[m(Z,\theta_{0},\tau)\right]=0,\quad \text{where } m(Z,\theta,\tau):=\tau(X)^{2}\left(\mathbf{1}\{\tau(X)\geq0\}-\delta(W)\right)^{2}-\theta.    
\end{equation}
Therefore, following \citet[][Proposition 4]{newey1994asymptotic}, we can still derive the efficient influence function for any regular and asymptotically linear estimator of $L(\delta,\tau)$.  Let $\eta_{0}:=(\gamma_{1},\gamma_{0},\omega_{1},\omega_{0})$, where
 $\omega_{1}(x):=\frac{2\left(\gamma_{1}(x)-\gamma_{0}(x)\right)}{\pi(x)}$,
$\omega_{0}(x):=\frac{2\left(\gamma_{1}(x)-\gamma_{0}(x)\right)}{1-\pi(x)}$. Write
\[\xi(Z,\eta_{0}):=\left[\gamma_{1}(X)-\gamma_{0}(X)\right]^{2}+\left[D\omega_{1}(X)(Y-\gamma_{1}(X))-\left(1-D\right)\omega_{0}(X)(Y-\gamma_{0}(X))\right].\]

\begin{prop}\label{prop:debias}

Suppose Assumption \ref{asm:unconfounded} holds. For each $\delta$, the efficient influence function for any regular and asymptotically linear estimator of $\theta_0:=L(\delta,\tau)$ is 
\[
\psi(Z)=\xi(Z,\eta_{0})\left(\mathbf{1}\{\gamma_{1}(X)-\gamma_{0}(X)\geq0\}-\delta(W)\right)^{2}-\theta_0
\]
\end{prop}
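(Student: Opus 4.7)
The plan is to invoke the pathwise derivative approach of \citet[Proposition 4]{newey1994asymptotic}, viewing $\theta_0 = L(\delta,\tau)$ as a scalar parameter implicitly defined by the moment condition \eqref{eq:moment.loss}. The efficient influence function will take the form $\psi(Z) = m(Z,\theta_0,\tau) + \phi(Z)$, where $\phi(Z)$ is the Gateaux correction that adjusts for the first-step estimation of the nuisance $\tau = \gamma_1 - \gamma_0$. The central technical input is the observation already illustrated in Figure \ref{fig:diff}: although $\mathbf{1}\{\tau(X) \geq 0\}$ is discontinuous in $\tau$, the map $t \mapsto g(t) := t^{2}(\mathbf{1}\{t\geq 0\}-\delta(w))^{2}$ is continuously differentiable with derivative $g'(t) = 2t(\mathbf{1}\{t\geq 0\}-\delta(w))^{2}$, because the $t^{2}$ factor annihilates the jump of the indicator at the origin. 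This smoothness is what enables a standard chain-rule computation.

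First, I would consider a regular parametric submodel $\{P_s\}$ passing through $P$ at $s=0$ with score $\ell(Z)$, and let $\theta_s := \mathbb{E}_s[g(\tau_s(X))]$ with $\tau_s(X) = \gamma_{1,s}(X) - \gamma_{0,s}(X)$ and $\gamma_{j,s}(X) = \mathbb{E}_s[Y\mid D=j,X]$. Differentiating at $s=0$ via the product rule yields two pieces,
\begin{equation*}
\dot{\theta} \;=\; \mathbb{E}\!\left[g(\tau(X))\,\ell(Z)\right] \;+\; \mathbb{E}\!\left[g'(\tau(X))\,\dot{\tau}(X)\right],
\end{equation*}
where the first term captures the change in the law of $X$ and the second, legitimized by the $C^{1}$ argument above, captures the change in $\tau$. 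I would then compute $\dot{\gamma}_j(X) = \mathbb{E}[(Y-\gamma_j(X))\ell(Z)\mid D=j,X]$, a standard pathwise derivative for a conditional mean, and use Assumption \ref{asm:unconfounded} together with overlap to rewrite
\begin{equation*}
\mathbb{E}[g'(\tau(X))\dot{\gamma}_1(X)] = \mathbb{E}\!\left[\tfrac{D\,g'(\tau(X))(Y-\gamma_1(X))}{\pi(X)}\ell(Z)\right],
\end{equation*}
and analogously for $\dot{\gamma}_0$ with the weight $\tfrac{1-D}{1-\pi(X)}$.

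Combining the two contributions and substituting the definitions of $\omega_1$ and $\omega_0$, the total derivative factors as $\dot{\theta} = \mathbb{E}[\psi(Z)\ell(Z)]$ with $\psi$ equal to the claimed expression minus $\theta_0$ (the subtraction is cosmetic since $\mathbb{E}[\ell]=0$, but it is also seen directly: $\mathbb{E}[\xi(Z,\eta_0)(\mathbf{1}\{\tau(X)\geq 0\}-\delta(W))^{2}] = \theta_0$ because the score-like pieces in $\xi$ have conditional mean zero given $X$ under Assumption \ref{asm:unconfounded}). This identifies $\psi$ as \emph{an} influence function for $\theta_0$. For efficiency, I would then note that Assumption \ref{asm:unconfounded} places no restriction on the observed-data distribution of $(X,D,Y)$; hence the tangent space at $P$ is all of $L_2^0(P)$, and any square-integrable mean-zero influence function is automatically the efficient one.

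The main obstacle is the non-smoothness issue: without the $t^{2}$ factor, the indicator would generate a Dirac-type contribution at $\{\tau(X)=0\}$, and the derivative exchange step would require extra regularity on the density of $\tau(X)$. The key step is therefore to verify carefully that $g'$ is continuous on all of $\mathbb{R}$ (including at $t=0$, where both one-sided limits vanish), which legitimizes passing the derivative under the expectation and yields $g'(\tau(X))$ rather than a boundary term. Once this is justified, the rest of the argument is a routine application of semiparametric efficiency calculations under unconfoundedness in the spirit of \citet{newey1994asymptotic}.
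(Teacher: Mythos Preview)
Your proposal is correct and follows essentially the same approach as the paper: both invoke the pathwise derivative machinery of \citet[Proposition~4]{newey1994asymptotic}, both hinge on the observation that $t\mapsto t^{2}(\mathbf{1}\{t\geq 0\}-\delta)^{2}$ is $C^{1}$ so that the Gateaux derivative is the linear functional with kernel $2\tau(X)(\mathbf{1}\{\tau(X)\geq 0\}-\delta(W))^{2}$, and both then use the inverse-propensity representation of this kernel to obtain the correction term. The only cosmetic difference is organizational: the paper follows the two-step template of Newey's Proposition~4 (linearization of the functional in $\gamma_{1},\gamma_{0}$, then verification of the integral-form representation $\mathbb{E}[D_{j}(Z,\tilde{\gamma}_{j})]=\mathbb{E}[\varsigma_{j}(D,X)\tilde{\gamma}_{j}(X)]$), whereas you work directly with a parametric submodel and add the explicit tangent-space argument for efficiency.
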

As $Var(\psi(Z))$ defines the semiparametric efficiency bound for estimating $L(\tau,\delta)$, we think it makes sense to exploit the structure of $\psi(Z)$ and define our modified loss function as:\footnote{Moreover, with the margin condition in Assumption \ref{asm:margin} and other regularity conditions, $L^o(\delta,\eta_0)$ can also be verified to  satisfy the Neyman orthogonal condition (c.f., \citealt{chernozhukov2018double} and references therein).}
\[
L^{o}(\delta,\eta_{0}):=\mathbb{E}\left[\xi(Z,\eta_{0})\left(\mathbf{1}\{\gamma_{1}(X)-\gamma_{0}(X)\geq0\}-\delta(W)\right)^{2}\right].
\] 

Our theory does not restrict how the additional  nuisance  functions, $\omega_{1}$ and $\omega_{0}$, should be estimated. However, we note
they feature the following \emph{balancing} property (see, e.g., \citealt*{hainmueller2012entropy,zubizarreta2015stable,athey2018approximate}): for  all $g(\cdotp)$ such that $\mathbb{E}[g^{2}(X)]<\infty$,
\begin{align}
\mathbb{E}\left[D\omega_{1}(X)g(X)\right] & =\mathbb{E}\left[2\left(\gamma_{1}(X)-\gamma_{0}(X)\right)g(X)\right]=\mathbb{E}\left[(1-D)\omega_{0}(X)g(X)\right],\label{eq:balancing}
\end{align}
which may facilitate their estimation without the need to calculate propensity score. For example, to construct an estimator for $\omega_{1}$, denote
by $b(x):=(b_{1}(x),\ldots,b_{\text{dim}(b)}(x))^{\prime}$ a vector
of $\text{dim}(b)$ basis functions. Let  $\left\Vert \cdot\right\Vert$ be the vector $l_{2}$ norm. Note \eqref{eq:balancing} implies
\begin{equation}\label{eq:practical.omega}
\mathbb{E}\left[D\omega_{1}(X)b(X)\right]=\mathbb{E}\left[2\left(\gamma_{1}(X)-\gamma_{0}(X)\right)b(X)\right], 
\end{equation}
suggesting we may estimate $\omega_1$  by solving the following minimum distance estimator with a Tikhonov penalty (c.f., \citealt{chen2012estimation,qiu2022approximate}):
\begin{equation}\label{eq:minimum.distance.penalty}
\min_{\omega_{1}\in\Theta_{n}}\left\Vert \frac{1}{n}\sum_{i=1}^{n}\left[2\left(\hat{\gamma}_{1}(X_i)-\hat{\gamma}_{0}(X_i)\right)b(X_i)-D_i\omega_{1}(X_i)b(X_i)\right]\right\Vert^2 +\lambda_{1,n}\frac{1}{n}\sum_{i=1}^{n}\left[D_i\omega_{1}^{2}(X_i)\right],    
\end{equation}
where $\hat{\gamma}_{1},\hat{\gamma}_{0}$ are estimated versions
of $\gamma_{1}$ and $\gamma_{0}$, 
\[
\Theta_{n}=\left\{ f:\mathcal{X}\rightarrow\mathbb{R}\mid f(x)=a^{\prime}b(x),a\in\mathbb{R}^{\text{dim}(b)}\right\} ,
\]
and $\lambda_{1,n}\geq0$ is a tuning parameter specified by the researcher.\footnote{In the empirical application, we use cross validation to select the tuning parameter $\lambda_{1,n}$. See Appendix \ref{sec:compute} for additional computational details.}


We now describe our cross-fitting procedure to estimate $L^{o}(\delta,\eta_0)$ from
data $Z^{n}=\left\{ X_{i},D_{i},Y_{i}\right\} {}_{i=1}^{n}$. Let
$[n]:=\{1,\dots,n\}$ be the observation index set.  Randomly partition
$[n]$ into approximately equal-sized $K\geq2$ folds $\left(I_{k}\right)_{k=1}^{K}$. Without loss of generality, we assume each fold is of sample size $m:=n/K$. 
For each $k\in[K]:=\{1,\dots,K\}$, let $I_{k}^{c}:=[n]\backslash I_{k}$
only include observations $\textit{not}$ from fold $I_{k}$. For
each $I_{k}$, $k\in[K]$, we estimate $\eta_{0}$ by $\hat{\eta}^{k}:=(\hat{\gamma}_{1}^{k},\hat{\gamma}_{0}^{k},\hat{\omega}_{1}^{k},\hat{\omega}_{0}^{k})$,
where $\hat{\gamma}_{1}^{k}:=\hat{\gamma}_{1}\left(\left(Z_{j}\right)_{j\in I_{k}^{c}}\right)$,
$\hat{\gamma}_{0}^{k}:=\hat{\gamma}_{0}\left(\left(Z_{j}\right)_{j\in I_{k}^{c}}\right)$,
$\hat{\omega}_{1}^{k}:=\hat{\omega}_{1}\left(\left(Z_{j}\right)_{j\in I_{k}^{c}}\right)$
and $\hat{\omega}_{0}^{k}:=\hat{\omega}_{0}\left(\left(Z_{j}\right)_{j\in I_{k}^{c}}\right)$,
i.e., $\hat{\eta}^{k}$ is constructed only using data in $I_{k}^{c}$.
Then, for each $\delta$, an estimator of $L^{o}(\delta,\eta_{0})$
is
\[
\hat{L}_{n}^{o}(\delta):=\frac{1}{n}\sum_{i=1}^{n}\hat{\xi}(Z_{i})\left(\mathbf{1}\{\hat{\tau}(X_{i})\geq0\}-\delta(W_{i})\right)^{2},
\]
where 
\begin{align*}
\hat{\xi}(Z_{i}) & :=\sum_{k=1}^{K}\hat{\xi}^{k}(Z_{i})\mathbf{1}\left\{ i\in I_{k}\right\} ,\hat{\xi}^{k}(Z_{i}):=\xi(Z_{i},\hat{\eta}^{k}),\\
\hat{\tau}(X_{i}) & :=\sum_{k=1}^{K}\left(\hat{\gamma}_{1}^{k}(X_{i})-\hat{\gamma}_{0}^{k}(X_{i})\right)\mathbf{1}\left\{ i\in I_{k}\right\} 
\end{align*}
are estimated versions of the weight $\xi(Z_{i},\eta_{0})$ and CATE
$\tau(X_{i})$ for each $i\in[n]$. Next, let $p(w):=(p_{1}(w),p_{2}(w),\ldots p_{\sieved}(w))^{\prime}$
be a vector of basis functions with dimension
$\sieved:=\sieved(n)$ that may grow as $n\rightarrow\infty$. Write 
\begin{equation}
\hat{A}_{n}:=\frac{1}{n}\sum_{i=1}^{n}\hat{\xi}(Z_{i})p(W_{i})p(W_{i})^{\prime},\quad \hat{B}_{n}:=\frac{1}{n}\sum_{i=1}^{n}\hat{\xi}(Z_{i})p_{i}(W_{i})\mathbf{1}\left\{ \hat{\tau}(X_{i})\geq0\right\}.
\end{equation}
Our final estimated policy is defined as:
\begin{align}\label{eq:our.solution.trimmed}
\hat{\delta}^{\mathcal{T}}(w)	:=\begin{cases}
1, & \hat{\delta}(w)>1,\\
\hat{\delta}(w), & \hat{\delta}(w)\in[0,1],\\
0, & \hat{\delta}(w)<0,
\end{cases}  
\end{align}
where 
\begin{equation}
\hat{\delta}(w)=\hat{\beta}^{\prime}p(w),\quad \hat{\beta}:={\hat{A}_{n}}^{-}\hat{B}_{n},\label{eq:our.solution}
\end{equation}
and ${(\cdot)}^{-}$ denotes the Moore-Penrose inverse.\footnote{Our cross-fitted procedure is considered as DML2 \citep{chernozhukov2018double}. One may also consider a different cross-fitting approach, in which we solve for the optimal rule in  each fold before taking the averages over all folds. It would be interesting to compare these two approaches in policy learning problems, in light of the recent progress of \cite{velez2024asymptotic} for estimation problems.} 

The rationale behind \eqref{eq:our.solution.trimmed} is as follows. Given $\hat{L}_{n}^{o}(\delta)$, one may consider finding an optimal rule by solving 
\begin{equation}
\inf_{\delta\in\mathcal{D}}\hat{L}_{n}^{o}(\delta),\label{eq:our.proposal}
\end{equation}
where 
\begin{equation}
\mathcal{D}:=\mathcal{D}_{n}:=\left\{ f(w)=\sum_{j=1}^{\sieved}\beta_{j}p_{j}(w):\beta_{j}\in\mathbb{R},\forall j=1\ldots\sieved\right\} .\label{eq:policy.class.linear}
\end{equation}
is a  linear sieve policy class.\footnote{It is a common practice to use a class of linear functions to approximate
a probability function. See, e.g., \citet*{chen2008semiparametric}. Our theory in fact can also be extended to other policy classes, e.g.,
a class of logit functions, with more technicalities.} \eqref{eq:our.proposal} may be viewed as a \emph{weighted least squares (empirical projection)}  problem, in which we predict an estimated outcome  $\mathbf{1}\{\hat{\tau}(X_i)\geq0\}$ in space $\mathcal{D}$ with an estimated weight $\hat{\xi}(Z_i)$.  Due to the presence of the adjustment term to ``debias'', the weight $\hat{\xi}(Z_i)$ may be negative, and the Hessian matrix $\hat{A}_n$
may also not be positive semidefinite. As a result, the problem in
\eqref{eq:our.proposal} is not necessarily convex in finite sample. However, our theory shows that, whenever $\hat{\eta}^{k}$ is of
high quality and $n$ is sufficiently large (in a sense we make precise),
the probability of $\hat{A}_{n}$ not being positive definite is exponentially
small. In addition, on the event that $\hat{A}_{n}$ is positive definite, \eqref{eq:our.proposal} has a unique solution  \eqref{eq:our.solution}, in which the Moore-Penrose inverse reduces to a standard inverse.\footnote{Therefore, \eqref{eq:our.solution} also has an interesting IV interpretation with an outcome of interest $\mathbf{1}\{\hat{\tau}(X_i)\geq0\}$,
a vector of endogenous variables $p(W_i)$,
and a vector of instrument $\hat{\xi}(Z_i)p(W_i)$.} Finally, to guarantee the estimated policy is indeed a valid decision rule and also for technical tractability, \eqref{eq:our.solution.trimmed} takes a trimmed form.\footnote{See, e.g., \cite*{newey1994series,newey1999nonparametric} for  examples, in other contexts, of using trimming to improve the  theoretic performances of certain statistics.} Note \eqref{eq:our.solution.trimmed} is well-defined irrespective of whether $\hat{A}_n$ is positive definite or not.

\section{Performance guarantee}\label{sec:stat}

Let $e_{1}:=Y(1)-\gamma_{1}(X)$,
$e_{0}:=Y(0)-\gamma_{0}(X)$ and $A:=\mathbb{E}[\tau^{2}(X)p(W)p^{\prime}(W)]$. We first impose the following regularity
conditions. 
\begin{assumption}
\label{asm:reg}
\begin{itemize}
\item[(i)]  There exist some constants $C_{e}$ and $C_{\gamma}$ such that $\left|e_{1}\right|\leq C_{e}$, $\left|e_{0}\right|\leq C_{e}$, $\sup_{x\in\mathcal{X}}\left|\gamma_{1}(x)\right|\leq C_{\gamma}$, $\sup_{x\in\mathcal{X}}\left|\gamma_{0}(x)\right|\leq C_{\gamma}$. 
\item[(ii)] All the eigenvalues of $A$
are bounded from above and away from zero. 
\end{itemize}
\end{assumption}

Under Assumptions \ref{asm:unconfounded} and \ref{asm:reg}, there exists some $C_{\xi}$ such that $\sup_{z\in\mathcal{Z}}\left|\xi(z,\eta_0)\right|\leq C_{\xi}$, and denote by $\bar{\lambda}<\infty$ and $\underline{\lambda}>0$
the maximum and minimum eigenvalues of $A$. Notably, even if $\mathbb{E}\left[\tau^{2}(X)\mid W=w\right]=0$ for some $w\in\mathbb{R}^{d_W}$, Assumption \ref{asm:reg}(ii) may still hold, thus allowing $\delta^*(w)$ to be non-unique for some $w$ values. 
Next, we impose the following
statistical quality requirements on the learners of $\eta_0$. Let 
$
\mathbb{E}_{k}\left[\cdotp\right]:=\mathbb{E}_{P^{n}}\left[\cdotp\mid\left\{ Z_{j}\right\} _{j\in[n]\setminus I_{k}}\right]$.
\begin{assumption}
\label{asm:quality} For each $k\in[K]$,
the following holds:
\begin{itemize}
\item[(i)] for some constant $C_{M}$ and  some constants $r_{\gamma_{1}},r_{\gamma_{0}},r_{\omega_{1}}$
and $r_{\omega_{0}}$ in $(0,1]$,
\[
\begin{aligned}\mathbb{E}_{k}\left[\int\left(\hat{\gamma}_{1}^{k}(x)-\gamma_{1}(x)\right)^{2}dF_{X}(x)\right]\leq C_{M}n^{-r_{\gamma_{1}}}, & \mathbb{E}_{k}\left[\int\left(\hat{\gamma}_{0}^{k}(x)-\gamma_{0}(x)\right)^{2}dF_{X}(x)\right]\leq C_{M}n^{-r_{\gamma_{0}}},\\
\mathbb{E}_{k}\left[\int\left(\hat{\omega}_{1}^{k}(x)-\omega_{1}(x)\right)^{2}dF_{X}(x)\right]\leq C_{M}n^{-r_{\omega_{1}}}, & \mathbb{E}_{k}\left[\int\left(\hat{\omega}_{0}^{k}(x)-\omega_{0}(x)\right)^{2}dF_{X}(x)\right]\leq C_{M}n^{-r_{\omega_{0}}};
\end{aligned}
\]
\item[(ii)] conditional on $\left\{ Z_{j}\right\} _{j\in[n]\setminus I_{k}}$ and for some constant $\tilde{C}_{M}$, 
\begin{align*}
\sup_{x\in\mathcal{X}}\left|\hat{\gamma}_{1}^{k}(x)-\gamma_{1}(x)\right| & \leq\tilde{C}_{M},\sup_{x\in\mathcal{X}}\left|\hat{\gamma}_{0}^{k}(x)-\gamma_{0}(x)\right|\leq\tilde{C}_{M},\\
\sup_{x\in\mathcal{X}}\left|\hat{\omega}_{1}^{k}(x)-\omega_{1}(x)\right| & \leq\tilde{C}_{M},\sup_{x\in\mathcal{X}}\left|\hat{\omega}_{0}^{k}(x)-\omega_{0}(x)\right|\leq\tilde{C}_{M}.
\end{align*}

\end{itemize}

\end{assumption}

By Assumption \ref{asm:quality}(i), our cross-fitted learners
of $\eta_0$ are mean square consistent with certain
convergence rates. Moreover, Assumptions \ref{asm:unconfounded}, \ref{asm:reg} and \ref{asm:quality}(ii) together imply that  there exists some $\tilde{C}_{\xi}$ such that for all $k\in[K]$, $\sup_{z\in\mathcal{Z}}\left|\hat{\xi}^k(z)\right|\leq\tilde{C}_{\xi}$ conditional on $\left\{ Z_{j}\right\} _{j\in[n]\setminus I_{k}}$. 

We now present a high-level stability condition of $\hat{\delta}$ useful for deriving fast convergence rates of our proposal. 
\begin{assumption}\label{asm:stability}
For some positive constant $\underline{\lambda}_\varepsilon$, we have $\sup_{w\in\mathcal{W}}|\hat{\delta}(w)|\cdot\mathbf{1}\{\lambda_{\text{min}}(\hat{A}_n)\geq\underline{\lambda}_\varepsilon\}\leq C_{L}$ for some finite constant $C_L$ (which may depend on $\underline{\lambda}_\varepsilon$).  
\end{assumption}
Assumption~\ref{asm:stability} essentially requires that the solution to the weighted least squares problem~\eqref{eq:our.proposal} satisfies a stability property with respect to the sup norm.
\footnote{See, for example, the sup-norm stability property of empirical $L_2$ projections using certain basis functions (e.g., splines and wavelets), which has been exploited by \cite*{huang2003local,belloni2015some,chen2015optimal} for sharp bias control in least squares series estimation. Our weighted least squares problem~\eqref{eq:our.proposal}, however, differs from those studied in the preceding literature due to the presence of estimated weights and outcomes.}  With Assumptions \ref{asm:unconfounded}-\ref{asm:quality}, we verify in Appendix \ref{sec:Bspline} that Assumption \ref{asm:stability} holds if $\mathcal{D}$ is constructed with B-spline basis functions. 
Finally, we consider the following  margin condition that concerns  the distribution of $\tau(X)$ in the neighborhood of $\tau(X)=0$ (see also \citealt{tsybakov2004optimal,kitagawa2018should}):
\begin{assumption}\label{asm:margin}
There exist positive constants $C_{\tau}$, $\alpha$, and $t^{*}$ such that
\[
P\left( \left| \tau(X) \right| \leq t \right) \leq C_{\tau} t^{\alpha}, \quad \text{for all } 0 \leq t \leq t^{*}.
\]
\end{assumption}
Note Assumption \ref{asm:margin} rules out  $P\{\tau(X)=0\}>0$, implying that $\delta^*(w)$ will be unique a.e. with respect to the marginal distribution of $W$. We are now ready to state our main result. Denote by $\sievedstar$ the dimension of the basis functions for  $\{f^2:f\in\mathcal{D}\}$, where $\mathcal{D}$ is defined in \eqref{eq:policy.class.linear}, and write $\zeta_{p}:=\sup_{w\in\mathcal{W}}\left\Vert p(w)\right\Vert$.\footnote{The magnitudes of $\sievedstar$ and $\zeta_p$ depend the choice of the basis functions. An upper bound of $d^*_p$ is $\sieved^2$, but  $\sievedstar$ may be as small as  $O(\sieved)$, e.g., for B-splines. The quantity of $\zeta_p$ is a key object of interest in the series estimation literature. It is well known that $\zeta_p=O(\sqrt{\sieved})$ for general spline basis functions (see, e.g., \citealt{newey1997convergence}). For B-splines, its structural properties imply that in fact, $\zeta_p\leq1$. See Appendix \ref{sec:Bspline} for additional treatments.} 
\begin{thm}
\label{thm:main}Suppose Assumptions \ref{asm:unconfounded}-\ref{asm:stability}
hold. Fix $0<\varepsilon<\min\left\{ \underline{\lambda},6\tilde{C}_{\xi}\zeta_{p}^{2},3C_{\xi}\zeta_{p}^{2}/2\right\}$, and  let
\begin{align*}
R_{n,O} & :=\frac{\sievedstar}{n}+\sup_{P^{n}}\inf_{\delta\in\mathcal{D}}\left[L(\delta,\tau)-L(\delta^{*},\tau)\right],\\
R_{n,B} & :=\text{max}\{\left(\log2d_{p}\right)^{2}\zeta_{p}^{6},\sieved\zeta_{p}^{3}\}\left(n^{-2r_{\gamma_{1}}}+n^{-2r_{\gamma_{0}}}+n^{-\left(r_{\omega_{1}}+r_{\gamma_{1}}\right)}+n^{-\left(r_{\omega_{0}}+r_{\gamma_{0}}\right)}\right),\\
R_{n,V} &:=\zeta_{p}^{3}n^{-1},\\
R_{n,F}&:=4C_{\xi}d_{p}\left[\exp\left(\frac{-n\varepsilon^{2}}{4C_{\xi}^{2}\zeta_{p}^{4}}\right)+K\exp\left(\frac{-n\varepsilon^{2}}{16K\tilde{C}_{\xi}^{2}\zeta_{p}^{4}}\right)\right].
\end{align*}
Then, for each $n$ such that
\[
4C_{M}\zeta_{p}^{2}\left(n^{-r_{\gamma_{1}}}+n^{-r_{\gamma_{0}}}+n^{-\frac{r_{\omega_{1}}+r_{\gamma_{1}}}{2}}+n^{-\frac{r_{\omega_{0}}+r_{\gamma_{0}}}{2}}\right)<\varepsilon,
\]
the following statements hold. 
\begin{itemize}
\item[(i)] For some constant $\mathcal{C}$ that is independent of $n$, $d_p$, $d^*_p$ and $\zeta_p$,
\begin{align}\label{eq:rate.main}
\sup_{P_{n}}\mathbb{E}_{P_{n}}\left[L(\hat{\delta}^{\mathcal{T}},\tau)-L(\delta^{*},\tau)\right]& \leq\mathcal{C}\left(R_{n,O}+R_{n,B}+R_{n,V}\right)+R_{n,F}.
\end{align}
\item[(ii)] The right-hand side of \eqref{eq:rate.main} improves to 
\begin{align}\label{eq:rate.w.margin}
\mathcal{C}\left(R_{n,O}+R_{n,B}+R_{n,V}\left(n^{-r_{\gamma_{1}}}+n^{-r_{\gamma_{0}}}\right)^{\frac{\alpha}{\alpha+2}}\right)+R_{n,F},
\end{align}
with a constant $\mathcal{C}$ suitably redefined (but also independent of $n$, $d_p$, $d^*_p$ and $\zeta_p$), if in addition, Assumption
\ref{asm:margin} holds and $n$ is also such that 
$\left(4C_{M}C_{\tau}^{-1}\left(n^{-r_{\gamma_{1}}}+n^{-r_{\gamma_{0}}}\right)\right)^{\frac{1}{\alpha+2}}<t^{*}$.
\end{itemize}
\end{thm}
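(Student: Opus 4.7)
The plan is to exploit the weighted least squares structure. From~\eqref{eq:squared.regret.optimal}, $\delta^*(w)$ is the $L^2(\tau^2 dF_X)$ projection of $\mathbf{1}\{\tau(X)\geq 0\}$ onto functions of $W$, so the Pythagorean identity yields
\[
L(\delta,\tau) - L(\delta^*,\tau) = \mathbb{E}\!\left[\tau^2(X)(\delta(W)-\delta^*(W))^2\right]
\]
for any $\delta$ valued in $[0,1]$. Since $\delta^*(w)\in[0,1]$ (we can set it so when non-unique), trimming weakly reduces this squared distance pointwise, so it suffices to bound $\mathbb{E}_{P^n}\mathbb{E}[\tau^2(X)(\hat\delta(W)-\delta^*(W))^2]$ for the untrimmed $\hat\delta$.

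I would then split the sample space at $\mathcal{E}_n:=\{\lambda_{\min}(\hat A_n)\geq \underline\lambda_\varepsilon\}$. On $\mathcal{E}_n^c$ the trimmed estimator lies in $[0,1]$, so the excess risk is uniformly bounded and the contribution is of order $P(\mathcal{E}_n^c)$. Decomposing $\hat A_n - A = (\hat A_n - A_n) + (A_n - A)$ with $A_n$ the oracle sample version, and applying a matrix Bernstein inequality to each piece (using boundedness from Assumption~\ref{asm:reg} and the sup-norm nuisance rates in Assumption~\ref{asm:quality}(ii), together with the rate condition on $n$ stated in the theorem) delivers precisely $P(\mathcal{E}_n^c) \lesssim R_{n,F}$.

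On $\mathcal{E}_n$ the normal equations $\hat A_n \hat\beta = \hat B_n$ hold and $\hat\delta\in\mathcal{D}$. Letting $\delta_o\in\mathcal{D}$ be the $L^2(\tau^2 dP)$ best approximation to $\delta^*$, a second Pythagorean step (valid because $\delta^*-\delta_o$ is weighted-orthogonal to $\mathcal{D}$ and $\hat\delta - \delta_o\in\mathcal{D}$) gives
\[
\mathbb{E}[\tau^2(X)(\hat\delta(W)-\delta^*(W))^2] = (\hat\beta-\beta_o)^\prime A(\hat\beta-\beta_o) + \inf_{\delta\in\mathcal{D}}[L(\delta,\tau)-L(\delta^*,\tau)].
\]
The second term contributes the approximation piece of $R_{n,O}$. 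For the first, $\hat\beta-\beta_o = \hat A_n^{-1}(\hat B_n - \hat A_n\beta_o)$, and on $\mathcal{E}_n$ the quadratic form is bounded by a constant times $\|\hat B_n - \hat A_n\beta_o\|^2$; this vector is then decomposed by inserting the oracle integrands $\xi(Z,\eta_0)$ and $\mathbf{1}\{\tau(X)\geq 0\}$ into three pieces: (a) an oracle empirical-process term, controlled by the refined maximal inequality of~\cite{KOHLER20001} for linear sieves (whose squared span has dimension $\sievedstar$), producing the $\sievedstar/n$ part of $R_{n,O}$ together with $R_{n,V}$; (b) a nuisance bias term where Neyman orthogonality of $L^o$ at $\eta_0$ is essential, so that a second-order Taylor expansion together with Cauchy--Schwarz yields the product rates $n^{-(r_{\omega_1}+r_{\gamma_1})}$ and $n^{-(r_{\omega_0}+r_{\gamma_0})}$ in $R_{n,B}$; and (c) a classification-disagreement term from swapping $\mathbf{1}\{\tau\geq 0\}$ for $\mathbf{1}\{\hat\tau\geq 0\}$. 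Assumption~\ref{asm:stability} enters here to couple $L^2$-control back to sup-norm control, producing the dependence on $\zeta_p$ and $\sieved$ that appears in $R_{n,B}$.

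The hardest step is the non-smooth classification term (c). Without a margin condition, Markov's inequality gives $P(\mathbf{1}\{\tau\geq 0\}\neq\mathbf{1}\{\hat\tau\geq 0\})\lesssim(\mathbb{E}|\hat\tau-\tau|^2)^{1/2} \lesssim (n^{-r_{\gamma_1}}+n^{-r_{\gamma_0}})^{1/2}$, which can be absorbed into $R_{n,B}$ and suffices for part~(i). For part~(ii), I would apply the standard Tsybakov split: on $\{|\tau(X)|\leq t\}$ use the margin bound $P(\cdot)\leq C_\tau t^\alpha$ directly, while on $\{|\tau(X)|>t\}$ any disagreement forces $|\hat\tau-\tau|>t$, which by Markov has probability $\lesssim t^{-2}\mathbb{E}|\hat\tau-\tau|^2$; optimizing in $t$ yields the factor $(n^{-r_{\gamma_1}}+n^{-r_{\gamma_0}})^{\alpha/(\alpha+2)}$ in \eqref{eq:rate.w.margin}, valid once the stated threshold condition $(4C_M C_\tau^{-1}(n^{-r_{\gamma_1}}+n^{-r_{\gamma_0}}))^{1/(\alpha+2)}<t^*$ is met.
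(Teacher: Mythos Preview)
Your high-level architecture matches the paper: split on a good event for $\hat A_n$ (and $A_n$), bound the bad event by matrix Bernstein to get $R_{n,F}$, use the Pythagorean identity $L(\delta,\tau)-L(\delta^*,\tau)=\mathbb{E}[\tau^2(\delta-\delta^*)^2]$ together with trimming, then separate oracle from nuisance error, and finally handle the indicator disagreement via the Tsybakov split. The margin argument you sketch for part~(ii) is exactly what the paper does in its remainder lemmas.

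Where you diverge is the choice of anchor and the role of the Kohler inequality. You compare $\hat\beta$ to the \emph{population} projection $\beta_o=A^{-1}B$ and reduce everything to $\|\hat B_n-\hat A_n\beta_o\|^2$. The paper instead introduces the \emph{sample} oracle $\tilde\beta=A_n^{-1}B_n$ and decomposes
\[
L(\hat\delta,\tau)-L(\delta^*,\tau)=\underbrace{L(\hat\delta,\tau)-L(\delta^*,\tau)-2\bigl[L_n^o(\hat\delta)-L_n^o(\delta^*)\bigr]}_{T_{1n}}+2T_{2n}+2\underbrace{\bigl[L_n^o(\hat\delta)-L_n^o(\tilde\delta)\bigr]}_{T_{3n}}.
\]
The Kohler maximal inequality is applied to $T_{1n}$: one bounds $\Pr\{T_{1n}>t\}$ by a supremum over $\delta\in\mathcal{D}_{C_L}$ of a normalized empirical process, and the $d_p^*$ factor enters through the covering number of the class $\{g_f:f\in\mathcal{D}_{C_L}\}$, which lives in a linear space of dimension at most $1+d_p+d_p^*$. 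This is where Assumption~\ref{asm:stability} is actually used, to guarantee $\hat\delta\in\mathcal{D}_{C_L}$. By contrast, $T_{3n}=(\hat\beta-\tilde\beta)^\prime A_n(\hat\beta-\tilde\beta)$ is handled by direct algebra on $\|\hat A_n-A_n\|$ and $\|\hat B_n-B_n\|$, with no maximal inequality.

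In your route, the term you call the ``oracle empirical-process term'' is just the centered vector $B_n-A_n\beta_o$, whose squared norm is controlled by a plain variance calculation yielding order $\zeta_p^2/n$; there is no supremum over a function class, so Kohler neither applies nor is needed there, and the $d_p^*/n$ rate would not arise naturally. Your approach can be made to work and gives a valid (and for some bases sharper) bound, but to recover the theorem exactly as stated, with $R_{n,O}$ carrying $d_p^*/n$, you need the paper's $T_{1n}$--$T_{3n}$ decomposition and apply Kohler to $T_{1n}$, not to the score vector. Also be careful that your term $\|(\hat A_n-A_n)\beta_o\|$ brings in $\|\beta_o\|$, which can scale with $d_p$; the paper avoids this by anchoring at $\tilde\beta$ and using $\|\hat B_n\|\le \tilde C_\xi\zeta_p$ instead.
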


Theorem \ref{thm:main} provides an upper bound for  the uniform excess risk whenever $n$ is sufficiently large. As long as  $R_{n,O}$, $R_{n,B}$ and $R_{n,V}$
all go to zero at a polynomial rate as a function of $n$, the exponential term $R_{n,F}$ will be asymptotically negligible, implying immediately that  if Assumptions \ref{asm:unconfounded}-\ref{asm:stability} hold,
\[
\sup_{P_{n}}\mathbb{E}_{P_{n}}\left[L(\hat{\delta}^{\mathcal{T}},\tau)-L(\delta^{*},\tau)\right]=O(R_{n,O}+R_{n,B}+R_{n,V}),
\]
and with the additional Assumption \ref{asm:margin},
\[
\sup_{P_{n}}\mathbb{E}_{P_{n}}\left[L(\hat{\delta}^{\mathcal{T}},\tau)-L(\delta^{*},\tau)\right]=O\left(R_{n,O}+R_{n,B}+R_{n,V}\left(n^{-r_{\gamma_{1}}}+n^{-r_{\gamma_{0}}}\right)^{\frac{\alpha}{\alpha+2}}\right).
\]
Each part of  \eqref{eq:rate.main} and \eqref{eq:rate.w.margin} is interpretable. Term $R_{n,F}$ controls the excess risk even when $\hat{A}_n$ and its oracle version $A_n:=\frac{1}{n}\sum_{i=1}^{n}\xi(Z_{i},\eta_0)p(W_{i})p(W_{i})^{\prime}$ 
do not ``behave nicely'' (i.e., when they are not positive definite). When they do ``behave nicely'', consider the following oracle ``empirical risk minimization''
(ERM) problem with known $\eta_0$:
\begin{equation}
\min_{\delta\in\mathcal{D}}L_{n}^{o}(\delta,\eta_0),\label{eq:oracle}
\end{equation}
where
\begin{align}
L_{n}^{o}(\delta,\eta_0):=\frac{1}{n}\sum_{i=1}^{n}\left[\xi(Z_i,\eta_0)\left(\mathbf{1}\{\gamma_{1}(X_i)-\gamma_{0}(X_i)\geq0\}-\delta(W_i)\right)^{2}\right]\nonumber\label{eq:empirical.average.debiased}.
\end{align}
The oracle excess risk is of  $O\left(R_{n,O}\right)$,
containing an approximation error term $\sup_{P^{n}}\inf_{\delta\in\mathcal{D}}\left[L(\delta,\tau)-L(\delta^{*},\tau)\right]$ due to  using $\mathcal{D}$ to approximate $\delta^*$.\footnote{This approximation quality term may depend on whether Assumption \ref{asm:margin} is imposed or not, and may be further analyzed provided with suitable smoothness conditions on $\delta^*$, which we leave for future research. 
}  
Since $\eta_0$ is in fact unknown and needs to be estimated, we have to pay an additional price from the ``remainder estimation error''. Interestingly, the asymptotic order of this remainder error depends on whether the margin condition is imposed or not. Without margin condition, the ``remainder estimation error'' is $O\left(R_{n,B}+R_{n,V}\right)$, where $R_{n,B}$ is a bias term 
while $R_{n,V}$ is a variance term. 
If  the margin condition is imposed 
with some $\alpha>0$, the rate for the variance term improves to $O(R_{n,V}\left(n^{-r_{\gamma_{1}}}+n^{-r_{\gamma_{0}}}\right)^{\frac{\alpha}{\alpha+2}})$.

The optimality of our proposal depends on the complexity of $\delta^*$. If there exists some $\delta\in\mathcal{D}$ that solves \eqref{eq:loss.proposal} with $\sieved$ fixed, we say $\delta^{*}$ is parametric. If no rule in $\mathcal{D}$ solves \eqref{eq:loss.proposal}, we say $\delta^{*}$ is nonparametric. 
The following proposition suggests that,  when  $\delta^*$ is parametric,  our procedure 
is asymptotically optimal in terms of the rate with Assumptions \ref{asm:unconfounded}-\ref{asm:stability}. Moreover, it is also asymptotically semiparametrically efficient with the additional Assumption \ref{asm:margin}.

\begin{prop}\label{prop:parametric}
Consider the case when $\delta^*$ is parametric. 
\begin{itemize}
\item[(i)] Suppose Assumptions \ref{asm:unconfounded}, \ref{asm:reg}, \ref{asm:quality} and \ref{asm:stability} hold true, $r_{\gamma_{1}}>1/2$, $r_{\gamma_{0}}>1/2$, $r_{\omega_{1}}+r_{\gamma_{1}}>1$ and $r_{\omega_{0}}+r_{\gamma_{0}}>1$. 
Then,  $R_{n,O}=R_{n,V}=O(\frac{1}{n})$, $R_{n,B}=o(\frac{1}{n})$, and 
\begin{equation}\label{eq:rate.parametric}
\sup_{P_{n}}\mathbb{E}_{P_{n}}\left[L(\hat{\delta}^{\mathcal{T}},\tau)-L(\delta^{*},\tau)\right]=O\left(\frac{1}{n}\right).
\end{equation}

\item[(ii)] If in addition, Assumption \ref{asm:margin} also holds, then $R_{n,O}=O(\frac{1}{n})$, $R_{n,B}=R_{n,V}=o(\frac{1}{n})$ and \eqref{eq:rate.parametric} is still true. Moreover, suppose $\Omega :=A^{-1}VA^{-1}$ is  positive definite, where $V:=\mathbb{E}\left[SS^{\prime} \right]$, 
\begin{align*}
S&:=\xi(Z)p(W)\mathbf{1}\{\tau(X)\geq 0\}-\mathbb{E}[\xi(Z)p(W)\mathbf{1}\{\tau(X)\geq 0\}].
\end{align*}
Then, as $n\rightarrow\infty$,
\[
n\left(L(\hat{\delta}^{\mathcal{T}},\tau)-L(\delta^{*},\tau)\right)\overset{d}{\rightarrow}N(0,\Omega)^{\prime}AN(0,\Omega),
\]
where $N(0,\Omega)$ denotes a multivariate normal distribution with mean $0$ and covariance matrix $\Omega$.
\end{itemize}
\end{prop}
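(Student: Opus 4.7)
The plan is to invoke Theorem \ref{thm:main} for both rate claims and to exploit the exact quadratic-in-$\beta$ structure of $L$ on the sieve $\mathcal{D}$ for the distributional limit. For part (i), I would directly apply Theorem \ref{thm:main}(i). Because $\delta^*$ is parametric and lies in $\mathcal{D}$, the approximation piece $\inf_{\delta\in\mathcal{D}}[L(\delta,\tau)-L(\delta^*,\tau)]$ in $R_{n,O}$ vanishes. With $\sieved$, $\sievedstar$ and $\zeta_p$ all fixed in $n$, $R_{n,O}=O(1/n)$ and $R_{n,V}=O(1/n)$; the rate conditions $r_{\gamma_j}>1/2$ and $r_{\omega_j}+r_{\gamma_j}>1$ force every summand of $R_{n,B}$ to be $o(1/n)$; and $R_{n,F}$ is exponentially small once $\zeta_p$ is bounded. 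The rate claim in part (ii) follows the same template but using Theorem \ref{thm:main}(ii): the extra factor $(n^{-r_{\gamma_1}}+n^{-r_{\gamma_0}})^{\alpha/(\alpha+2)}\to 0$ makes the $R_{n,V}$ contribution $o(1/n)$, leaving $R_{n,O}=O(1/n)$ as the bottleneck.

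For the distributional claim I would first use the fact that $L(\beta^{\prime}p,\tau)$ is \emph{exactly} quadratic in $\beta\in\mathbb{R}^{\sieved}$ with Hessian $2A$; combined with $\delta^*=\beta_*^{\prime}p$ this yields the identity
\[
L(\hat{\delta},\tau)-L(\delta^{*},\tau) = (\hat{\beta}-\beta_{*})^{\prime}A(\hat{\beta}-\beta_{*}).
\]
On the high-probability event $\{\lambda_{\min}(\hat A_n)\geq\underline{\lambda}_\varepsilon\}$, whose complement is exponentially small (essentially the content driving $R_{n,F}$), one has $\hat\beta-\beta_*=\hat A_n^{-1}(\hat B_n-\hat A_n\beta_*)$. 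Standard arguments using Assumption \ref{asm:quality} and the LLN give $\hat A_n\overset{P}{\to}A$, so the task reduces to computing the limiting distribution of $\sqrt n(\hat B_n-\hat A_n\beta_*)$. I would split this quantity into three pieces: (a) the oracle score $\frac{1}{\sqrt n}\sum_i \xi(Z_i)p(W_i)\bigl(\mathbf{1}\{\tau(X_i)\geq 0\}-p(W_i)^{\prime}\beta_*\bigr)$, which has mean zero by the first-order condition $A\beta_*=B$ and obeys a standard CLT; (b) a ``smooth'' nuisance remainder from $\hat\eta^k-\eta_0$ inside $\xi(Z,\eta)$, controlled to $o_p(1)$ via Neyman orthogonality (cf.\ Proposition \ref{prop:debias}) together with cross-fitting, using $r_{\gamma_j}+r_{\omega_j}>1$; and (c) a ``non-smooth'' remainder from replacing $\mathbf{1}\{\tau(X_i)\geq 0\}$ by $\mathbf{1}\{\hat\tau^k(X_i)\geq 0\}$. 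Pulling the pieces together, $\sqrt n(\hat\beta-\beta_*)\overset{d}{\to}N(0,\Omega)$, and applying continuous mapping to the quadratic form $v\mapsto v^{\prime}Av$ (and verifying that trimming is asymptotically inactive: since $\hat\beta\overset{P}{\to}\beta_*$ and $\zeta_p$ is bounded, $\hat\delta\to\delta^*\in[0,1]$ uniformly on $\mathcal{W}$, so $\hat\delta^{\mathcal{T}}=\hat\delta$ with probability approaching $1$ on interior $w$, and the boundary contribution is $o_p(1/n)$) yields the stated limit $Z^{\prime}AZ$ with $Z\sim N(0,\Omega)$.

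The main obstacle is piece (c) of the above expansion. Although Neyman orthogonality plus cross-fitting reduces the smooth nuisance error to a product of two mean-square rates, the indicator inside the residual is not smooth in $\hat\tau^k$, so a Taylor expansion is unavailable. To bound (c) I would use the pointwise inequality
\[
|\mathbf{1}\{\hat\tau^k(X_i)\geq 0\}-\mathbf{1}\{\tau(X_i)\geq 0\}|\;\leq\;\mathbf{1}\{|\tau(X_i)|\leq|\hat\tau^k(X_i)-\tau(X_i)|\},
\]
and then combine Assumption \ref{asm:margin} with the $L^2$-rates from Assumption \ref{asm:quality}(i); after multiplication by the bounded weight $\hat\xi^k(Z_i)p(W_i)$ and summation across folds, piece (c) contributes $o_p(1)$ at the $\sqrt n$-scale whenever $\alpha>0$ and the rates $r_{\gamma_j}>1/2$. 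This is the classical margin-condition device of \cite{tsybakov2004optimal,kitagawa2018should,AW20} adapted to the present weighted least squares setting.
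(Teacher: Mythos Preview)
Your proposal is correct and close to the paper's own argument. For the rate claims, both you and the paper simply read off the orders of $R_{n,O},R_{n,B},R_{n,V}$ from Theorem~\ref{thm:main} with $\sieved,\sievedstar,\zeta_p$ fixed and the approximation error zero. For the distributional limit, the paper routes through the oracle $\tilde\beta:=A_n^{-1}B_n$: it reuses Lemmas~\ref{lem:remainder.5}--\ref{lem:remainder.6} (the $L^2$ bounds on $\hat A_n-A_n$ and $\hat B_n-B_n$ already established for Theorem~\ref{thm:main}) to conclude $\sqrt n\|\hat\beta-\tilde\beta\|=o_p(1)$, and then applies a standard CLT to $\tilde\beta-\beta^{*}$. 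Your direct three-piece decomposition of $\sqrt n(\hat B_n-\hat A_n\beta_{*})$ is the same analysis repackaged---piece~(a) is the oracle CLT and pieces~(b)$+$(c) together amount to controlling $\hat\beta-\tilde\beta$---so the paper's route mainly buys recycling of lemmas already in hand. One small refinement: the \emph{bias} of your piece~(c) is already $o(n^{-1/2})$ without the margin condition, because on the event $\{|\tau|\leq|\hat\tau^k-\tau|\}$ one has $(\hat\tau^k)^2\leq 4(\hat\tau^k-\tau)^2$ and the IPW correction in $\hat\xi^k$ has conditional mean zero (this is the content of the paper's Lemma~\ref{lem:remainder.7}); Assumption~\ref{asm:margin} is what kills the \emph{variance} contribution of the indicator mismatch (Lemma~\ref{lem:remainder.8}). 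For trimming the paper is more explicit than your sketch: it first shows $\hat{\mathbf p}:=F_W\{w:\hat\delta(w)\notin[0,1]\}=o_p(1)$ via uniform convergence $\sup_w|\hat\delta(w)-\delta^{*}(w)|=O_p(n^{-1/2})$, and then splits $n\,\mathbb{E}[\tau^2(\hat\delta^{\mathcal T}-\delta^{*})^2]$ according to whether $\hat\delta(W)\in[0,1]$, using $|\hat\delta^{\mathcal T}-\delta^{*}|\leq|\hat\delta-\delta^{*}|$ on the complement to bound that piece by $n(\hat\beta-\beta^{*})'\hat{\mathcal A}(\hat\beta-\beta^{*})$ with $\|\hat{\mathcal A}\|\lesssim\hat{\mathbf p}=o_p(1)$. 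Your one-line trimming argument points at the same mechanism but would benefit from this decomposition, since $\hat\delta^{\mathcal T}=\hat\delta$ need not hold with probability tending to one if $\delta^{*}$ touches the boundary of $[0,1]$.
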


Note if $\delta^*$ is parametric, Assumption \ref{asm:reg}(ii) implies a unique $\beta^*\in\mathbb{R}^{\sieved}$ such that $\left(\beta^{*}\right)^{\prime}p(w)$ solves \eqref{eq:loss.proposal}. The study of $\beta^*$ has a known semiparametric efficiency bound $\Omega$. See e.g., \cite*{newey1994asymptotic,ackerberg2014asymptotic}.  By Proposition \ref{prop:parametric}(ii),  our procedure is asymptotically equivalent to the oracle that solves \eqref{eq:oracle}. In particular, $\sqrt{n}\left(\hat{\beta}-\beta^{*}\right)\overset{d}{\rightarrow}N(0,\Omega)$, achieving the semiparametric efficiency bound asymptotically. Moreover, with a parametric $\delta^*$,
\[n\left(L(\hat{\delta},\tau)-L(\delta^{*},\tau)\right)	=n\left(\hat{\beta}-\beta^{*}\right)^{\prime}A\left(\hat{\beta}-\beta^{*}\right).\] 
The asymptotic efficiency of $\beta^*$ translates to that of $L(\hat{\delta},\tau)$, implying that our  procedure is asymptotically efficient as well.\footnote{A parametric $\delta^*$ is not necessarily restrictive. Even if $\delta^*$ is nonparametric, one may wish to target the ``second best'' rule, i.e., $\delta^{SB}\in\arg\inf_{\delta\in\mathcal{D}}L(\delta,\tau)$, for which Proposition \ref{prop:parametric} can be shown to still apply.}

When $\delta^*$ is nonparametric, the discussion of the optimality of our procedure is more involved.   In Appendix \ref{sec:lower.bound}, we derive a minimax lower bound for $\delta^*$ in the style of \cite{stone1982optimal}, which we suspect is attainable by our procedure for certain high smoothness class of $\delta^*$ when $d_p$ grows sufficiently slowly. We leave the verification of this conjecture, as well as the asymptotic distribution of $(L(\hat{\delta},\tau)-L(\delta^{*},\tau))$ for future research. 

\begin{rem}\label{rem:proof.strategy}
The proof strategy of Theorem \ref{thm:main} is significantly different from the existing approaches in the policy learning literature (c.f.  \citealt{kitagawa2018should,AW20}). For the oracle problem, one may follow the classic theory developed by \cite{vapnik1971uniform,vapnik1974theory} to bound
\[
\sup_{\delta\in\mathcal{D}}\mathbb{E}_{P^{n}}\left|L_{n}^{o}(\delta,\eta_0)-L^o(\delta,\eta_0)\right|,
\]
resulting in an order of
$O(1/\sqrt{n})$ in general even in the case of a parametric $\delta^*$. Instead, we exploit the weighted least squares structure embedded in $L^o_{n}$ and adapt (in Lemma \ref{lem:main.1}) a refined maximal inequality developed by \citet[][Theorem 2]{KOHLER20001}, leading to a  convergence rate for the oracle that can be as fast as $O(1/n)$. For the remainder estimation error part, one possibility is to follow 
\citet[][Section 3.2]{AW20} and control the estimation error uniformly over all rules in $\mathcal{D}$. This approach, however, would only lead to a rate of $o(1/\sqrt{n})$ even with a parametric $\delta^*$,  much slower than our result of  $O(R_{B,n}+R_{V,n})$ even without margin condition. We, instead, utilize the fact that both \eqref{eq:our.proposal} and \eqref{eq:oracle} have explicit solutions in large sample that  satisfy certain first order optimality conditions, which allows us to derive a faster rate (Lemma \ref{lem:main.2}). These nice structures for the remainder estimation errors are only valid in large samples. Therefore, our results are asymptotic in nature, as opposed to the finite sample performance guarantee in \cite{kitagawa2018should}.
\end{rem}

\begin{rem} \label{rem:plug.in}Currently, it is not  entirely clear  to what extent our debiased approach is strictly needed for some of the results in Theorem \ref{thm:main} to hold. Indeed, debiasing is costly: $\hat{L}_n^o(\delta)$ is a low-bias, but more noisy estimator of $L(\delta, \tau)$ due to the indefiniteness of $\hat{A}_n$, which may compromise the finite-sample performance of debiasing. A natural alternative is to solve 
\begin{equation}
\inf_{\delta\in\mathcal{D}}\hat{L}_{n}(\delta),\label{eq:plug-in.1}
\end{equation}
where 
\begin{equation}
\hat{L}_{n}(\delta):=\frac{1}{n}\sum_{i=1}^{n}\hat{\tau}^2(X_{i})\left(\mathbf{1}\{\hat{\tau}(X_{i})\geq0\}-\delta(W_{i})\right)^{2},  
\end{equation}
and $\hat{\tau}$ is any estimator of $\tau$ that may or may not be cross-fitted. This plug-in approach maintains the positive semi-definiteness of the associated Hessian matrix. It is straightforward to extend our theory and establish the oracle rate for this plug-in approach, which would be the same as $R_{n,O}$.  Analogous analyses (c.f. proof of Lemma \ref{lem:main.2}) imply that the remainder estimation error is determined asymptotically by 
\begin{equation}\label{eq:plug.in.bias}
\left(\mathbb{E}_{P^{n}}\left\Vert \hat{A}_{n}^{\text{plug-in}}-A_{n}^{\text{plug-in}}\right\Vert ^{2}+\mathbb{E}_{P^{n}}\left\Vert \hat{B}_{n}^{\text{plug-in}}-B_{n}^{\text{plug-in}}\right\Vert ^{2}\right),   
\end{equation}
where
\begin{align*}
A_{n}^{\text{{plug-in}}}&:=\frac{1}{n}\sum_{i=1}^{n}\tau^{2}(X_{i})p(W_{i})p(W_{i})^{\prime},\\
\hat{A}_{n}^{\text{{plug-in}}}&:=\frac{1}{n}\sum_{i=1}^{n}\hat{\tau}^{2}(X_{i})p(W_{i})p(W_{i})^{\prime},\\
B_{n}^{\text{{plug-in}}}&:=\frac{1}{n}\sum_{i=1}^{n}\tau^{2}(X_{i})p(W_{i})\mathbf{1}\left\{ \tau(X_{i})\geq0\right\},\\
\hat{B}_{n}^{\text{{plug-in}}}&:=\frac{1}{n}\sum_{i=1}^{n}\hat{\tau}^{2}(X_{i})p_(W_{i})\mathbf{1}\left\{ \hat{\tau}(X_{i})\geq0\right\}.
\end{align*}
If cross-fitting is used, \eqref{eq:plug.in.bias} in general presents an asymptotic bias larger  than $R_{B,n}$ (c.f. Lemmas \ref{lem:remainder.5} and \ref{lem:remainder.6}). However, if cross-fitting is not used and conditional on the specific structure of the estimator $\hat{\tau}$, we suspect the asymptotic bias in \eqref{eq:plug.in.bias}  may be as fast as $R_{B,n}$, in light of the classic ``low bias'' results of certain plug-in approaches in the semiparametric estimation literature, e.g., \cite*{ai2003efficient,chen2003estimation,hirano2003efficient}.  Whether the plug-in approach preserves the same asymptotic remainder estimation error is an intricate but fascinating question that we leave for future research. In the empirical applications below, we present results with our main debiased approach as well as the plug-in alternative. 
\end{rem}

\section{Capacity constraint}\label{sec:capacity}

In this section, we consider a decision maker facing convex constraints for the allocation rules. As a leading case, suppose $W$ is discrete and a capacity constraint exists on how many people in the population can get treatment. With such capacity constraint, the problem is convex with
differentiable objective and constraint functions,  and the Slater's condition can be verified to hold. Therefore, the optimal solution is characterized by the well-known KKT condition (e.g., \citealt{boyd2004convex}, Chapter
5, p.244), as we show below.  
\begin{prop}\label{prop:capacity}
Suppose $W$ is discrete and takes values $\left\{ w_{j}\right\} _{j=1}^{J}$
with corresponding probabilities $\left\{ p_{j}\right\} _{j=1}^{J}$, where
$p_{j}>0$ for all $j=1,\ldots, J$. Consider solving (\ref{eq:population.optimal}) with $\alpha=2$
and a capacity constraint $\mathbb{E}[\delta(W)]\leq t$ for some
$0<t<1$. Let 
\begin{align*}
b_{j} & :=\mathbb{E}\left[\tau^{2}(X)\mathbf{1}\left\{ \tau(X)\geq0\right\} \mid W=w_{j}\right],\\
a_{j} & :=\mathbb{E}\left[\tau^{2}(X)\mid W=w_{j}\right].
\end{align*}
Wlog, suppose $a_{j}>0$ for all $j=1\ldots J$\footnote{The case of $a_{j}=0$ can be excluded as an action of 0 would be optimal and  not add to the capacity.}, and index groups
so that $b_{1}\geq b_{2}\ldots\geq b_{J}$. If the capacity constraint
is not binding (i.e., $\sum\limits _{j=1}^{J}(p_{j}b_{j}/a_{j})\leq t$),
then the unconstrained solution $\left\{ {b_{j}}/{a_{j}}\right\} _{j=1}^{J}$
is optimal. Otherwise, the optimal decision is 
\begin{align*}
\delta_{j}^{*}  =\frac{b_{j}}{a_{j}}-\frac{\lambda^{*}}{2a_{j}},\text{ for all }j\leq J^{*},\quad
\delta_{j}^{*} & =0,\text{ for all }j>J^{*},
\end{align*}
where $J^{*}\in\left\{ 1,\ldots,J\right\} $ and $\lambda^{*}\geq0$
are jointly determined such that
\[
\lambda^{*}=\frac{\sum_{j=1}^{J^{*}}\frac{p_{j}b_{j}}{a_{j}}-t}{\sum_{j=1}^{J^{*}}\frac{p_{j}}{2a_{j}}}.
\]
\end{prop}
Proposition \ref{prop:capacity} highlights an interesting insight:   with a capacity constraint, a regret-averse decision maker would reduce the fractional treatment for \emph{all} groups, possibly with some groups with smallest $b_j$ not treated at all if the capacity constraint is too severe. In contrast, when $\alpha=1$, the decision maker always prioritizes treating the $W$ groups with the largest  positive average treatment effect until the capacity constraint is filled, possibly with fractional allocation for the marginal group. In the hypothetical policy question  from \cite{resnjanskij2024can} considered in the introduction, suppose we have a capacity constraint that at most a $t\leq1$ fraction of the population can be offered with the mentoring program. Since there is only one $W$ group whose average treatment effect is positive, the optimal constrained rule is easy to calculate (see Table \ref{tab:constraint}). For example, if $\alpha=1$, the optimal rule is to treat $t$ fraction of the population; if $\alpha=2$, the optimal rule is to treat $t$ fraction of the population if $t<0.88$ and to treat 0.88 of the population if $t\geq0.88$ (as 0.88 is the unconstrained optimal which does not violate the capacity constraint). In this simple case with one $W$ group, $\alpha=1$ and $2$  would share the same optimal rule if $t<0.88$. 
\begin{table}[htbp]
\centering
\caption{Optimal allocation rule for the hypothetical policy question in \cite{resnjanskij2024can} with a capacity constraint}\label{tab:constraint}
\begin{tabular}{lccc}
\toprule
& \multicolumn{3}{c}{Atkinson inequality index} \\
 \cmidrule(lr){2-4}
Capacity constraint &  $\alpha=1$ & $\alpha=2$ & $\alpha=3$ \\
\midrule
$t\in[0.88,1]$  & $t$ & 0.88 & 0.82 \\
$t\in[0.82,0.88)$ &  $t$ &  $t$ & 0.82 \\
$t\in[0,0.82)$ &  $t$ &  $t$ &  $t$ \\
\bottomrule
\end{tabular}
\end{table}

In the setup of Proposition \ref{prop:capacity}, we can still learn the optimal constrained rule from data by solving \eqref{eq:our.proposal} and incorporating  additional constraints:\footnote{We do not need to impose  the constraints that  $\delta(w_{j})\leq1$ for $j=1,...,J$, as they  will be non-binding  at the true population constrained optimal rule.}
\begin{equation}\label{eq:add.constraint}
\frac{1}{n}\sum_{i=1}^{n}\delta(W_{i})\leq t,\delta(w_{j})\geq0,j=1,...,J,
\end{equation}
which is still a convex program with  differentiable objective and constraints and can be efficiently computed. However, establishing the statistical performance guarantee is more involved due to the known technical difficulty associated with not knowing  whether the constraints in \eqref{eq:add.constraint} are binding or not in general.

\section{Empirical applications}\label{sec:emp.app}

\subsection{Job Training Partnership Act (JTPA) Study}\label{sec:JTPA}

We revisit the experimental dataset of the National JTPA Study that aimed to
measure the benefit and cost of employment and training programs. Our sample consists of 9223 observations, in which the treatment $D$ was randomized to generate the applicants'  eligibility for receiving a mix of training, job-search assistance, and other
services provided by the JTPA. The outcome of interest $Y$ is the total individual earnings in the 30 months after program
assignment.\footnote{We take the intention-to-treat perspective. One may also consider an net-of-cost outcome, which would further deduct 774 dollars for each of  those assigned to treatment.} The study also collected a variety of the applicants' background information ($X$), some of which might be perceived  as sensitive, e.g.,  gender, race and marital
status. Following \cite{kitagawa2018should}, we consider a scenario in which a policymaker can only design treatment policies based on pre-program
years of education (``education'') and the pre-program earnings (``income'') --- these two variables become the $W$ in our setup.  As an illustration of our debiased approach, we choose $K=5$ and  estimate $\gamma_1$ and  $\gamma_0$ via lasso with 10-fold cross-validation, with all
interactions and squared terms of $X$. We estimate $\omega_1$ and $\omega_0$ with the minimum distance estimator with a Tikhonov
penalty \eqref{eq:minimum.distance.penalty}, where the tuning parameter is selected via cross validation. See Section \ref {sec:compute} for computational details and our algorithm to calculate $\hat{\xi}(Z_i)$.
\begin{figure}[!t]]
\includegraphics[width=0.49\linewidth]{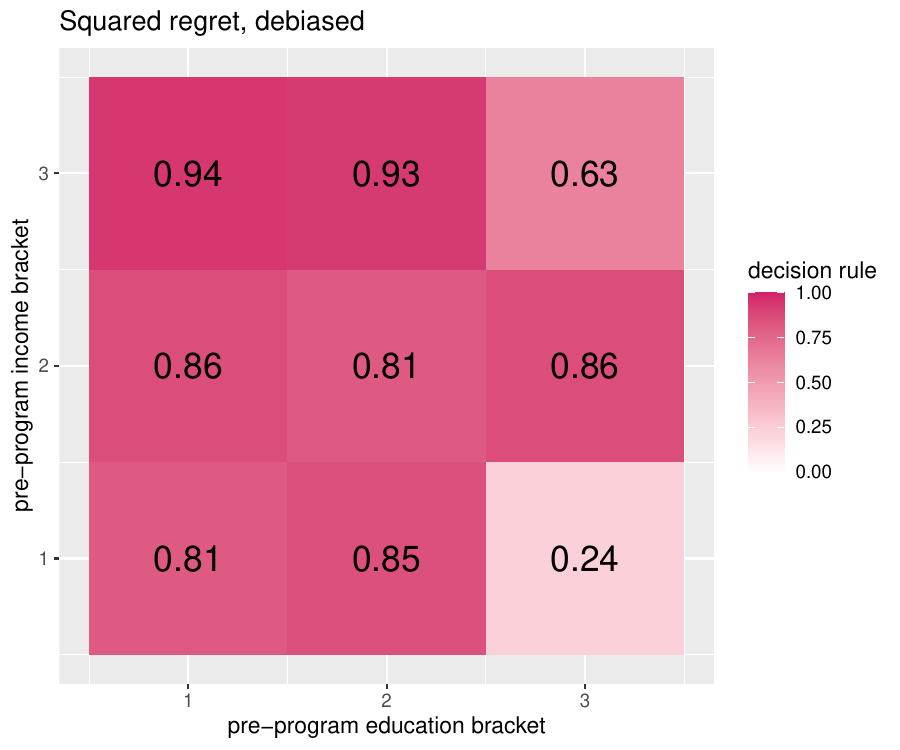}
\includegraphics[width=0.49\linewidth]{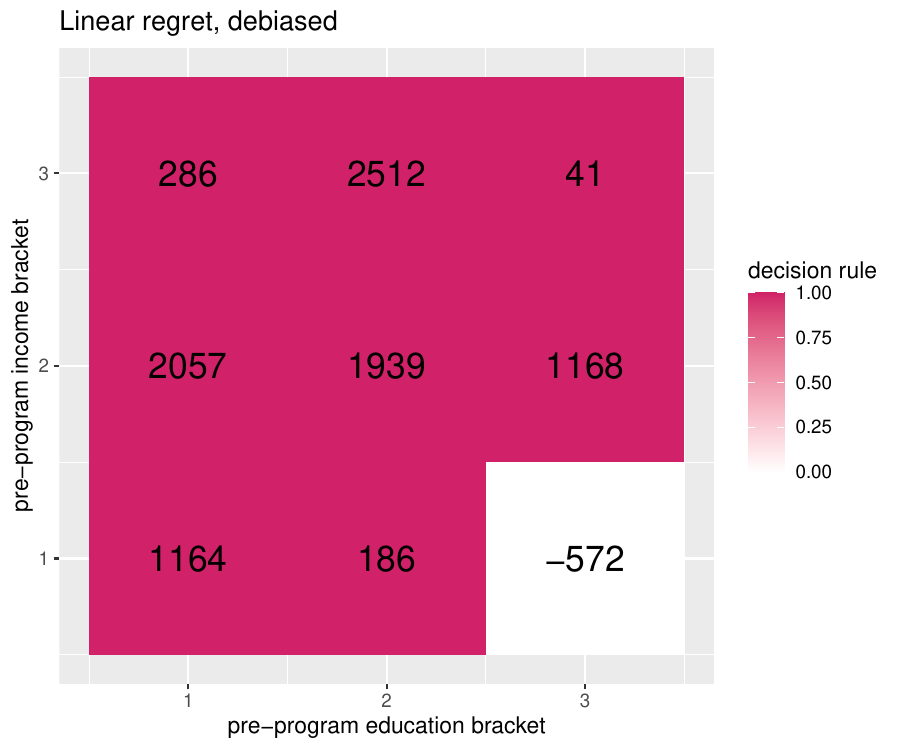}
\medskip
\includegraphics[width=0.49\linewidth]{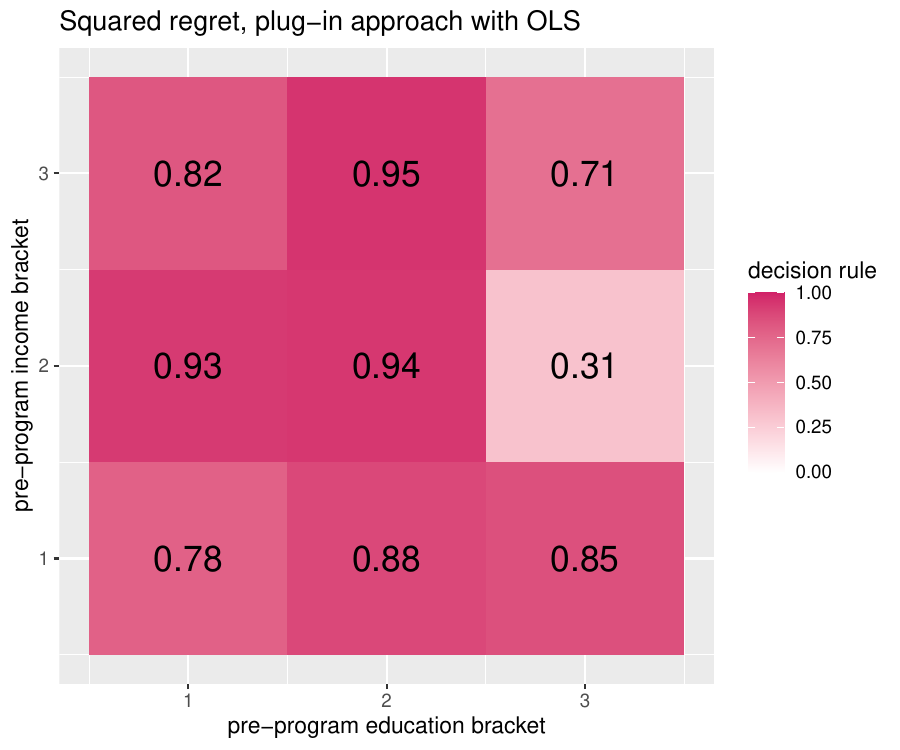}
\includegraphics[width=0.49\linewidth]{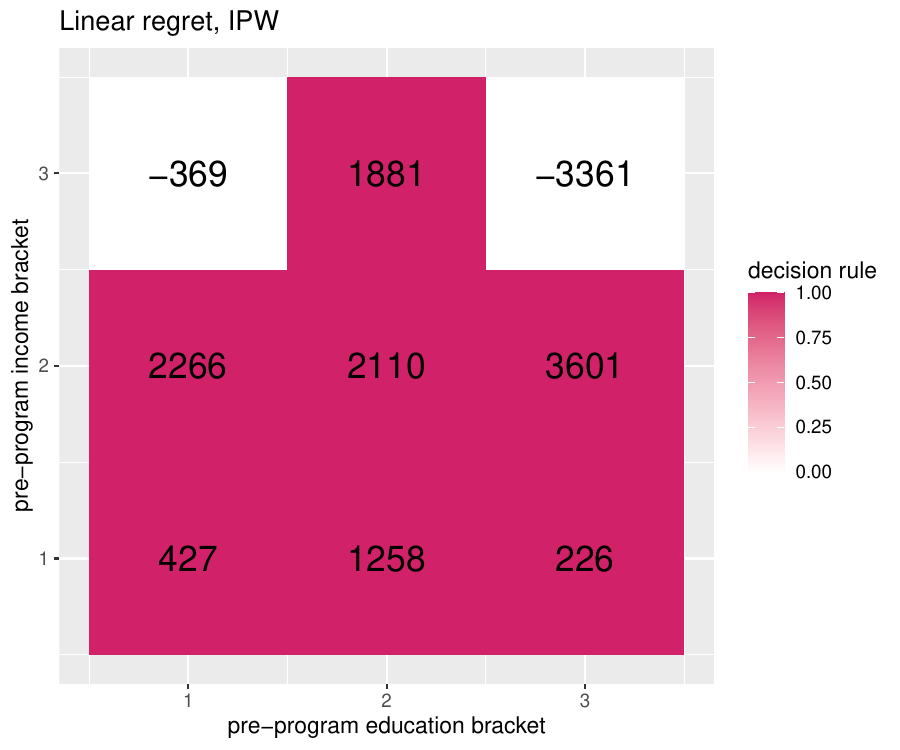}
{\footnotesize \textit{Notes}: Income brackets are defined according to pre-program  earnings as follows: 1 ($\leq\$220$),  2 ($>\$220$ and  $\leq\$3800$),  3 ($>\$3800$). Education brackets are defined according to pre-program years of schooling: 1 ($\leq11$, high school dropouts); 2 ($=12$,  high school graduates); 3 ($>12$, with higher education. Top left: our squared-regret approach with debiasing; Top right: linear regret approach, with $CATE(W)$ estimated with debiasing  and $\eta_0$ fitted with lasso. Down left: squared-regret approach with $\gamma_1$ and  $\gamma_0$ estimated by OLS; Down right: linear regret approach, with $CATE(W)$ estimated with inverse propensity score weighting with the known propensity score of $2/3$. The numbers in each of the brackets in the left two graphs refer to the corresponding estimated treatment assignment fractions, while the numbers in each of the brackets in the right two graphs refer to the estimated $CATE(W)$.
}
\caption{JTPA: estimated simple bracket rules }\label{fig:JTPA.bracket}
\end{figure}

To start with, suppose the policymaker is interested in implementing a simple rule based on nine pre-determined income and education brackets (defined in the note of Figure \ref{fig:JTPA.bracket}). In this case, $W$ is discrete, and the optimal rule can be solved bracket-by-bracket. Figure \ref{fig:JTPA.bracket} reports the results for our squared regret debiased approach,   a squared regret plug-in approach, as well as two  linear regret approaches. 
Although the majority of the estimated CATEs conditional on $W$ are positive, the fractional nature of our estimated policies reveals plausible and considerable treatment effect heterogeneity at the $X$ level for some brackets, demonstrating the value of our approach compared to the standard mean regret paradigm. For example, for those units in education bracket 3 and income bracket 3, the debiased CATE estimate is slightly positive (41), implying all units shall be treated. However, an IPW estimate of the same CATE (-3361) would imply that no-one should be treated. For this group of workers, our squared regret debiased optimal policy is 0.63, indicating that workers in the high-education and high-income bracket may display drastically different treatment effects 
from each other, which can lead to a high regret-inequality should a non-fractional policy be applied. The pattern of the squared-regret policy estimates between the plug-in and debiased approaches are similar for many brackets, although some disparities do exist.

Next, we consider a policymaker designing a class of linear sieve policies based on education and income. As an illustration, for each of the education and income variables, we create cubic B-splines with a total of 5 degrees of freedom. The multivariate B-splines are then generated as tensor products of the two. We present estimated policies of the debiased and plug-in approaches for selected values of the income and education variables in Figure \ref{fig:JTPA.heatmap}. Both approaches again indicate considerable effect heterogeneity in the population, although  disparities remain in the exact fitted values for some $W$ groups. 

\begin{figure}[http]
\centering
\includegraphics[width=0.49\linewidth]{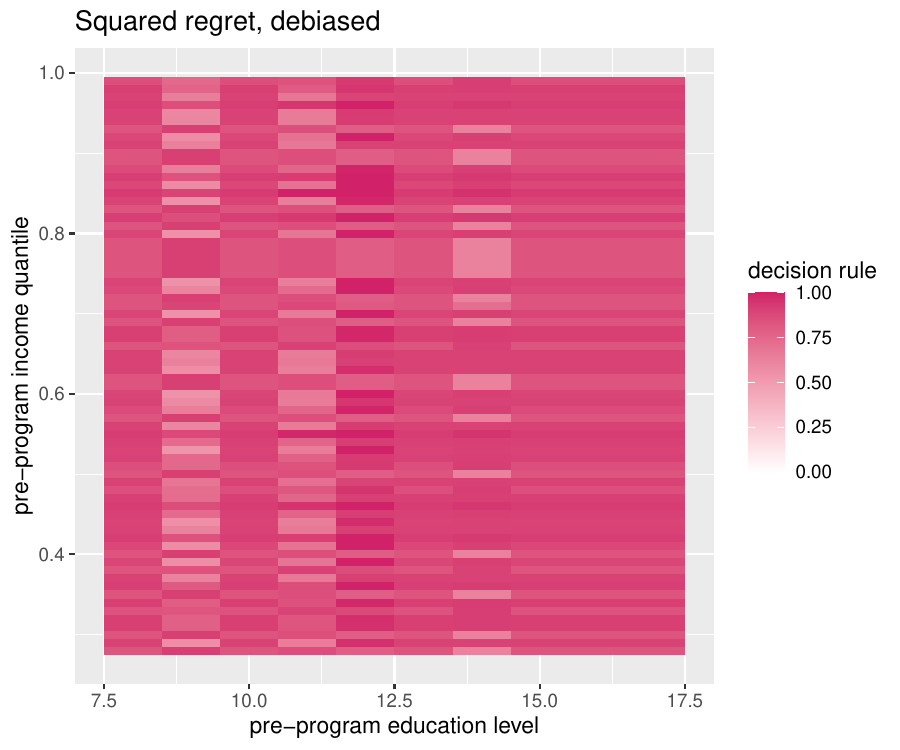}
\includegraphics[width=0.49\linewidth]{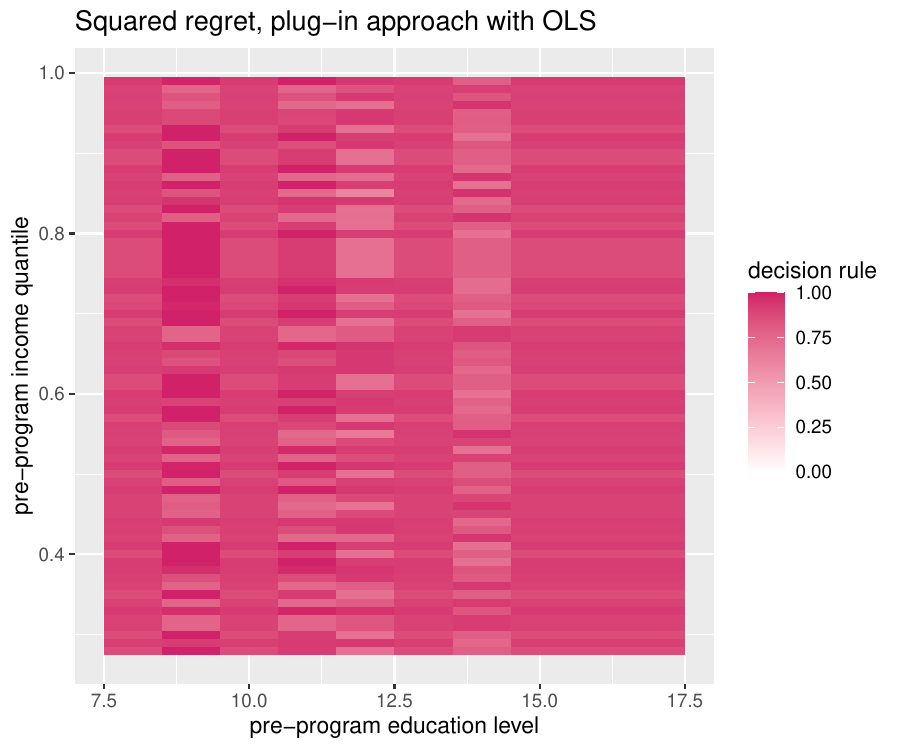}
\caption{JTPA: estimated linear sieve policy rules with multivariate B-splines}
\label{fig:JTPA.heatmap}
\end{figure}

\subsection{International Stroke Trial}
As a second application and to demonstrate the value of our approach in medical studies, we analyze the International Stroke Trial (IST, \citealt{international1997international}) that assessed the effect of aspirin and other treatments  for patients with presumed acute ischemic stroke. Following \cite*{yadlowsky2025evaluating}, we focus on the treatment of aspirin only on the outcome of whether there is death
or dependency at 6 months. This leaves us with a sample of 18304 patients  from over 30 countries. For each patient, we also observe a vector of 39 covariates ($X$), including  their gender, age as well as some of their medical history and geographical information. In this exercise, we consider a hypothetical scenario in which a doctor determines whether a patient should be treated with aspirin only based on their age ($W$). The aim is to assess whether our approach would generate significantly different treatment fractions compared to the mean regret approach.

\begin{figure}[http]
\centering
\includegraphics[width=0.7\linewidth]{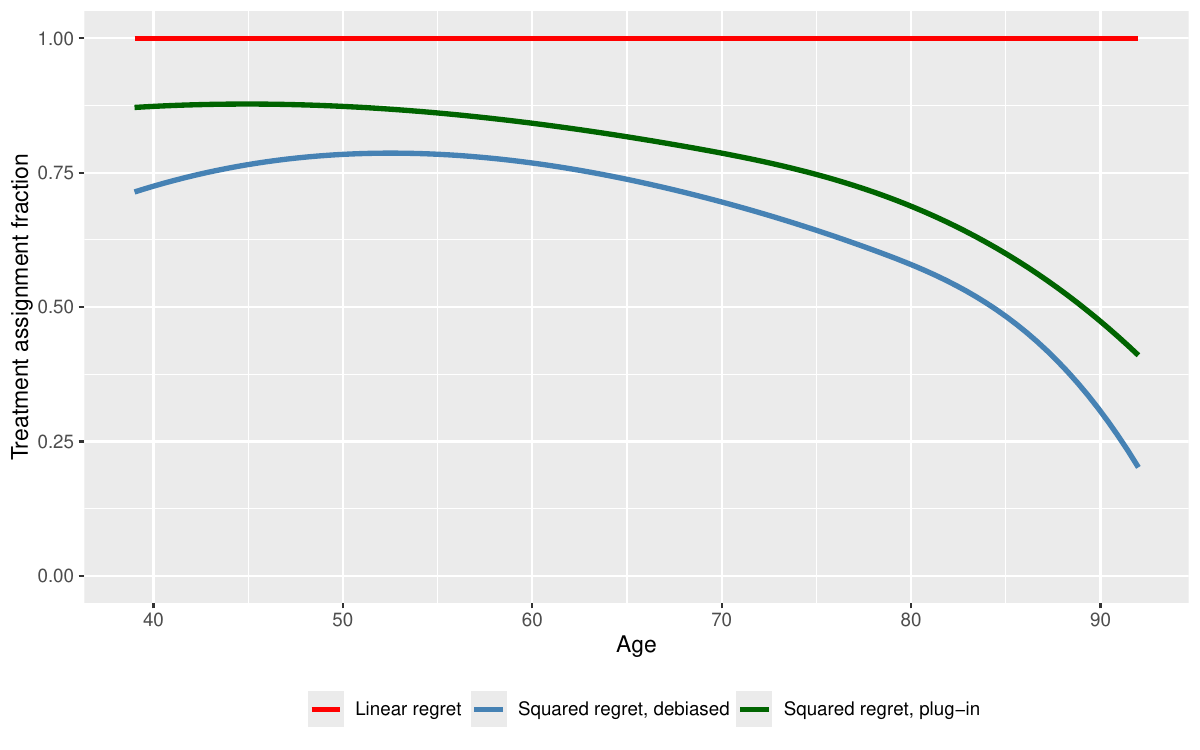}
    \caption{IST: estimated optimal treatment fractions based on age with B-splines}
    \label{fig:rate}
\end{figure}

We estimate the nuisance parameters with the same methodology described in Section \ref{sec:JTPA}. For the age variable, we create a cubic B-spline with a total degree freedom of 6. Figure \ref{fig:rate} reports our estimated optimal treatment fractions for patients with age between 39 and 92.  As the CATEs are all positive for all considered age groups, a linear regret approach will recommend to treat everyone for all age groups. In sharp contrast, the estimated treatment fractions with our debiased approach is between  25\% and 75\% for most age values, revealing considerable treatment heterogeneity among those sharing the same ages. The treatment proportion is especially close to 0.5 for patients with age between 75 to 85, suggesting that  a singleton ``treat everyone''  rule would potentially harm significantly some of those patients, leaving some of them with large regrets. The fitted curve with the plug-in approach shares the same downward sloping pattern as the debiased approach, although the estimated treatment proportions is slightly higher for all age groups. In light of these findings, we think that our squared regret approach to policy learning reveals additional important information that cannot be assessed with the mean regret approach alone.

\bibliographystyle{ecta}
\bibliography{bib}

@article{manski2004statistical,
  title={Statistical treatment rules for heterogeneous populations},
  author={Manski, Charles F},
  journal={Econometrica},
  volume={72},
  number={4},
  pages={1221--1246},
  year={2004},
  publisher={Wiley Online Library}
}

@article{kitagawa2018should,
  title={Who should be treated? {E}mpirical welfare maximization methods for treatment choice},
  author={Kitagawa, Toru and Tetenov, Aleksey},
  journal={Econometrica},
  volume={86},
  number={2},
  pages={591--616},
  year={2018},
  publisher={Wiley Online Library}
}

@article{tetenov2012statistical,
  title={Statistical treatment choice based on asymmetric minimax regret criteria},
  author={Tetenov, Aleksey},
  journal={Journal of Econometrics},
  volume={166},
  number={1},
  pages={157--165},
  year={2012},
  publisher={Elsevier}
}

@article{stoye2009minimax,
  title={Minimax regret treatment choice with finite samples},
  author={Stoye, J{\"o}rg},
  journal={Journal of Econometrics},
  volume={151},
  number={1},
  pages={70--81},
  year={2009},
  publisher={Elsevier}
}

@article{stoye2012minimax,
  title={Minimax regret treatment choice with covariates or with limited validity of experiments},
  author={Stoye, J{\"o}rg},
  journal={Journal of Econometrics},
  volume={166},
  number={1},
  pages={138--156},
  year={2012},
  publisher={Elsevier}
}

@Article{Dehejia2005,
  author    = {Dehejia},
  title     = {Program evaluation as a decision problem},
  journal   = {Journal of Econometrics},
  year      = {2005},
  volume    = {125},
  pages     = {141-173},
  owner     = {Toru},
  timestamp = {2015.01.26},
}

@Article{HiranoPorter2009,
  author    = {Hirano, Keisuke and Porter, Jack R.},
  title     = {Asymptotics for statistical treatment rules},
  journal   = {Econometrica},
  year      = {2009},
  volume    = {77},
  number    = {5},
  pages     = {1683--1701},
  issn      = {1468-0262},
  doi       = {10.3982/ECTA6630},
  publisher = {Blackwell Publishing Ltd},
  url       = {http://dx.doi.org/10.3982/ECTA6630},
}

@Article{MT17,
  author    = {Eric Mbakop and Max Tabord-Meehan},
  title     = {Model selection for treatment choice: Penalized welfare maximization},
  journal   = {Econometrica},
  year      = {2021},
  volume    = {89},
  number    = {2},
  pages     = {825--848},
  owner     = {Toru},
  timestamp = {2017.03.05},
}

@Article{BhattacharyaDupas2012,
  author  = {Debopam Bhattacharya and Pascaline Dupas},
  title   = {Inferring welfare maximizing treatment assignment under budget constraints},
  journal = {Journal of Econometrics},
  year    = {2012},
  volume  = {167},
  number  = {1},
  pages   = {168-196},
  issn    = {0304-4076},
  doi     = {http://dx.doi.org/10.1016/j.jeconom.2011.11.007},
  url     = {http://www.sciencedirect.com/science/article/pii/S0304407611002697},
}

@Article{Manski2000,
  author  = {Manski, Charles F},
  title   = {Identification problems and decisions under ambiguity: empirical analysis of treatment response and normative analysis of treatment choice},
  journal = {Journal of Econometrics},
  year    = {2000},
  volume  = {95},
  pages   = {415-442},
}

@Article{Manski2007,
  author  = {Manski, Charles F},
  title   = {Minimax-regret treatment choice with missing outcome data},
  journal = {Journal of Econometrics},
  year    = {2007},
  volume  = {139},
  pages   = {105-115},
}

@Article{AW20,
  author  = {Susan Athey and Stefan Wager},
  title   = {Efficient policy learning with observational data},
  journal = {Econometrica},
  year    = {2021},
  volume  = {89},
  number  = {1},
  pages   = {133-161},
}

@Article{KST21,
  author  = {Toru Kitagawa and Shosei Sakaguchi and Aleksey Tetenov},
  title   = {Constrained classification and policy learning},
  journal = {arXiv preprint},
  year    = {2021},
}

@article{manski2007admissible,
  title={Admissible treatment rules for a risk-averse planner with experimental data on an innovation},
  author={Manski, Charles F and Tetenov, Aleksey},
  journal={Journal of Statistical Planning and Inference},
  volume={137},
  number={6},
  pages={1998--2010},
  year={2007},
  publisher={Elsevier}
}

@article{manski2002treatment,
  title={Treatment choice under ambiguity induced by inferential problems},
  author={Manski, Charles F},
  journal={Journal of Statistical Planning and Inference},
  volume={105},
  number={1},
  pages={67--82},
  year={2002},
  publisher={Elsevier}
}

@techreport{schlag2006eleven,
  title={{ELEVEN} - Tests needed for a Recommendation},
  author={Schlag, Karl H},
  year={2006},
  institution={European University Institute Working Paper, ECO No. 2006/2},
  note={\url{https://cadmus.eui.eu/bitstream/handle/1814/3937/ECO2006-2.pdf}}
}

@article{manski2014quantile,
 title={Statistical decision theory respecting stochastic dominance},
  author={Manski, Charles F and Tetenov, Aleksey},
  journal={The Japanese Economic Review},
  pages={1--23},
  year={2023},
  publisher={Springer}
}

@unpublished{ishihara2021,
  title={Evidence Aggregation for Treatment Choice},
  author={Takuya Ishihara and Toru Kitagawa},
    year={2021},
  note={arXiv:2108.06473 [econ.EM], \url{https://doi.org/10.48550/arXiv.2108.06473}
}
}

@unpublished{yata2021,
  title={Optimal Decision Rules Under Partial Identification},
  author={Kohei Yata},
  year={2021},
  note={arXiv:2111.04926 [econ.EM], \url{https://doi.org/10.48550/arXiv.2111.04926}
}
}

@unpublished{christensen2020,
  title={Robust Forecasting},
  author={T. Christensen and H.R. Moon and Frank Schorfheide},
  year={2020},
  note={arXiv:2011.03153 [econ.EM], \url{https://doi.org/10.48550/arXiv.2011.03153}
}
}

@Article{KT21,
  title={Equality-minded treatment choice},
  author={Kitagawa, Toru and Tetenov, Aleksey},
  journal={Journal of Business \& Economic Statistics},
  number = {2},
  pages = {561--574},
  volume = {39},
  year={2021},
  publisher={Taylor \& Francis}
}

@article{manski2016sufficient,
  title={Sufficient trial size to inform clinical practice},
  author={Manski, Charles F and Tetenov, Aleksey},
  journal={Proceedings of the National Academy of Sciences},
  volume={113},
  number={38},
  pages={10518--10523},
  year={2016},
  publisher={National Acad Sciences}
}

@book{manski2005social,
  title={Social choice with partial knowledge of treatment response},
  author={Manski, Charles F},
  year={2005},
  publisher={Princeton University Press}
}

@book{manski2007identification,
  title={Identification for prediction and decision},
  author={Manski, Charles F},
  year={2007},
  publisher={Harvard University Press}
}

@article{manski20092009,
  title={The 2009 {Lawrence R. Klein} Lecture: {D}iversified treatment under ambiguity},
  author={Manski, Charles F},
  journal={International Economic Review},
  volume={50},
  number={4},
  pages={1013--1041},
  year={2009},
  publisher={Wiley Online Library}
}

@article{manski2022identification,
  title={Identification and statistical decision theory},
  author={Manski, Charles F},
  journal={Econometric Theory},
  pages={1--17},
  year={2022},
  publisher={Cambridge University Press}
}

@article{hayashi2008regret,
  title={Regret aversion and opportunity dependence},
  author={Hayashi, Takashi},
  journal={Journal of economic theory},
  volume={139},
  number={1},
  pages={242--268},
  year={2008},
  publisher={Elsevier}
}

@article{stoye2011axioms,
  title={Axioms for minimax regret choice correspondences},
  author={Stoye, J{\"o}rg},
  journal={Journal of Economic Theory},
  volume={146},
  number={6},
  pages={2226--2251},
  year={2011},
  publisher={Elsevier}
}

@article{ishihara2023bandwidth,
  title={Bandwidth selection for treatment choice with binary outcomes},
  author={Ishihara, Takuya},
  journal={The Japanese Economic Review},
  pages={1--11},
  year={2023},
  publisher={Springer}
}

@article{kitagawa2023treatment,
  title={Treatment choice, mean square regret and partial identification},
  author={Kitagawa, Toru and Lee, Sokbae and Qiu, Chen},
  journal={The Japanese Economic Review},
  pages={1--30},
  year={2023},
  publisher={Springer}
}

@unpublished{montielolea2023decision,
  title={Decision Theory for Treatment Choice Problems with Partial Identification},
  author={Montiel Olea, Jos{\'e} Luis  and Qiu, Chen and Stoye, J{\"o}rg},
  year={2023},
  note={arXiv:2312.17623 [econ.EM], \url{
https://doi.org/10.48550/arXiv.2312.17623
  }
}
}

@unpublished{munro2020learning,
  title={Treatment Allocation with Strategic Agents},
  author={Munro, Evan},
  year={2023},
  note= {arXiv:2011.06528 [econ.EM], \url{
https://doi.org/10.48550/arXiv.2011.06528
}
}
}

@article{kitagawa2022treatment,
  title={Treatment choice with nonlinear regret},
  author={Kitagawa, Toru and Lee, Sokbae and Qiu, Chen},
  journal={arXiv preprint arXiv:2205.08586},
  year={2022}
}

@article{chernozhukov2018double,
  author = {Chernozhukov, Victor and Chetverikov, Denis and Demirer, Mert and Duflo, Esther and Hansen, Christian and Newey, Whitney and Robins, James},
    title = {Double/debiased machine learning for treatment and structural parameters},
    journal = {The Econometrics Journal},
    volume = {21},
    number = {1},
    pages = {C1-C68},
    year = {2018},
    month = {01},
  issn = {1368-4221},
    doi = {10.1111/ectj.12097},
    url = {https://doi.org/10.1111/ectj.12097},
}

@article{newey1994asymptotic,
  title={The asymptotic variance of semiparametric estimators},
  author={Newey, Whitney K},
  journal={Econometrica: Journal of the Econometric Society},
  pages={1349--1382},
  year={1994},
  publisher={JSTOR}
}

@article{KOHLER20001,
title = {Inequalities for uniform deviations of averages from expectations with applications to nonparametric regression},
journal = {Journal of Statistical Planning and Inference},
volume = {89},
number = {1},
pages = {1-23},
year = {2000},
issn = {0378-3758},
doi = {https://doi.org/10.1016/S0378-3758(99)00215-3},
url = {https://www.sciencedirect.com/science/article/pii/S0378375899002153},
author = {Michael Kohler},
keywords = {Uniform laws of large numbers, Least squares, Polynomial splines, Regression estimate, Rate of convergence, Complexity regularization},
abstract = {New inequalities for uniform deviations of averages from their expectations are derived. An application of them in the context of estimation of regression functions from i.i.d. data shows that the L2 error of suitable defined least-squares spline estimates converges to zero with rate n−2p/(2p+d) for every distribution of (X,Y) with X∈[0,1]d a.s., |Y| bounded a.s. and regression function p-smooth. The definition of the estimates does not depend on p, hence they adapt automatically to the smoothness of the regression function. The result is valid without any assumption on X besides X∈[0,1]d a.s., in particular it is not required that X has a density with respect to the Lebesgue–Borel measure.}
}

@article{manski2023using,
  title={Using measures of race to make clinical predictions: Decision making, patient health, and fairness},
  author={Manski, Charles F and Mullahy, John and Venkataramani, Atheendar S},
  journal={Proceedings of the National Academy of Sciences},
  volume={120},
  number={35},
  pages={e2303370120},
  year={2023},
  publisher={National Acad Sciences}
}

@article{manski2022patient,
  title={Patient-centered appraisal of race-free clinical risk assessment},
  author={Manski, Charles F},
  journal={Health Economics},
  volume={31},
  number={10},
  pages={2109--2114},
  year={2022},
  publisher={Wiley Online Library}
}

@article{chernozhukov2018generic,
  title={Generic machine learning inference on heterogeneous treatment effects in randomized experiments, with an application to immunization in India},
  author={Chernozhukov, Victor and Demirer, Mert and Duflo, Esther and Fernandez-Val, Ivan},
  year={2024},
  journal ={Econometrica, forthcoming}
}

@article{vapnik1971uniform,
  title={On the Uniform Convergence of Relative Frequencies of Events to Their Probabilities},
  author={Vapnik, VN and Chervonenkis, A Ya},
  journal={Theory of Probability and its Applications},
  volume={16},
  number={2},
  pages={264},
  year={1971},
  publisher={Society for Industrial and Applied Mathematics}
}

@misc{vapnik1974theory,
  title={Theory of pattern recognition},
  author={Vapnik, Vladimir and Chervonenkis, Alexey},
  year={1974},
  publisher={Nauka, Moscow}
}

@article{chen2008semiparametric,
  title={Semiparametric efficiency in GMM models with auxiliary data},
journal={The Annals of Statistics},
  volume={36},
  number={2},
  pages={808--843},
  author={Chen, Xiaohong and Hong, Han and Tarozzi, Alessandro},
  year={2008}
}

@book{geer2000empirical,
  title={Empirical Processes in M-estimation},
  author={van de Geer, Sara A},
  volume={6},
  year={2000},
  publisher={Cambridge university press}
}

@article{atkinson1970measurement,
  title={On the measurement of inequality},
  author={Atkinson, Anthony B},
  journal={Journal of Economic Theory},
  volume={2},
  number={3},
  pages={244--263},
  year={1970}
}

@article{terschuur2024locally,
  title={Locally Robust Policy Learning: Inequality, Inequality of Opportunity and Intergenerational Mobility},
  author={Terschuur, Jo{\"e}l},
  year={2024}
}

@article{briggs2022healing,
  title={Healing the past, reimagining the present, investing in the future: What should be the role of race as a proxy covariate in health economics informed health care policy?},
  author={Briggs, Andrew H},
  journal={Health Economics},
  volume={31},
  number={10},
  pages={2115--2119},
  year={2022},
  publisher={Wiley Online Library}
}

@article{crippa2024regret,
  title={Regret Analysis in Threshold Policy Design},
  author={Crippa, Federico},
  journal={arXiv preprint arXiv:2404.11767},
  year={2024}
}

@article{tan2022rise,
  title={Rise: Robust individualized decision learning with sensitive variables},
  author={Tan, Xiaoqing and Qi, Zhengling and Seymour, Christopher and Tang, Lu},
  journal={Advances in Neural Information Processing Systems},
  volume={35},
  pages={19484--19498},
  year={2022}
}

@misc{PCAST,
    author={{President's Council of Advisors on Science and Technology}},
    title ={Priorities for personalized medicine},
    year = {2008},
    howpublished = {\url{http://oncotherapy.us/pdf/PM.Priorities.pdf}}
}

@techreport{guggenberger2024minimax,
  title={Minimax regret treatment rules with finite samples when a quantile is the object of interest},
  author={Guggenberger, Patrik and Mehta, Nihal and Pavlov, Nikita},
  year={2024},
  institution={Tech. rep., The Pennsylvania State University}
}

@techreport{chen2024note,
  title={A note on minimax regret rules with multiple treatments in finite samples},
  author={Chen, Haoning and Guggenberger, Patrik},
  year={2024},
  institution={Discussion paper, The Pennsylvania State University}
}

@article{masten2023minimax,
  title={Minimax-regret treatment rules with many treatments},
  author={Masten, Matthew A},
  journal={The Japanese Economic Review},
  volume={74},
  number={4},
  pages={501--537},
  year={2023},
  publisher={Springer}
}

@article{adjaho2022externally,
  title={Externally valid treatment choice},
  author={Adjaho, Christopher and Christensen, Timothy},
  journal={arXiv preprint arXiv:2205.05561},
  volume={1},
  year={2022}
}

@article{cui2023individualized,
  title={Individualized Treatment Allocations with Distributional Welfare},
  author={Cui, Yifan and Han, Sukjin},
  journal={arXiv preprint arXiv:2311.15878},
  year={2023}
}

@article{han2023optimal,
  title={Optimal dynamic treatment regimes and partial welfare ordering},
  author={Han, Sukjin},
  journal={Journal of the American Statistical Association},
  pages={1--11},
  year={2023},
  publisher={Taylor \& Francis}
}

@article{viviano2024policy,
  title={Policy targeting under network interference},
  author={Viviano, Davide},
  journal={Review of Economic Studies},
  pages={rdae041},
  year={2024},
  publisher={Oxford University Press UK}
}

@article{viviano2024fair,
  title={Fair policy targeting},
  author={Viviano, Davide and Bradic, Jelena},
  journal={Journal of the American Statistical Association},
  volume={119},
  number={545},
  pages={730--743},
  year={2024},
  publisher={Taylor \& Francis}
}

@article{sun2021empirical,
  title={Empirical welfare maximization with constraints},
  author={Sun, Liyang},
  journal={arXiv preprint arXiv:2103.15298},
  volume={2},
  year={2021}
}

@book{gyorfi2006distribution,
  title={A distribution-free theory of nonparametric regression},
  author={Gy{\"o}rfi, L{\'a}szl{\'o} and Kohler, Michael and Krzyzak, Adam and Walk, Harro},
  year={2006},
  publisher={Springer Science \& Business Media}
}

@techreport{qiu2022approximate,
  title={Approximate Minimax Estimation of Average Regression Functionals},
  author={Qiu, Chen},
  year={2022},
  institution={working paper}
}

@article{tropp2015introduction,
  title={An introduction to matrix concentration inequalities},
  author={Tropp, Joel A},
  journal={Foundations and Trends{\textregistered} in Machine Learning},
  volume={8},
  number={1-2},
  pages={1--230},
  year={2015},
  publisher={Now Publishers, Inc.}
}

@article{chen2012masked,
  title={The masked sample covariance estimator: an analysis using matrix concentration inequalities},
  author={Chen, Richard Y and Gittens, Alex and Tropp, Joel A},
  journal={Information and Inference: A Journal of the IMA},
  volume={1},
  number={1},
  pages={2--20},
  year={2012},
  publisher={OUP}
}

@article{stone1982optimal,
  title={Optimal global rates of convergence for nonparametric regression},
  author={Stone, Charles J},
  journal={The annals of statistics},
  pages={1040--1053},
  year={1982},
  publisher={JSTOR}
}

@article{hirano2003efficient,
  title={Efficient estimation of average treatment effects using the estimated propensity score},
  author={Hirano, Keisuke and Imbens, Guido W and Ridder, Geert},
  journal={Econometrica},
  volume={71},
  number={4},
  pages={1161--1189},
  year={2003},
  publisher={Wiley Online Library}
}

@article{ai2003efficient,
  title={Efficient estimation of models with conditional moment restrictions containing unknown functions},
  author={Ai, Chunrong and Chen, Xiaohong},
  journal={Econometrica},
  volume={71},
  number={6},
  pages={1795--1843},
  year={2003},
  publisher={Wiley Online Library}
}

@article{chen2003estimation,
  title={Estimation of semiparametric models when the criterion function is not smooth},
  author={Chen, Xiaohong and Linton, Oliver and Van Keilegom, Ingrid},
  journal={Econometrica},
  volume={71},
  number={5},
  pages={1591--1608},
  year={2003},
  publisher={Wiley Online Library}
}

@article{newey1994series,
  title={Series estimation of regression functionals},
  author={Newey, Whitney K},
  journal={Econometric Theory},
  volume={10},
  number={1},
  pages={1--28},
  year={1994},
  publisher={Cambridge University Press}
}

@article{newey1999nonparametric,
  title={Nonparametric estimation of triangular simultaneous equations models},
  author={Newey, Whitney K and Powell, James L and Vella, Francis},
  journal={Econometrica},
  volume={67},
  number={3},
  pages={565--603},
  year={1999},
  publisher={Wiley Online Library}
}

@article{huang2003local,
  title={Local asymptotics for polynomial spline regression},
  author={Huang, Jianhua Z},
  journal={The Annals of Statistics},
  volume={31},
  number={5},
  pages={1600--1635},
  year={2003},
  publisher={Institute of Mathematical Statistics}
}

@article{belloni2015some,
  title={Some new asymptotic theory for least squares series: Pointwise and uniform results},
  author={Belloni, Alexandre and Chernozhukov, Victor and Chetverikov, Denis and Kato, Kengo},
  journal={Journal of Econometrics},
  volume={186},
  number={2},
  pages={345--366},
  year={2015},
  publisher={Elsevier}
}

@article{chen2015optimal,
  title={Optimal uniform convergence rates and asymptotic normality for series estimators under weak dependence and weak conditions},
  author={Chen, Xiaohong and Christensen, Timothy M},
  journal={Journal of Econometrics},
  volume={188},
  number={2},
  pages={447--465},
  year={2015},
  publisher={Elsevier}
}

@article{newey1997convergence,
  title={Convergence rates and asymptotic normality for series estimators},
  author={Newey, Whitney K},
  journal={Journal of econometrics},
  volume={79},
  number={1},
  pages={147--168},
  year={1997},
  publisher={Elsevier}
}

@article{ackerberg2014asymptotic,
  title={Asymptotic efficiency of semiparametric two-step GMM},
  author={Ackerberg, Daniel and Chen, Xiaohong and Hahn, Jinyong and Liao, Zhipeng},
  journal={Review of Economic Studies},
  volume={81},
  number={3},
  pages={919--943},
  year={2014},
  publisher={Oxford University Press}
}

@article{tsybakov2004optimal,
  title={Optimal aggregation of classifiers in statistical learning},
  author={Tsybakov, Alexander B},
  journal={The Annals of Statistics},
  volume={32},
  number={1},
  pages={135--166},
  year={2004},
  publisher={Institute of Mathematical Statistics}
}

@article{international1997international,
  title={The International Stroke Trial (IST): a randomised trial of aspirin, subcutaneous heparin, both, or neither among 19 435 patients with acute ischaemic stroke},
  author={International Stroke Trial Collaborative Group},
  journal={The Lancet},
  volume={349},
  number={9065},
  pages={1569--1581},
  year={1997},
  publisher={Elsevier}
}

@article{yadlowsky2025evaluating,
  title={Evaluating treatment prioritization rules via rank-weighted average treatment effects},
  author={Yadlowsky, Steve and Fleming, Scott and Shah, Nigam and Brunskill, Emma and Wager, Stefan},
  journal={Journal of the American Statistical Association},
  volume={120},
  number={549},
  pages={38--51},
  year={2025},
  publisher={Taylor \& Francis}
}

@article{hainmueller2012entropy,
  title={Entropy balancing for causal effects: A multivariate reweighting method to produce balanced samples in observational studies},
  author={Hainmueller, Jens},
  journal={Political Analysis},
  volume={20},
  number={1},
  pages={25--46},
  year={2012},
  publisher={Cambridge University Press}
}

@article{zubizarreta2015stable,
  title={Stable weights that balance covariates for estimation with incomplete outcome data},
  author={Zubizarreta, Jos{\'e} R},
  journal={Journal of the American Statistical Association},
  volume={110},
  number={511},
  pages={910--922},
  year={2015},
  publisher={Taylor \& Francis}
}

@article{athey2018approximate,
  title={Approximate residual balancing: debiased inference of average treatment effects in high dimensions},
  author={Athey, Susan and Imbens, Guido W and Wager, Stefan},
  journal={Journal of the Royal Statistical Society: Series B (Statistical Methodology)},
  volume={80},
  number={4},
  pages={597--623},
  year={2018},
  publisher={Wiley Online Library}
}

@article{velez2024asymptotic,
  title={On the Asymptotic Properties of Debiased Machine Learning Estimators},
  author={Velez, Amilcar},
  journal={arXiv preprint arXiv:2411.01864},
  year={2024}
}

@article{kock2023treatment,
  title={Treatment recommendation with distributional targets},
  author={Kock, Anders Bredahl and Preinerstorfer, David and Veliyev, Bezirgen},
  journal={Journal of Econometrics},
  volume={234},
  number={2},
  pages={624--646},
  year={2023},
  publisher={Elsevier}
}

@article{kock2022functional,
  title={Functional sequential treatment allocation},
  author={Kock, Anders Bredahl and Preinerstorfer, David and Veliyev, Bezirgen},
  journal={Journal of the American Statistical Association},
  volume={117},
  number={539},
  pages={1311--1323},
  year={2022},
  publisher={Taylor \& Francis}
}

@article{chen2012estimation,
  title={Estimation of nonparametric conditional moment models with possibly nonsmooth generalized residuals},
  author={Chen, Xiaohong and Pouzo, Demian},
  journal={Econometrica},
  volume={80},
  number={1},
  pages={277--321},
  year={2012},
  publisher={Wiley Online Library}
}

@article{Zhang_2004,
author = {Zhang, Tong},
journal = {Annals of Statistics},
number = {1},
pages = {56--85},
title = {Statistical behavior and consistency of classification methods based on convex risk minimization},
volume = {32},
year = {2004}
}

@article{Bartlett_et_al_2006,
author = {Bartlett, Peter L. and Jordan, Michael I. and McAuliffe, Jon D.},
journal = {Journal of the American Statistical Association},
number = {473},
pages = {138--156},
title = {Convexity, classification, and risk bounds},
volume = {101},
year = {2006}
}

@book{boyd2004convex,
  title={Convex optimization},
  author={Boyd, Stephen and Vandenberghe, Lieven},
  year={2004},
  publisher={Cambridge university press}
}

@article{gray2023long,
  title={The long-term effects of universal preschool in Boston},
  author={Gray-Lobe, Guthrie and Pathak, Parag A and Walters, Christopher R},
  journal={The Quarterly Journal of Economics},
  volume={138},
  number={1},
  pages={363--411},
  year={2023},
  publisher={Oxford University Press}
}

@article{angrist2012benefits,
  title={Who benefits from KIPP?},
  author={Angrist, Joshua D and Dynarski, Susan M and Kane, Thomas J and Pathak, Parag A and Walters, Christopher R},
  journal={Journal of policy Analysis and Management},
  volume={31},
  number={4},
  pages={837--860},
  year={2012},
  publisher={Wiley Online Library}
}

@article{angrist2022marginal,
  title={Marginal effects of merit aid for low-income students},
  author={Angrist, Joshua and Autor, David and Pallais, Amanda},
  journal={The Quarterly Journal of Economics},
  volume={137},
  number={2},
  pages={1039--1090},
  year={2022},
  publisher={Oxford University Press}
}

@article{resnjanskij2024can,
  title={Can mentoring alleviate family disadvantage in adolescence? A field experiment to improve labor market prospects},
  author={Resnjanskij, Sven and Ruhose, Jens and Wiederhold, Simon and Woessmann, Ludger and Wedel, Katharina},
  journal={Journal of Political Economy},
  volume={132},
  number={3},
  pages={1013--1062},
  year={2024},
  publisher={The University of Chicago Press Chicago, IL}
}

@article{liang2021algorithm,
  title={Algorithm design: A fairness-accuracy frontier},
  author={Liang, Annie and Lu, Jay and Mu, Xiaosheng and Okumura, Kyohei},
   journal={Journal of Political Economy},
  year={2026},
  ubstate = {forthcoming}
}

@article{auerbach2024testing,
  title={Testing the Fairness-Accuracy Improvability of Algorithms},
  author={Auerbach, Eric and Liang, Annie and Okumura, Kyohei and Tabord-Meehan, Max},
  journal={arXiv preprint arXiv:2405.04816},
  year={2024}
}

@article{liu2024inference,
  title={Inference for an algorithmic fairness-accuracy frontier},
  author={Liu, Yiqi and Molinari, Francesca},
  journal={arXiv preprint arXiv:2402.08879},
  year={2024}
}

\appendix

\section{Additional results on Theorem \ref{thm:main}}\label{sec:technical.discussion}

We now discuss the main proof steps of Theorem \ref{thm:main}.
\paragraph*{Step 1} Preparations. To ease notational burden, for any function
$f$, let $f_{i}:=f(Z_{i})$, and write 
\begin{align*}
\hat{A}_{n} & =\frac{1}{n}\sum_{i=1}^{n}\hat{\xi}_{i}p_{i}p_{i}^{\prime}, & \hat{B}_{n}=\frac{1}{n}\sum_{i=1}^{n}\hat{\xi}_{i}p_{i}\mathbf{1}\left\{ \hat{\tau}_{i}\geq0\right\} ,\\
A_{n} & =\frac{1}{n}\sum_{i=1}^{n}\xi_{i}p_{i}p_{i}^{\prime}, & B_{n}=\frac{1}{n}\sum_{i=1}^{n}\xi_{i}p_{i}\mathbf{1}\left\{ \tau_{i}\geq0\right\} ,\\
A & =\mathbb{E}\left[\xi_{i}p_{i}p_{i}^{\prime}\right].
\end{align*}
Pick any $0<\varepsilon<\min\left\{ \underline{\lambda},6\tilde{C}_{\xi}\zeta_{p}^{2},3C_{\xi}\zeta_{p}^{2}/2\right\} $.
Define event 
\[
\mathcal{E}_{n}:=\left\{ \lambda_{\min}\left(A_{n}\right)\geq\underline{\lambda}-\varepsilon,\lambda_{\min}\left(\hat{A}_{n}\right)\geq\underline{\lambda}-\varepsilon\right\} .
\]
On event $\mathcal{E}_{n}$, note the problem of (\ref{eq:our.proposal})
is convex with a unique solution $\hat{\delta}(w)=\hat{\beta}^{\prime}p(w)$, where
$\hat{\beta}=\hat{A}_{n}^{-1}\hat{B}_{n}$. Furthermore, the oracle
problem  $\min_{\delta=\beta^{\prime}p,\beta\in\mathbb{R}^{\sieved}}L_{n}^{o}(\delta)$
is also convex with a unique solution $\tilde{\delta}(w)=\tilde{\beta}^{\prime}p(w)$,
$\tilde{\beta}=A_{n}^{-1}B_{n}$. 
Next, we decompose 
\begin{align*}
 & \mathbb{E}_{P_{n}}\left[L(\hat{\delta}^{\mathcal{T}},\tau)-L(\delta^{*},\tau)\right]=\mathcal{P}_{1}E_{1,n}+\left(1-\mathcal{P}_{1}\right)E_{2,n},
\end{align*}
where 
\begin{align*}
\mathcal{P}_{1}:= & Pr\{\mathcal{E}_{n}\}\leq1,\\
E_{1,n}:= & \mathbb{E}_{P_{n}}\left[L(\hat{\delta}^{\mathcal{T}},\tau)-L(\delta^{*},\tau)\mid\mathcal{E}_{n}\text{ holds}\right],\\
E_{2,n}:= & \mathbb{E}_{P_{n}}\left[L(\hat{\delta}^{\mathcal{T}},\tau)-L(\delta^{*},\tau)\mid\mathcal{E}_{n}\text{ does not hold}\right].
\end{align*}

\paragraph*{Step 2} We invoke Lemmas \ref{lem:remainder.1} and \ref{lem:remainder.2}
to conclude that, for each $n$ such that 
\[
4C_{M}\zeta_{p}^{2}\left(n^{-r_{\gamma_{1}}}+n^{-r_{\gamma_{0}}}+n^{-\frac{r_{\omega_{1}}+r_{\gamma_{1}}}{2}}+n^{-\frac{r_{\omega_{0}}+r_{\gamma_{0}}}{2}}\right)<\varepsilon,
\]
we have 
\[
P\{\mathcal{E}_{n}\}\geq1-2d_{p}\left[\exp\left(\frac{-n\varepsilon^{2}}{4C_{\xi}^{2}\zeta_{p}^{4}}\right)+K\exp\left(\frac{-n\varepsilon^{2}}{16K\tilde{C}_{\xi}^{2}\zeta_{p}^{4}}\right)\right].
\]
Also, note $E_{2,n}\leq2C_{\xi}$ under Assumptions \ref{asm:unconfounded}, \ref{asm:reg} and due to trimming. Conclude that
\[
\left(1-\mathcal{P}_{1}\right)E_{2,n}\leq4C_{\xi}d_{p}\left[\exp\left(\frac{-n\varepsilon^{2}}{4C_{\xi}^{2}\zeta_{p}^{4}}\right)+K\exp\left(\frac{-n\varepsilon^{2}}{16K\tilde{C}_{\xi}^{2}\zeta_{p}^{4}}\right)\right].
\]
\paragraph*{Step 3} 
We now bound $E_{1,n}$. To simplify notation, for the rest of the paper, whenever we use ``$\mathbb{E}_{P^{n}}[\cdot]$'' on event $\mathcal{E}_n$, we  mean ``$\mathbb{E}_{P^{n}}[\cdot\mid\mathcal{E}_{n}\text{ holds}]$''. Note
for each $x\in\mathcal{X}$, 
\[
\tau^{2}(x)\left(\mathbf{1}\left\{ \tau(x)\geq0\right\} -\hat{\delta}^{\mathcal{T}}(w)\right)^{2}\leq\tau^{2}(x)\left(\mathbf{1}\left\{ \tau(x)\geq0\right\} -\hat{\delta}(w)\right)^{2}.
\]
Therefore, $L(\hat{\delta}^{\mathcal{T}},\tau)\leq L(\hat{\delta},\tau)$, implying on event $\mathcal{E}_n$,
\[
\mathbb{E}_{P^{n}}\left[L(\hat{\delta}^{\mathcal{T}},\tau)-L(\delta^{*},\tau)\right]\leq\mathbb{E}_{P^{n}}\left[L(\hat{\delta},\tau)-L(\delta^{*},\tau)\right].
\]
Therefore, it suffices to bound $\mathbb{E}_{P^{n}}\left[L(\hat{\delta},\tau)-L(\delta^{*},\tau)\right]$ on event $\mathcal{E}_n$. To this end, observe
\begin{align}
 & L(\hat{\delta},\tau)-L(\delta^{*},\tau)\nonumber \\
= & \underset{T_{1n}}{\underbrace{L(\hat{\delta},\tau)-L(\delta^{*},\tau)-2\left[L_{n}^{o}(\hat{\delta},\eta_{0})-L_{n}^{o}(\delta^{*},\eta_{0})\right]}}+2\left(\underset{T_{2n}}{\underbrace{L_{n}^{o}(\tilde{\delta},\eta_{0})-L_{n}^{o}(\delta^{*},\eta_{0})}}\right)\nonumber \\
+ & 2\left(\underset{T_{3n}}{\underbrace{L_{n}^{o}(\hat{\delta},\eta_{0})-L_{n}^{o}(\tilde{\delta},\eta_{0})}}\right),\label{pf:lem1.1}
\end{align}
where for term $T_{2n}$, we have $L_{n}^{o}(\tilde{\delta},\eta_{0})=\inf_{\delta\in\mathcal{D}}L_{n}^{o}(\delta,\eta_{0})$
on event $\mathcal{E}_{n}$. Therefore,
on event $\mathcal{E}_{n}$,
\begin{align}
\mathbb{E}_{P^{n}}[T_{2n}] & =\mathbb{E}_{P^{n}}\left[\inf_{\delta\in\mathcal{D}}L_{n}^{o}(\delta,\eta_{0})-L_{n}^{o}(\delta^{*},\eta_{0})\right]\nonumber \\
 & \leq\inf_{\delta\in\mathcal{D}}\left\{ \mathbb{E}_{P^{n}}\left[L_{n}^{o}(\delta,\eta_{0})-L_{n}^{o}(\delta^{*},\eta_{0})\right]\right\} \nonumber \\
 & =\underset{\text{approximation error}}{\underbrace{\inf_{\delta\in\mathcal{D}}\left[L(\delta,\tau)-L(\delta^{*},\tau)\right]}},\label{pf:lem1.2}
\end{align}
where the equality used the observation that $L^{o}(\delta,\eta_{0})=L(\delta,\tau)$
for all $\delta\in\mathcal{D}$ and for $\delta^{*}$ as well. Conclude with the following lemma.

\begin{lem}
\label{lem:basic} On event $\mathcal{E}_{n}$, the following holds
for each $P^{n}\in\mathcal{P}^{n}$: 
\begin{align*}
\mathbb{E}_{P^{n}}\left[L(\hat{\delta},\tau)-L(\delta^{*},\tau)\right]
\leq \underset{\text{oracle rate}}{\underbrace{\mathbb{E}_{P^{n}}\left[T_{1n}\right]+2\sup_{P^{n}}\inf_{\delta\in\mathcal{D}}\left[L(\delta,\tau)-L(\delta^{*},\tau)\right]}+}\underset{\text{remainder estimation error}}{2\underbrace{\mathbb{E}_{P^{n}}\left[T_{3n}\right]}}.
\end{align*}
\end{lem}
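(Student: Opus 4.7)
The plan is to assemble the lemma directly from the pointwise identity in~\eqref{pf:lem1.1} and the bound on $T_{2n}$ in~\eqref{pf:lem1.2}, both already laid out in the preceding discussion. Starting from the algebraic identity
\[
L(\hat{\delta},\tau)-L(\delta^{*},\tau)=T_{1n}+2T_{2n}+2T_{3n},
\]
which is obtained by adding and subtracting $2[L_{n}^{o}(\hat{\delta},\eta_{0})-L_{n}^{o}(\delta^{*},\eta_{0})]$ and then inserting $\pm L_{n}^{o}(\tilde{\delta},\eta_{0})$, I would take conditional expectation on $\mathcal{E}_{n}$ and use linearity. This reduces the lemma to showing that $\mathbb{E}_{P^{n}}[T_{2n}]\leq\sup_{P^{n}}\inf_{\delta\in\mathcal{D}}[L(\delta,\tau)-L(\delta^{*},\tau)]$.

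The main step is to justify~\eqref{pf:lem1.2}, which I would do in three moves. First, the definition of $\mathcal{E}_{n}$ combined with $\varepsilon<\underline{\lambda}$ ensures $\lambda_{\min}(A_{n})\geq\underline{\lambda}-\varepsilon>0$, so the oracle criterion $L_{n}^{o}(\cdot,\eta_{0})$ is strictly convex over the linear sieve $\mathcal{D}$ and $\tilde{\delta}$ is its unique minimizer, yielding $T_{2n}=\inf_{\delta\in\mathcal{D}}L_{n}^{o}(\delta,\eta_{0})-L_{n}^{o}(\delta^{*},\eta_{0})$. Second, I would exchange expectation and infimum through the standard Jensen-type inequality: pointwise $\inf_{\delta}X_{\delta}\leq X_{\delta'}$ for every $\delta'\in\mathcal{D}$ implies $\mathbb{E}[\inf_{\delta}X_{\delta}]\leq\inf_{\delta'}\mathbb{E}[X_{\delta'}]$. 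Third, I would invoke Proposition~\ref{prop:debias} to conclude that for each fixed $\delta\in\mathcal{D}$, $L_{n}^{o}(\delta,\eta_{0})$ is an unbiased sample analog of $L^{o}(\delta,\eta_{0})=L(\delta,\tau)$, so $\inf_{\delta\in\mathcal{D}}\mathbb{E}_{P^{n}}[L_{n}^{o}(\delta,\eta_{0})-L_{n}^{o}(\delta^{*},\eta_{0})]=\inf_{\delta\in\mathcal{D}}[L(\delta,\tau)-L(\delta^{*},\tau)]$. Taking the supremum over $P^{n}\in\mathcal{P}^{n}$ produces the approximation error factor that appears in the lemma statement.

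The one technical wrinkle is that the notation $\mathbb{E}_{P^{n}}[\cdot]$ here denotes conditional expectation given $\mathcal{E}_{n}$, whereas unbiasedness of $L_{n}^{o}(\delta,\eta_{0})$ with respect to $L(\delta,\tau)$ holds for the unconditional expectation. However, since $|L_{n}^{o}(\delta,\eta_{0})|$ is uniformly bounded under Assumptions~\ref{asm:unconfounded} and~\ref{asm:reg} and $P(\mathcal{E}_{n}^{c})$ is exponentially small by Lemmas~\ref{lem:remainder.1}--\ref{lem:remainder.2}, the discrepancy between $\mathbb{E}_{P^{n}}[L_{n}^{o}(\delta,\eta_{0})\mid\mathcal{E}_{n}]$ and $L(\delta,\tau)$ is of the same order as $R_{n,F}$ and hence can be absorbed into the exponential remainder in Theorem~\ref{thm:main}. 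I do not expect any serious obstacle at this assembly stage; per Remark~\ref{rem:proof.strategy}, the genuine statistical work --- the refined maximal inequality for the oracle deviation $T_{1n}$ and the first-order-condition-based control of the remainder estimation error $T_{3n}$ --- is what is delivered separately by Lemmas~\ref{lem:main.1} and~\ref{lem:main.2}, and this lemma simply packages the decomposition around those two deeper bounds.
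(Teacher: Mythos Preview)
Your proposal is correct and follows essentially the same route as the paper: the lemma is precisely the packaging of the identity~\eqref{pf:lem1.1} together with the bound~\eqref{pf:lem1.2}, and your three moves for handling $T_{2n}$ match the paper's argument line by line. Your remark on the conditional-versus-unconditional expectation discrepancy is a valid observation that the paper leaves implicit; the paper simply writes the equality in~\eqref{pf:lem1.2} as if the expectation were unconditional, so your proposed absorption into $R_{n,F}$ is a reasonable way to make that step rigorous.
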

\paragraph*{Step 4} We bound $\mathbb{E}_{P^{n}}\left[T_{1n}\right]$ and $\mathbb{E}_{P^{n}}\left[T_{3n}\right]$
on event $\mathcal{E}_n$ by establishing the following two lemmas below, and the conclusion
of the theorem follows immediately. 
\begin{lem}
\label{lem:main.1} Under Assumptions \ref{asm:unconfounded}-\ref{asm:stability}
and on event $\mathcal{E}_{n}$, we have, for some constant $\mathcal{C}_{1}$,
\begin{equation}
\sup_{P^{n}\in\mathcal{P}^{n}}\mathbb{E}_{P^{n}}\left[T_{1n}\right]\leq\mathcal{C}_{1}\frac{\sievedstar}{n}.\label{eq:fast.rate}
\end{equation}
\end{lem}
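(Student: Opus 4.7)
The plan is to express $T_{1n}$ as the value, at the random point $\hat\delta$, of a centered empirical process indexed over $\mathcal{D}$ and then exploit the weighted least-squares structure of $L_n^o$ to extract a fast $O(\sievedstar/n)$ rate via the refined maximal inequality of \citet[Theorem~2]{KOHLER20001}. To this end, introduce the oracle contrast
$$g_\delta(Z):=\xi(Z,\eta_0)\bigl[(\mathbf{1}\{\tau(X)\geq 0\}-\delta(W))^2-(\mathbf{1}\{\tau(X)\geq 0\}-\delta^*(W))^2\bigr],$$
so that $T_{1n}=\mathbb{E}[g_{\hat\delta}]-2P_n g_{\hat\delta}$. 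By Assumption~\ref{asm:stability}, on event $\mathcal{E}_n$ we have $\|\hat\delta\|_\infty\leq C_L$; hence it suffices to control $\sup_{\delta\in\mathcal{D}_L}\{\mathbb{E}[g_\delta]-2P_n g_\delta\}$ over the bounded subclass $\mathcal{D}_L:=\{\delta\in\mathcal{D}:\|\delta\|_\infty\leq C_L\}$.

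The key structural input is a Pythagorean identity at $\delta^*$. Since the closed form $\delta^*(w)=\mathbb{E}[\tau^2(X)\mathbf{1}\{\tau(X)\geq 0\}\mid W=w]/\mathbb{E}[\tau^2(X)\mid W=w]$ is the unrestricted weighted $L^2(\tau^2\,dP)$ projection of $\mathbf{1}\{\tau(X)\geq 0\}$ onto $\sigma(W)$-measurable functions, the normal equation $\mathbb{E}[\tau^2(X)(\mathbf{1}\{\tau(X)\geq 0\}-\delta^*(W))\mid W]=0$ yields the orthogonal decomposition $\phi(\delta):=L(\delta,\tau)-L(\delta^*,\tau)=\mathbb{E}[\tau^2(X)(\delta(W)-\delta^*(W))^2]$ for every measurable $\delta$. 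Combining this with the factorization $g_\delta=\xi(Z,\eta_0)(\delta^*-\delta)(2\mathbf{1}\{\tau(X)\geq 0\}-\delta-\delta^*)$, the uniform bound $|\xi(Z,\eta_0)|\leq C_\xi$ from Assumption~\ref{asm:reg}, and the boundedness of $\delta\in\mathcal{D}_L$ and $\delta^*\in[0,1]$, one obtains $|g_\delta|\leq C_\xi(1+C_L)^2$ and $\mathbb{E}[g_\delta^2]\leq C_\xi^2(3+C_L)^2\mathbb{E}[(\delta-\delta^*)^2]$. A Bernstein-type variance-mean bound $\mathbb{E}[g_\delta^2]\lesssim \phi(\delta)$ then follows after $\mathbb{E}[(\delta-\delta^*)^2]$ is tied back to $\phi(\delta)$ using the spectral lower bound $\underline\lambda>0$ of $A$ in Assumption~\ref{asm:reg}(ii), possibly at the price of an additive approximation error already separated out in Lemma~\ref{lem:basic}.

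With these two ingredients in hand, I would apply \citet[Theorem~2]{KOHLER20001} to the class $\mathcal{G}_L:=\{g_\delta:\delta\in\mathcal{D}_L\}$. Expanding $g_\delta$ in the basis shows that $\mathcal{G}_L$ lies in a bounded subset of the linear span of $\{1,p_j(w),p_j(w)p_k(w)\}_{j,k}$, which has dimension of order $\sievedstar$ by definition, and therefore has polynomial uniform covering numbers with envelope controlled by $C_\xi$ and $C_L$. A localized peeling/slicing argument around $\phi(\delta)=0$ then delivers, with high probability under $P^n$,
$$\sup_{\delta\in\mathcal{D}_L}\bigl\{\mathbb{E}[g_\delta]-2P_n g_\delta\bigr\}\leq \mathcal{C}_1\frac{\sievedstar}{n},$$
uniformly in $P^n\in\mathcal{P}^n$. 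Integrating this tail bound and using the uniform envelope to control the residual event yields the claimed $\mathbb{E}_{P^n}[T_{1n}]\leq \mathcal{C}_1\sievedstar/n$ on event $\mathcal{E}_n$. The main obstacle will be verifying the variance-mean condition cleanly given that (i) $\delta^*$ need not lie in the sieve class $\mathcal{D}$, so that a pure Pythagorean reduction on $\mathcal{D}$ produces nonzero cross terms, and (ii) the debiasing weight $\xi(Z,\eta_0)$ is signed rather than nonnegative. I would resolve (i) by passing through the best weighted-$L^2(\tau^2\,dP)$ approximant $\delta^{**}\in\mathcal{D}$ of $\delta^*$, so that the cross terms are absorbed into the approximation-error term $\sup_{P^n}\inf_{\delta\in\mathcal{D}}[L(\delta,\tau)-L(\delta^*,\tau)]$ already isolated in Lemma~\ref{lem:basic}; (ii) only inflates envelope and variance constants by powers of $C_\xi$, leaving the Kohler-style localization intact.
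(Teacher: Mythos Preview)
Your overall strategy matches the paper's proof closely: the same contrast $g_\delta$, the same reduction to $\mathcal{D}_{C_L}$ via Assumption~\ref{asm:stability}, the same Pythagorean identity $\mathbb{E}[g_\delta]=\mathbb{E}[\tau^2(X)(\delta-\delta^*)^2]$ from the first-order condition of $\delta^*$, the same covering-number bound via a linear span of dimension $O(\sievedstar)$, and the same application of \citet[Theorem~2]{KOHLER20001} followed by integration of the tail.

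The one genuine gap is in your verification of the variance--mean condition $\mathbb{E}[g_\delta^2]\leq K_2\,\mathbb{E}[g_\delta]$. You bound $\mathbb{E}[g_\delta^2]\leq C_\xi^2(3+C_L)^2\mathbb{E}[(\delta-\delta^*)^2]$ and then propose to control $\mathbb{E}[(\delta-\delta^*)^2]$ by $\phi(\delta)=\mathbb{E}[\tau^2(X)(\delta-\delta^*)^2]$ via the spectral lower bound $\underline\lambda$ of $A$. This does not work: the eigenvalue bound on $A=\mathbb{E}[\tau^2(X)p(W)p'(W)]$ controls $\mathbb{E}[\tau^2(X)(\beta'p(W))^2]$ from below by $\underline\lambda\|\beta\|^2$, not by $\mathbb{E}[(\beta'p(W))^2]$, and in any case $\delta-\delta^*$ is not a sieve element. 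Assumption~\ref{asm:reg}(ii) explicitly allows $\mathbb{E}[\tau^2(X)\mid W=w]=0$ on a set of positive measure, so no pointwise or unweighted comparison is available. Absorbing the slack into the approximation error, as you suggest, would introduce an additive constant into the Bernstein condition and break the localization in Kohler's inequality.

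The paper closes this gap by \emph{not} replacing $\xi^2$ by $C_\xi^2$. Because $\omega_1=2\tau/\pi$ and $\omega_0=2\tau/(1-\pi)$, every summand of $\xi(Z,\eta_0)=\tau^2+D\omega_1e_1-(1-D)\omega_0e_0$ carries a factor of $\tau$; after taking expectations (the cross terms vanish since $\mathbb{E}[De_1\mid X]=\mathbb{E}[(1-D)e_0\mid X]=0$), one obtains $\mathbb{E}[\xi^2(Z)(\delta-\delta^*)^2]\leq C\,\mathbb{E}[\tau^2(X)(\delta-\delta^*)^2]=C\,\mathbb{E}[g_\delta]$ directly, with $C$ depending only on $\sup_x\tau^2(x)$, $\underline\pi$, $\bar\pi$, and $C_e$. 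Consequently your obstacle~(i) is not an obstacle at all---the Pythagorean identity you already derived holds for every measurable $\delta$, so no detour through a sieve approximant $\delta^{**}$ is needed---and obstacle~(ii) is harmless because only $\xi^2$ enters the variance bound.
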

\begin{lem}
\label{lem:main.2} Under Assumptions \ref{asm:unconfounded}-\ref{asm:quality}
and on event $\mathcal{E}_{n}$, the following statements hold:
\begin{itemize}
\item[(i)] For some constant $\mathcal{C}_{2}$, we have
\[
\sup_{P_{n}}\mathbb{E}_{P_{n}}\left[T_{3n}\right]\leq\mathcal{C}_{2}\left(R_{n,B}+R_{n,V}^{\text{}}\right).
\]
\item[(ii)] The above rate improves to

\[
\sup_{P_{n}}\mathbb{E}_{P_{n}}\left[T_{3n}\right]\leq\mathcal{C}_{3}\left(R_{n,B}+R_{n,V}\left(n^{-r_{\gamma_{1}}}+n^{-r_{\gamma_{0}}}\right)^{\frac{\alpha}{\alpha+2}}\right),
\]
with a suitably refined constant $\mathcal{C}_{3}$, if in addition,
Assumption \ref{asm:margin} holds and $n$ is also such that 
\[
\left(4C_{M}C_{\tau}^{-1}\left(n^{-r_{\gamma_{1}}}+n^{-r_{\gamma_{0}}}\right)\right)^{\frac{1}{\alpha+2}}<t^{*}.
\]

\end{itemize}
\end{lem}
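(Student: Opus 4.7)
The plan is to exploit the weighted-least-squares structure of both the estimated and the oracle problems. On event $\mathcal{E}_n$ both $\hat{A}_n$ and $A_n$ are invertible, so the first-order conditions $\hat{A}_n\hat{\beta} = \hat{B}_n$ and $A_n\tilde{\beta} = B_n$ hold exactly. Writing the quadratic expansion $L_n^o(\beta'p,\eta_0) = n^{-1}\sum_i \xi_i \mathbf{1}\{\tau_i \geq 0\} - 2\beta'B_n + \beta'A_n\beta$, substituting $B_n = A_n\tilde{\beta}$, and canceling yields the key algebraic identity
\begin{equation*}
T_{3n} = (\hat{\beta} - \tilde{\beta})'A_n(\hat{\beta} - \tilde{\beta}).
\end{equation*}
This is the structural step: it replaces a functional-norm optimization over $\mathcal{D}$ by a finite-dimensional quadratic form, allowing me to avoid the uniform bounding argument used for Lemma~\ref{lem:main.1} and instead target a sharper estimation-error rate.

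Next, I would express
\begin{equation*}
\hat{\beta} - \tilde{\beta} = \hat{A}_n^{-1}\bigl[(\hat{B}_n - B_n) - (\hat{A}_n - A_n)\tilde{\beta}\bigr],
\end{equation*}
and combine $\|\hat{A}_n^{-1}\| \leq (\underline{\lambda}-\varepsilon)^{-1}$ on $\mathcal{E}_n$, a uniform bound on $\|\tilde{\beta}\|$ obtained from Assumption~\ref{asm:reg}(ii) and the boundedness of $\xi$, and an upper-eigenvalue bound on $A_n$ (by the concentration argument defining $\mathcal{E}_n$). This reduces the task to controlling $\mathbb{E}_{P^n}\|\hat{A}_n - A_n\|^2$ and $\mathbb{E}_{P^n}\|\hat{B}_n - B_n\|^2$, which I would state as two auxiliary lemmas (the ones referenced as \ref{lem:remainder.5}--\ref{lem:remainder.6} in Remark~\ref{rem:plug.in}). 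For $\hat{A}_n - A_n = n^{-1}\sum_i(\hat{\xi}_i - \xi_i)p_ip_i'$, cross-fitting decouples $\hat{\eta}^k$ from the held-out fold, while the Neyman orthogonality baked into $\xi$ via Proposition~\ref{prop:debias} kills the first-order nuisance contributions on expectation. The remaining second-order products of nuisance errors, multiplied by the dimension scaling $\max\{(\log 2d_p)^2\zeta_p^6, d_p\zeta_p^3\}$ from a Bernstein-type matrix concentration inequality, deliver the bias factor $R_{n,B}$; the centered fluctuations give the variance factor $R_{n,V}$. A parallel decomposition for $\hat{B}_n - B_n$ reproduces the same two rates plus an extra contribution from the sign-flip term $\mathbf{1}\{\hat{\tau}_i \geq 0\} - \mathbf{1}\{\tau_i \geq 0\}$.

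The main obstacle, and the step where part~(ii) diverges from part~(i), is controlling this sign-flip term. Without the margin condition, only the crude bound $|\mathbf{1}\{\hat{\tau}_i \geq 0\} - \mathbf{1}\{\tau_i \geq 0\}| \leq 1$ is available, producing the $R_{n,V}$ rate in part~(i). Under Assumption~\ref{asm:margin}, the key observation is that sign disagreement forces $|\tau(X)| \leq |\hat{\tau}(X) - \tau(X)|$, so for any $t \in (0, t^*]$,
\begin{equation*}
P\bigl(|\tau(X)| \leq |\hat{\tau}(X) - \tau(X)|\bigr) \leq C_\tau t^\alpha + t^{-2}\mathbb{E}|\hat{\tau}(X) - \tau(X)|^2.
\end{equation*}
Plugging $\mathbb{E}|\hat{\tau} - \tau|^2 \leq 2C_M(n^{-r_{\gamma_1}} + n^{-r_{\gamma_0}})$ from Assumption~\ref{asm:quality}(i) and balancing at $t \asymp (n^{-r_{\gamma_1}} + n^{-r_{\gamma_0}})^{1/(\alpha+2)}$ --- a choice whose admissibility is precisely the side condition $(4C_MC_\tau^{-1}(n^{-r_{\gamma_1}} + n^{-r_{\gamma_0}}))^{1/(\alpha+2)} < t^*$ in the statement --- yields sign-disagreement probability of order $(n^{-r_{\gamma_1}} + n^{-r_{\gamma_0}})^{\alpha/(\alpha+2)}$. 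This multiplier attaches to the variance-type factor $R_{n,V}$ in the bound on $\hat{B}_n - B_n$, and feeding everything back through the quadratic-form identity delivers the two claimed rates $\mathcal{C}_2(R_{n,B} + R_{n,V})$ and $\mathcal{C}_3(R_{n,B} + R_{n,V}(n^{-r_{\gamma_1}} + n^{-r_{\gamma_0}})^{\alpha/(\alpha+2)})$.
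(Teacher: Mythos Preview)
Your proposal is essentially the paper's proof: the quadratic identity $T_{3n}=(\hat\beta-\tilde\beta)'A_n(\hat\beta-\tilde\beta)$, the reduction to $\mathbb{E}\|\hat A_n-A_n\|^2$ and $\mathbb{E}\|\hat B_n-B_n\|^2$ (handled via the auxiliary Lemmas~\ref{lem:remainder.5}--\ref{lem:remainder.6}), and the margin argument balancing $C_\tau t^\alpha$ against $t^{-2}\mathbb{E}|\hat\tau-\tau|^2$ are all exactly as in the paper.

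One small wrinkle: you factor $\hat\beta-\tilde\beta=\hat A_n^{-1}\bigl[(\hat B_n-B_n)-(\hat A_n-A_n)\tilde\beta\bigr]$, whereas the paper factors through $A_n^{-1}$, namely $\hat\beta-\tilde\beta=A_n^{-1}(A_n-\hat A_n)\hat A_n^{-1}\hat B_n+A_n^{-1}(\hat B_n-B_n)$. The paper's choice is deliberate: after substitution into $(\cdot)'A_n(\cdot)$ the middle $A_n$ cancels against the outer $A_n^{-1}$ factors, so only $\|A_n^{-1}\|$ and $\|\hat A_n^{-1}\|$ (both controlled on $\mathcal{E}_n$) appear, and one uses the deterministic bound $\|\hat B_n\|\le\tilde C_\xi\zeta_p$. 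Your factoring leaves $\hat A_n^{-1}A_n\hat A_n^{-1}$ in the middle, so you need an upper bound on $\lambda_{\max}(A_n)$; but $\mathcal{E}_n$ as defined controls only minimum eigenvalues, and the crude bound $\|A_n\|\le C_\xi\zeta_p^2$ (together with $\|\tilde\beta\|\lesssim\zeta_p$) would cost you extra powers of $\zeta_p$ beyond the stated $R_{n,B},R_{n,V}$. This is easily repaired by switching to the paper's decomposition, so the gap is cosmetic rather than conceptual.
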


For completeness, we also restate
the result of \citet[Theorem 2]{KOHLER20001} below. 
\begin{lem}
\label{lem:kohler} Let $Z,Z_{1},\ldots Z_{n}$ be iid random variables
with support $\mathscr{Z}$. Let $K_{1},K_{2}\geq1$ and let $\mathscr{F}$
be a permissible class of functions $f:\mathscr{Z}\rightarrow\mathbb{R}$
such that 
\[
\left|f(z)\right|\leq K_{1},\text{ and }\mathbb{E}f^{2}(Z)\leq K_{2}\mathbb{E}f(Z).
\]
Denote by $N(\epsilon,\mathcal{G},d_{2,n})$ the $\epsilon-$covering
number of function class $\mathcal{G}$ with respect to the empirical
$L_{2}$ distance 
\[
d_{2,n}(g_{1},g_{2}):=\left\{ \frac{1}{n}\sum_{i=1}^{n}\left[g_{1}(Z_{i})-g_{1}(Z_{i})\right]^{2}\right\} ^{1/2},\text{ }g_{1},g_{2}\in\mathcal{G}.
\]
Let $0<\varepsilon<1$ and $\alpha>0$. Assume that 
\[
\sqrt{n}\varepsilon\sqrt{1-\varepsilon}\sqrt{\alpha}\geq288\max\left\{ 2K_{1},\sqrt{2K_{2}}\right\} ,
\]
and for all $z_{1},\ldots,z_{n}\in\mathscr{Z}$ and all $\delta\geq\frac{\alpha}{4}$,
\[
\sqrt{n}\varepsilon\left(1-\varepsilon\right)\delta\gtrsim288\max\left\{ K_{1},2K_{2}\right\} \int_{0}^{\sqrt{\delta}}\left(\log N\left(u,\left\{ f\in\mathscr{F}:\frac{1}{n}\sum_{i=1}^{n}f^{2}(Z_{i})\leq4\delta\right\} ,d_{2,n}\right)\right)^{1/2}du.
\]
Then, 
\begin{align*}
 & Pr\left\{ \sup_{f\in\mathscr{F}}:\frac{\left|\mathbb{E}f(Z)-\frac{1}{n}\sum_{i=1}^{n}f(Z_{i})\right|}{\alpha+\mathbb{E}f(Z)}>\varepsilon\right\}
\leq 50\exp\left(-\frac{n\alpha}{128\cdotp2304\max\left\{ K_{1}^{2},K_{2}\right\} }\right).
\end{align*}
\end{lem}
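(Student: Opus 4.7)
The statement is a verbatim restatement of Theorem 2 of \cite{KOHLER20001}, so the shortest honest argument is simply to invoke that reference; the lemma is included for ease of citation in the proofs of Lemmas~\ref{lem:main.1}--\ref{lem:main.2}. If instead I had to reconstruct the proof from first principles, my plan would follow the classical recipe for ratio-type uniform deviation inequalities in the style of \cite{alexander1987central} and \cite{massart2000constants}, adapted to exploit the Bernstein-type ``variance-to-mean'' condition $\mathbb{E}f^{2}(Z)\le K_{2}\mathbb{E}f(Z)$ in place of an envelope-squared bound.

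The first step is first-stage symmetrization through a ghost sample $Z_{1}',\dots,Z_{n}'$. On the event that the ratio $|\mathbb{E}f(Z)-\tfrac{1}{n}\sum_{i}f(Z_{i})|/(\alpha+\mathbb{E}f(Z))$ exceeds $\varepsilon$ for some $f$, I would use Chebyshev's inequality together with the boundedness $|f|\le K_{1}$ and the variance condition to argue that, conditionally on the original sample, the ghost-sample mean lies within $(\varepsilon/2)(\alpha+\mathbb{E}f(Z))$ of $\mathbb{E}f(Z)$ with probability at least $1/2$, provided $\sqrt{n}\varepsilon\sqrt{\alpha}$ dominates $\sqrt{K_{2}}$ by the prescribed constant. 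This reduces the problem to bounding $\sup_{f}|\tfrac{1}{n}\sum_{i}(f(Z_{i})-f(Z_{i}'))|/(\alpha+\mathbb{E}f(Z))$, and Rademacher symmetrization then replaces the ghost-sample difference by $\tfrac{1}{n}\sum_{i}\sigma_{i}f(Z_{i})$ with i.i.d. signs.

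The second step is peeling. I would decompose $\mathscr{F}=\bigcup_{j\ge 0}\mathscr{F}_{j}$ along geometric shells defined by $\mathscr{F}_{j}:=\{f\in\mathscr{F}:2^{j-1}\alpha<\alpha+\mathbb{E}f(Z)\le 2^{j}\alpha\}$. On $\mathscr{F}_{j}$ the denominator is pinned up to a factor of two, so controlling the ratio reduces to controlling $\sup_{f\in\mathscr{F}_{j}}|\tfrac{1}{n}\sum_{i}\sigma_{i}f(Z_{i})|$ at absolute level $\varepsilon\cdot 2^{j-1}\alpha$. The variance-to-mean condition gives $\mathbb{E}f^{2}\le K_{2}2^{j}\alpha$ on $\mathscr{F}_{j}$, which fixes the relevant empirical $L^{2}$ radius at $\sqrt{K_{2}2^{j}\alpha}$; the hypothesis on the entropy integral with $\delta:=K_{2}2^{j}\alpha\vee(\alpha/4)$ is precisely what is needed to invoke a Dudley-type chaining bound at that radius.

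The third step is a conditional Bernstein (or Bousquet) tail inequality on each shell. Combining it with the chaining bound yields a shell-wise tail of order $c_{1}\exp(-c_{2}n\alpha 2^{j}/\max\{K_{1}^{2},K_{2}\})$, and summing the resulting geometric series in $j$ produces the stated bound with the prefactor $50$ and the exponential constant $1/(128\cdot 2304)$. The main obstacle, and the reason one almost always defers to \cite{KOHLER20001} in practice, is not any conceptual subtlety but the bookkeeping: every one of the three reductions (symmetrization, peeling, chaining) costs constants that must be tracked to numerical precision if one wants to recover the exact prefactors displayed. For the present paper these specific constants are immaterial, so invoking the theorem as stated is sufficient for the applications in Lemmas~\ref{lem:main.1} and~\ref{lem:main.2}.
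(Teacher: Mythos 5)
Your proposal is correct and matches the paper exactly: the paper offers no proof of this lemma, stating only that it restates Theorem 2 of \cite{KOHLER20001} for completeness, so invoking that reference is precisely what is intended. The additional sketch of the symmetrization--peeling--chaining argument is a reasonable account of how Kohler's theorem is proved, but it is supplementary and not required here.
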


\section{Structural properties of B-splines}\label{sec:Bspline}
The performance of our proposal depends on the choice of the basis functions via $\sievedstar$, $\zeta_k$ and validity of Assumption \ref{asm:stability}.   If $W$ is bounded, the B-splines basis functions have the following important properties:\footnote{See, e.g.,  Lemmas 14.2,  14.4 and 15.2 in \cite*{gyorfi2006distribution} for a detailed discussion of the properties of univariate and multivariate B-splines.}
\begin{enumerate}
    \item[i)] $p_j(w)\geq0$ for all $j=1,...\sieved$ and all $w\in\mathcal{W}$,
    \item[ii)] $\sum_{j=1}^{\sieved}p_{j}(w)=1$ for all $w\in\mathcal{W}$,
\end{enumerate}
implying  that $\zeta_{p} \leq 1$ for B-splines. With the above two properties, we can verify that Assumption \ref{asm:stability} holds.  Note 
\[\sup_{w\in\mathcal{W}}|\hat{\delta}(w)|\leq\left\Vert \hat{\beta}\right\Vert _{\infty}\sup_{w\in\mathcal{W}}\left(\sum_{j=1}^{\sieved}|p_{j}(w)|\right),\]
where 
$\sup_{w\in\mathcal{W}}\left(\sum_{j=1}^{\sieved}|p_{j}(w)|\right)=1$ by i) and ii) above. Conclude that \[
\left\Vert \hat{B}_{n}\right\Vert _{1}\leq\tilde{C}_{\xi}\sup_{w\in\mathcal{W}}\left(\sum_{j=1}^{\sieved}|p_{j}(w)|\right)\leq\tilde{C}_{\xi}.\] 
Let  $\left\Vert A\right\Vert _{\max}:=\max_{i,j}{\left| a_{ij}\right|}$ for matrix $A=\{a_{ij}\}$. Suppose now $\hat{A}_n$ is positive definite with a  minimum eigenvalue bounded away from a constant $\underline{\lambda}_\varepsilon>0$. Then, we have
\begin{align*}
\left\Vert \hat{\beta}\right\Vert _{\infty} & \leq\left\Vert \hat{A}_{n}^{-1}\right\Vert _{\max}\left\Vert \hat{B}_{n}\right\Vert _{1}\leq\tilde{C}_{\xi}/\underline{\lambda}_\varepsilon.
\end{align*}
Therefore, $\sup_{w\in\mathcal{W}}|\hat{\delta}(w)|\cdot\mathbf{1}\{\lambda_{\text{min}}(\hat{A}_n)\geq\underline{\lambda}_\varepsilon\}\leq C_L:=\tilde{C}_{\xi}/\underline{\lambda}_\varepsilon$ as required by Assumption \ref{asm:stability}.  Moreover, for B-splines, each $f\in\mathcal{D}$ is a multivariate piecewise polynomial  with respect to some partition in $\mathcal{W}$. This implies that $f^{2}$ is also a multivariate piecewise polynomial with respect to the same partition. Thus, $\sievedstar\leq c_{B}\sieved$ for some constant $c_{B}$ that depends on the degree of the polynomial as well as the dimension of $W$. See also  \citet[][proof of Lemma 2 on p.11]{KOHLER20001} for additional discussions.

\section{Proofs of main propositions}

\subsection*{Proof of Proposition \ref{prop:population}}

Since $\delta$ can only condition on $W$, (\ref{eq:population.optimal})
can be solved by conditioning on almost all $w\in\mathcal{W}$:
\begin{equation}
\min_{\delta(w)\in[0,1]}\int\left\{ \tau(x)\left[\mathbf{1}\left\{ \tau(x)\geq0\right\} -\delta(w)\right]\right\} ^{\alpha}dF_{X|W}(x\mid w),\label{eq:population.optimal.conditioning}
\end{equation}
where $F_{X|W}(\cdotp\mid\cdotp)$ denotes the conditional cdf of
$X$ given $W$.

\textbf{Proof of statement (i)}. When $\alpha>1$, the objective function (\ref{eq:population.optimal.conditioning})
is convex and differentiable in $\delta(w)$. If  $\delta^*(w)\in(0,1)$, it 
must satisfy the following FOC:
\begin{equation}
-\int\left\{ \tau(x)\left[\mathbf{1}\left\{ \tau(x)\geq0\right\} -\delta^{*}(w)\right]\right\} ^{\alpha-1}\tau(x)dF_{X|W}(x\mid w)=0,\label{pf:prop.frac.1}
\end{equation}
for almost all $w\in\mathcal{W}$. Moreover, even if $\delta^*(w)\in\{0,1\}$, (\ref{pf:prop.frac.1}) still holds, as the LHS of (\ref{pf:prop.frac.1})
can be written as 
\begin{align*}
- & \int_{\tau(x)>0}\left\{ \tau(x)\left[1-\delta(w)\right]\right\} ^{\alpha-1}\tau(x)dF_{X|W}(x\mid w)\\
- & \int_{\tau(x)<0}\left\{ -\tau(x)\delta(w)\right\} ^{\alpha-1}\tau(x)dF_{X|W}(x\mid w).
\end{align*}
At $\delta(w)=1$, the FOC becomes 
\begin{align}\label{pf:foc.1}
- & \int_{\tau(x)<0}\underset{>0}{\underbrace{\left\{ -\tau(x)\right\} ^{\alpha-1}}}\tau(x)dF_{X|W}(x\mid w)\geq0,
\end{align}
implying $\delta(w)=1$ solves 
\eqref{eq:population.optimal} only when \eqref{pf:prop.frac.1} holds. Analogously, when $\delta(w)=0$, the FOC becomes 
\begin{align}\label{pf:foc.0}
- & \int_{\tau(x)>0}\underset{>0}{\underbrace{\left\{ \tau(x)\right\} ^{\alpha}}}dF_{X|W}(x\mid w)\leq0,
\end{align}
implying $\delta(w)=0$ solves 
\eqref{eq:population.optimal} also only when \eqref{pf:prop.frac.1} holds. Finally, note the inequality in \eqref{pf:foc.1} is strict unless $\int_{\tau(x)<0}dF_{X|W}(x\mid w)=0$, and  the inequality in \eqref{pf:foc.0} is strict unless $\int_{\tau(x)>0}dF_{X|W}(x\mid w)=0$.
This completes the proof for statement (i). 

\textbf{Proof of statement (ii)}. When $\alpha=1$, (\ref{eq:population.optimal.conditioning})
is written as 
\[
\min_{\delta(w)\in[0,1]}\int\tau(x)\mathbf{1}\left\{ \tau(x)\geq0\right\} dF_{X|W}(x\mid w)-\delta(w)\int\tau(x)dF_{X|W}(x\mid w).
\]

The conclusion follows by noting, due to law of iterated expectations, 
\begin{align*}
\int\tau(x)dF_{X|W}(x\mid w) & =\mathbb{E}\left[\mathbb{E}\left[Y(1)-Y(0)|X\right]\mid W=w\right]\\
 & =\mathbb{E}\left[Y(1)-Y(0)\mid W=w\right]=\tau(w).
\end{align*}

\subsection*{Proof of Proposition \ref{prop:debias}}
Under Assumption \ref{asm:unconfounded}, let
\begin{align*}
\gamma_{1}(x) & =\mathbb{E}[Y\mid X=x,D=1]\in\mathcal{H}_{1},\quad
\gamma_{0}(x)  =\mathbb{E}[Y\mid X=x,D=0]\in\mathcal{H}_{0},
\end{align*}
where 
\begin{align*}
\mathcal{H}_{1} & :=\left\{ g(x,1):\mathbb{E}[g^{2}(X,1)]<\infty\mid g(x,d):\mathcal{X}\times\{0,1\}\rightarrow\mathbb{R}\right\} ,\\
\mathcal{H}_{0} & :=\left\{ g(x,0):\mathbb{E}[g^{2}(X,0)]<\infty\mid g(x,d):\mathcal{X}\times\{0,1\}\rightarrow\mathbb{R}\right\} .
\end{align*}
As  $\gamma_{1}$ and $\gamma_{0}$ are conditional
expectation functions, and $\tau(x)=\gamma_1(x)-\gamma_0(x)$, we follow \citet[][Proposition 4]{newey1994asymptotic}
to pin down the form of the efficient influence function. Slightly violating notations,  write
$m(z,\gamma_{1},\gamma_{0}):=m(z,\theta_{0},\gamma_1-\gamma_0)$, where the second $m(\cdot,\cdot,\cdot)$ is defined in \eqref{eq:moment.loss}. 
\paragraph{Step 1} Linearization. We calculate the pathwise derivative of $\mathbb{E}[m(z,\cdotp,\gamma_{0})]:\mathcal{H}_1\rightarrow\mathbb{R}$
at $\gamma_{1}$ along direction $(\tilde{\gamma}_{1}-\gamma_{1})\in\mathcal{H}_1$, as follows:
\begin{align*}
 & \frac{\partial\mathbb{E}[m(z,\gamma_{1}+t(\tilde{\gamma}_{1}-\gamma_{1}),\gamma_{0})])}{\partial t}\mid_{t=0}\\
= & \mathbb{E}\left[2\left(\gamma_{1}(X)-\gamma_{0}(X)\right)\left(\mathbf{1}\left\{ \gamma_{1}(X)-\gamma_{0}(X)\geq0\right\} -\delta(W)\right)^{2}\left(\tilde{\gamma}_{1}(X)-\gamma_{1}(X)\right)\right]\\
= & \mathbb{E}\left[D_{1}(Z,(\tilde{\gamma}_{1}-\gamma_{1}))\right],
\end{align*}
where 
\begin{align*}
 & D_{1}(z,\gamma_{1}+t(\tilde{\gamma}_{1}-\gamma_{1}))\\
:= & 2\left(\gamma_{1}(x)-\gamma_{0}(x)\right)\left(\mathbf{1}\left\{ \gamma_{1}(x)-\gamma_{0}(x)\geq0\right\} -\delta(w)\right)^{2}\left(\gamma_{1}(x)+t(\tilde{\gamma}_{1}(x)-\gamma_{1}(x))\right).
\end{align*}
Analogously, the pathwise derivative of $\mathbb{E}[m(z,\gamma_{1},\cdotp)]:\mathcal{H}_0\rightarrow\mathbb{R}$
at $\gamma_{0}$ along direction $(\tilde{\gamma}_{0}-\gamma_{0})\in\mathcal{H}_0$
is calculated as
\begin{align*}
 & \frac{\partial\mathbb{E}[m(z,\gamma_{1},\gamma_{0}+t(\tilde{\gamma}_{0}-\gamma_{0}))])}{\partial t}\mid_{t=0}\\
= & -\mathbb{E}\left[2\left(\gamma_{1}(X)-\gamma_{0}(X)\right)\left(\mathbf{1}\left\{ \gamma_{1}(X)-\gamma_{0}(X)\geq0\right\} -\delta(W)\right)^{2}\left(\tilde{\gamma}_{0}(X)-\gamma_{0}(X)\right)\right]\\
= & \mathbb{E}\left[D_{0}(Z,(\tilde{\gamma}_{0}-\gamma_{0}))\right],
\end{align*}
where 
\begin{align*}
 & D_{0}(z,\gamma_{0}+t(\tilde{\gamma}_{0}-\gamma_{0}))\\
:= & -2\left(\gamma_{1}(x)-\gamma_{0}(x)\right)\left(\mathbf{1}\left\{ \gamma_{1}(x)-\gamma_{0}(x)\geq0\right\} -\delta(w)\right)^{2}\left(\gamma_{0}(x)+t(\tilde{\gamma}_{0}(x)-\gamma_{0}(x))\right).
\end{align*}

\paragraph{Step 2} Derive the integral forms of $\mathbb{E}\left[D_{1}(Z,\cdotp)\right]$
and $\mathbb{E}\left[D_{0}(Z,\cdotp)\right]$. In our case, for all $\tilde{\gamma}_{1}\in\mathcal{H}_1$, we have
\begin{align*}
\mathbb{E}\left[D_{1}(Z,\tilde{\gamma}_{1}(X))\right] & =\mathbb{E}\left[2\left(\gamma_{1}(X)-\gamma_{0}(X)\right)\left(\mathbf{1}\left\{ \gamma_{1}(X)-\gamma_{0}(X)\geq0\right\} -\delta(W)\right)^{2}\tilde{\gamma}_{1}(X)\right]\\
 & =\mathbb{E}\left[2\left(\gamma_{1}(X)-\gamma_{0}(X)\right)\left(\mathbf{1}\left\{ \gamma_{1}(X)-\gamma_{0}(X)\geq0\right\} -\delta(W)\right)^{2}\frac{D}{\pi(X)}\tilde{\gamma}_{1}(X)\right]\\
 & =\mathbb{E}\left[\varsigma_{1}(D,X)\tilde{\gamma}_{1}(X)\right],
\end{align*}
where $\varsigma_{1}(d,x):=2\left(\gamma_{1}(x)-\gamma_{0}(x)\right)\left(\mathbf{1}\left\{ \gamma_{1}(x)-\gamma_{0}(x)\geq0\right\} -\delta(w)\right)^{2}\frac{d}{\pi(x)}$.
Moreover, let $\mathbb{E}_{t}[\cdotp]$ be the expectation under $P_{t}$,
a one-dimensional subfamily of $P\in\mathcal{P}$ that equals the
true distribution when $t=0$. Let $\gamma_{1}(x,t):=\arg\min_{\tilde{\gamma}_{1}\in\mathcal{H}_{1}}\mathbb{E}_{t}\left[\left(Y-\tilde{\gamma}_{1}(x)\right)^{2}\right]$.
Then, the following holds:
\[
\mathbb{E}_{t}\left[\varsigma_{1}(d,x)\gamma_{1}(X,t)\right]=\mathbb{E}_{t}\left[\varsigma_{1}(d,x)Y\right].
\]
Analogously, For all $\tilde{\gamma}_{0}\in\mathcal{H}_{0}$ such
that $\mathbb{E}[\tilde{\gamma}_{0}^{2}(X)]<\infty$, we have
\begin{align*}
\mathbb{E}\left[D_{0}(Z,\tilde{\gamma}_{0}(X))\right] & =-\mathbb{E}\left[2\left(\gamma_{1}(X)-\gamma_{0}(X)\right)\left(\mathbf{1}\left\{ \gamma_{1}(X)-\gamma_{0}(X)\geq0\right\} -\delta(W)\right)^{2}\tilde{\gamma}_{0}(X)\right]\\
 & =-\mathbb{E}\left[2\left(\gamma_{1}(X)-\gamma_{0}(X)\right)\left(\mathbf{1}\left\{ \gamma_{1}(X)-\gamma_{0}(X)\geq0\right\} -\delta(W)\right)^{2}\frac{1-D}{1-\pi(X)}\tilde{\gamma}_{0}(X)\right]\\
 & =\mathbb{E}\left[\varsigma_{0}(D,X)\tilde{\gamma}_{0}(X)\right],
\end{align*}
where $\varsigma_{0}(d,x):=-2\left(\gamma_{1}(x)-\gamma_{0}(x)\right)\left(\mathbf{1}\left\{ \gamma_{1}(x)-\gamma_{0}(x)\geq0\right\} -\delta(w)\right)^{2}\frac{1-d}{1-\pi(x)}$
is such that 
\[
\mathbb{E}_{t}\left[\varsigma_{0}(d,x)\gamma_{0}(X,t)\right]=\mathbb{E}_{t}\left[\varsigma_{0}(d,x)Y\right],
\]
and $\gamma_{0}(X,t):=\arg\min_{\tilde{\gamma}_{0}\in\mathcal{H}_{0}}\mathbb{E}_{t}\left[\left(Y-\tilde{\gamma}_{0}(x)\right)^{2}\right]$. The conclusion then follows by invoking \citet[][Proposition 4]{newey1994asymptotic}. 


\subsection*{Proof of Proposition \ref{prop:parametric}}

Statement (i) and the first part of statement (ii) directly follows
from applying Theorem \ref{thm:main}. We show the asymptotic
distribution result under Assumptions \ref{asm:unconfounded}-\ref{asm:margin}. With a slight abuse of notation, let $\delta^{*}(w)=\left(\beta^{*}\right)^{\prime}p(w)$,
where $\beta^{*}=A^{-1}B$. Note this $\delta^*$ is the unique rule in $\mathcal{D}$ that solves \eqref{eq:loss.proposal}. Moreover, note
\begin{align*}
A & =\mathbb{E}\left[\tau^{2}(Z)p(W)p(W)^{\prime}\right]=\mathbb{E}\left[\xi(Z)p(W)p(W)^{\prime}\right],\\
B & =\mathbb{E}\left[\tau^{2}p(W)\mathbf{1}\left\{ \tau(X)\geq0\right\} \right]=\mathbb{E}\left[\xi(Z)p(W)\mathbf{1}\left\{ \tau(X)\geq0\right\} \right].
\end{align*}
\paragraph*{Step 1:} Conclude from Lemmas \ref{lem:remainder.1}, \ref{lem:remainder.2},
\ref{lem:remainder.5} and \ref{lem:remainder.6} that 
$\sqrt{n}\left\Vert \hat{\beta}-\tilde{\beta}\right\Vert =o_{p}(1)$,$\sqrt{n}\left(\tilde{\beta}-\beta^{*}\right)\overset{d}{\rightarrow}N(0,A^{-1}VA^{-1})$,
implying $\sqrt{n}\left(\hat{\beta}-\beta^{*}\right)\overset{d}{\rightarrow}N(0,A^{-1}VA^{-1})$ and $\hat{\beta}-\beta^{*}=O_{p}\left(\frac{1}{\sqrt{n}}\right)$.

\paragraph*{Step 2:}  We show that $\hat{\mathbf{p}}:=\int_{w:\hat{\delta}(w)\notin[0,1]}dF_{W}(w)=o_{p}(1)$.
Note $\left|\hat{\delta}(w)-\delta^{*}(w)\right|\leq\left\Vert \hat{\beta}-\beta^{*}\right\Vert \zeta_{p}$
for all $w\in\mathcal{W}$. With the conclusions of step 1, we have
\[
\sup_{w\in\mathcal{W}}\left|\hat{\delta}(w)-\delta^{*}(w)\right|=O_{p}\left(\frac{1}{\sqrt{n}}\right),
\]
implying that for some $C_{\epsilon}>0$ and all $n$ sufficiently
large,
\[
Pr\left\{ \sup_{w\in\mathcal{W}}\left|\hat{\delta}(w)-\delta^{*}(w)\right|>\frac{C_{\epsilon}}{\sqrt{n}}\right\} <\epsilon
\]
for any $\epsilon>0$. Let $\mathcal{E}_{\hat{\delta}}$ be the event
that $\sup_{w\in\mathcal{W}}\left|\hat{\delta}(w)-\delta^{*}(w)\right|\leq\frac{C_{\epsilon}}{\sqrt{n}}$.
Then, for any $\nu>0$
\begin{align*}
Pr\left\{ \hat{\mathbf{p}}>\nu\right\} \leq & Pr\left\{ \hat{\mathbf{p}}>\nu\mid\mathcal{E}_{1,n}\right\} Pr\left\{ \mathcal{E}_{1,n}\right\} +\left(1-Pr\left\{ \mathcal{E}_{1,n}\right\} \right)\\
\leq & Pr\left\{ \hat{\mathbf{p}}>\nu\mid\mathcal{E}_{1,n}\right\} +\epsilon.
\end{align*}
Of note,
\begin{align*}
Pr\left\{ \hat{\mathbf{p}}>\nu\mid\mathcal{E}_{1,n}\right\} = & Pr\left\{ \int_{w:\hat{\delta}(w)\in\left[-\frac{C_{\epsilon}}{\sqrt{n}},0\right]\cup\left[1,\frac{C_{\epsilon}}{\sqrt{n}}+1\right]}dF_{W}(w)>\nu\mid\mathcal{E}_{1,n}\right\}\leq \epsilon
\end{align*}
for $n$ sufficiently large. Therefore, for all $n$ sufficiently large,
$Pr\left\{ \hat{\mathbf{p}}>\nu\right\} \leq2\epsilon.$ Conclude
$\hat{\mathbf{p}}=o_{p}(1)$ by the arbitrariness of $\nu$ and $\epsilon$. 
\paragraph{Step 3:} We now analyze the asymptotic behavior of $n\left(L(\hat{\delta}^{\mathcal{T}},\tau)-L(\delta^{*},\tau)\right)$.
Note $\delta^{*}$ satisfies the following condition, 
\[
\mathbb{E}\left[\tau^{2}(X)\left(\mathbf{1}\left\{ \tau(X)\geq0\right\} -\delta^{*}(W)\right)\mid W\right]=0,
\]
implying $n\left(L(\hat{\delta}^{\mathcal{T}},\tau)-L(\delta^{*},\tau)\right) =n\mathbb{E}\left[\tau^{2}(X)\left(\hat{\delta}^{\mathcal{T}}(W)-\delta^{*}(W)\right)^{2}\right]$
by Taylor's theorem. Conclude that 
\begin{align*}
n\left(L(\hat{\delta}^{\mathcal{T}},\tau)-L(\delta^{*},\tau)\right) & =T_{4,n}+T_{5,n},
\end{align*}
where 
\begin{align*}
T_{4,n} & =n\mathbb{E}\left[\tau^{2}(X)\left(\hat{\delta}^{\mathcal{T}}(W)-\delta^{*}(W)\right)^{2}\mathbf{1}\left\{ \hat{\delta}(W)\in[0,1]\right\} \right],\\
T_{5,n} & =n\mathbb{E}\left[\tau^{2}(X)\left(\hat{\delta}^{\mathcal{T}}(W)-\delta^{*}(W)\right)^{2}\mathbf{1}\left\{ \hat{\delta}(W)\notin[0,1]\right\} \right].
\end{align*}
For $T_{5,n}$, note 
\begin{align*}
\left|T_{5,n}\right| & \leq n\mathbb{E}\left[\tau^{2}(X)\left(\hat{\delta}(W)-\delta^{*}(W)\right)^{2}\mathbf{1}\left\{ \hat{\delta}(W)\notin[0,1]\right\} \right]\\
 & \leq n\left(\hat{\beta}-\beta^{*}\right)^{\prime}\hat{\mathcal{A}}\left(\hat{\beta}-\beta^{*}\right),
\end{align*}
where
\[
\hat{\mathcal{A}}:=\mathbb{E}\left[\tau^{2}(X)p(W)p^{\prime}(W)\mathbf{1}\left\{ \hat{\delta}(W)\notin[0,1]\right\} \right].
\]
Furthermore, note $\left\Vert \hat{\mathcal{A}}\right\Vert \leq C_{\xi}\zeta_{p}^{2}\hat{\mathbf{p}}=o_{p}(1)$
by conclusions of Step 2, and $\hat{\beta}-\beta^{*}=O_{p}\left(\frac{1}{\sqrt{n}}\right)$
by conclusions from Step 1. Conclude that $T_{5,n}=o_{p}(1)$. For term $T_{4,n}$, note 
\begin{align*}
T_{4,n} & =\mathbb{E}\left[\tau^{2}(X)\left(\hat{\delta}(W)-\delta^{*}(W)\right)^{2}\mathbf{1}\left\{ \hat{\delta}(W)\in[0,1]\right\} \right]=T_{4,1,n}-T_{4,2,n},
\end{align*}
where
\begin{align*}
T_{4,1,n} & :=\left(\sqrt{n}\left(\hat{\beta}-\beta^{*}\right)\right)^{\prime}A\left(\sqrt{n}\left(\hat{\beta}-\beta^{*}\right)\right)\overset{d}{\rightarrow}N(0,\Omega)AN(0,\Omega),
\end{align*}
and 
\begin{align*}
T_{4,2,n} & :=\left(\sqrt{n}\left(\hat{\beta}-\beta^{*}\right)\right)^{\prime}\hat{\mathcal{A}}\left(\sqrt{n}\left(\hat{\beta}-\beta^{*}\right)\right) =o_{p}(1)
\end{align*}
by conclusions from step 2. As $T_{5,n}=o_{p}(1),$ $T_{4,2,n}=o_{p}(1)$,
we conclude that the asymptotic distribution of $n\left(L(\hat{\delta}^{\mathcal{T}},\tau)-L(\delta^{*},\tau)\right)$
is determined by $T_{4,1,n}$, completing the proof.

\newpage
\global\long\def\thepage{OA-\arabic{page}}%
 \setcounter{page}{1}

\part*{Online Supplement}

\section{Additional results for Appendix \ref{sec:technical.discussion}}\label{sec:OS.1}

\subsection{Proof of Proposition \ref{prop:capacity}}
If the capacity constraint is not binding, the unconstrained optimal
is also the constrained optimal. Thus, focus on the case when the
capacity constraint is binding, i.e., $\sum\limits _{j=1}^{J}\frac{p_{j}b_{j}}{a_{j}}-t>0$.
Consider the following primal (\textbf{P}):
\begin{align*}
\min_{\left\{ \delta_{j}\right\} _{j=1}^{J}} & \sum_{j=1}^{J}p_{i}\left[b_{j}-2b_{j}\delta_{j}+a_{j}\delta_{j}^{2}\right]\nonumber \\
s.t. & \sum_{j=1}^{J}p_{j}\delta_{j}\leq t,\\
 & \delta_{j}\geq0,j=1\ldots J.
\end{align*}
We characterize the solution of \textbf{P}, denoted as $\left\{ \delta_{j}^{*}\right\} _{j=1}^{J}$,
which, as we will show below, does not violate constraints $\delta_{j}\leq1$
for all $j=1\ldots J$. Therefore, \textbf{$\left\{ \delta_{j}^{*}\right\} _{j=1}^{J}$
}is also optimal even with the additional constraints. Next, since
\textbf{P} is convex with differentible objective and constraint functions,
and the Slater's condition is satisfied, it follows that (\citealt{boyd2004convex}, Chapter
5, p.244) strong duality holds, and \textbf{$\left\{ \delta_{j}^{*}\right\} _{j=1}^{J}$}
is a solution of \textbf{P} if and only if the following KKT conditions
are met:
\begin{align*}
-2p_{j}b_{j}+2p_{j}a_{j}\delta_{j}^{*}+\lambda^{*}p_{j}-h_{j}^{*} & =0,j=1\ldots J,\\
\delta_{j}^{*}\geq0,h_{j}^{*}\geq0,h^{*}_{j}\delta_{j}^{*} & =0,j=1\ldots J,\\
\lambda^{*}\left(\sum_{j=1}^{J}p_{j}\delta_{j}^{*}-t\right)=0,\sum_{j=1}^{J}p_{j}\delta_{j}^{*}\leq t,\lambda^{*} & \geq0.
\end{align*}
The first equation implies
\[
\delta_{j}^{*}=\frac{b_{j}}{a_{j}}-\frac{\lambda^{*}}{2a_{j}}+\frac{h_{j}^{*}}{2p_{j}a_{j}},j=1\ldots J.
\]
If $\sum_{j=1}^{J}p_{j}\delta_{j}^{*}<t$, then $\lambda^{*}=0$,
and $\delta_{j}^{*}=\frac{b_{j}}{a_{j}}+\frac{h_{j}^{*}}{2p_{j}a_{j}}\geq\frac{b_{j}}{a_{j}}$.
It follows then $\sum_{j=1}^{J}p_{j}\delta_{j}^{*}\geq\sum_{j=1}^{J}\frac{p_{j}b_{j}}{a_{j}}>t$,
a contradiction. Conclude that the capacity constraint must be binding,
i.e., $\sum_{j=1}^{J}p_{j}\delta_{j}^{*}=t$. Furthermore, for any
$j$ such that $h_{j}^{*}=0$, we must have 
\[
\delta_{j}^{*}=\frac{b_{j}}{a_{j}}-\frac{\lambda^{*}}{2a_{j}}\geq0.
\]
Since $b_{j}$ is decreasing in $j$, we conclude that 
\[
\delta_{j}^{*}\geq0\Rightarrow\delta_{i}^{*}\geq0,\forall i\leq j,i\in\left\{ 1,\ldots J\right\} .
\]
Therefore, there must exist some $J^{*}\in\left\{ 1,\ldots J\right\} $
such that 
\begin{align*}
\delta_{j}^{*} & \geq0,\forall j\leq J^{*};\delta_{j}^{*}=0,\forall j>J^{*},
\end{align*}
implying 
\begin{align*}
h_{j}^{*} & =0,\forall j\leq J^{*};h_{j}^{*}>0,\forall j>J^{*},
\end{align*}
and 
\[
\delta_{j}^{*}=\frac{b_{j}}{a_{j}}-\frac{\lambda^{*}}{2a_{j}},\forall j\leq J^{*}.
\]
With the binding capacity constraint, we can back out the value of $\lambda^*$, which must also satisfy the nonnegativity constraint.

\subsection{Proof of Lemma \ref{lem:main.1}}

Recall
\[
T_{1n}=L(\hat{\delta},\tau)-L(\delta^{*},\tau)-2\left[L_{n}^{o}(\hat{\delta},\eta_{0})-L_{n}^{o}(\delta^{*},\eta_{0})\right].
\]
And note for all $\delta\in\mathcal{D}$ as well as $\delta^{*}$,
\[
L^{o}(\delta,\eta_{0})=L(\delta,\tau),\quad L^{o}(\delta^{*},\eta_{0})=L(\delta^{*},\tau).
\]
To ease notational burden, in what follows, we write $L^{o}(\delta):=L^{o}(\delta,\eta_{0})$,
$L(\delta):=L(\delta,\tau)$ for any $\delta$. On event $\mathcal{E}_n$, the minimum eigenvalue of $\hat{A}_n$ is bounded away from $\underline{\lambda}-\varepsilon$. Therefore, Assumption \ref{asm:stability} implies that, for some $C_L$, $\hat{\delta}$ is an element of the space 
\[
\mathcal{D}_{C_L}:=\mathcal{D}_{n,C_L}:=\left\{ f(w)=\sum_{j=1}^{\sieved}\beta_{j}p_{j}(w):\sup_{w}|f(w)|\leq C_L\right\} .
\] 
Then, for all $t>0$ and on event $\mathcal{E}_{n}$, we have\footnote{The corresponding probability refers to the conditional probability on event $\mathcal{E}_{n}$.} 
\begin{align*}
Pr\{T_{1n}>t\} & \leq Pr\left\{ \exists\delta\in\mathcal{D}_{C_L}:L(\delta)-L(\delta^{*})-2\left[L_{n}^{o}(\delta)-L_{n}^{o}(\delta^{*})\right]>t\right\} \\
 & =Pr\left\{ \exists\delta\in\mathcal{D}_{C_L}:L^{o}(\delta)-L^{o}(\delta^{*})-2\left[L_{n}^{o}(\delta)-L_{n}^{o}(\delta^{*})\right]>t\right\} \\
 & =Pr\{\exists\delta\in\mathcal{D}_{C_L}:2\left[L^{o}(\delta)-L^{o}(\delta^{*})\right]-2\left[L_{n}^{o}(\delta)-L_{n}^{o}(\delta^{*})\right]\\
 & >t+L^{o}(\delta)-L^{o}(\delta^{*})\}\\
 & =Pr\left\{ \exists\delta\in\mathcal{D}_{C_L}:\frac{L^{o}(\delta)-L^{o}(\delta^{*})-\left[L_{n}^{o}(\delta)-L_{n}^{o}(\delta^{*})\right]}{t+L^{o}(\delta)-L^{o}(\delta^{*})}>\frac{1}{2}\right\} \\
 & \leq Pr\left\{ \sup_{\delta\in\mathcal{D}_{C_L}}\frac{\left|L^{o}(\delta)-L^{o}(\delta^{*})-\left[L_{n}^{o}(\delta)-L_{n}^{o}(\delta^{*})\right]\right|}{t+L^{o}(\delta)-L^{o}(\delta^{*})}>\frac{1}{2}\right\} .
\end{align*}
We now apply Lemma \ref{lem:kohler} to bound the above term and to
derive an upper bound for $\mathbb{E}_{P^{n}}T_{1n}$. Let $\left(Z_{1},Z_{2},\ldots Z_{n}\right)$
be iid copies of $Z=(Y,D,X)$. For each $f\in\mathcal{D}_{C_L}$, write
\[
g_{f}(z):=\xi(z)\left(\mathbf{1}\{\tau(x)\geq0\}-f(w))\right)^{2}-\xi(z)\left(\mathbf{1}\{\tau(x)\geq0\}-\delta^{*}(w)\right)^{2},
\]
where recall 
\begin{align*}
\xi(z) & =\left[\gamma_{1}(x)-\gamma_{0}(x)\right]^{2}+d\omega_{1}(x)(y-\gamma_{1}(x))-\left(1-d\right)\omega_{0}(x)(y-\gamma_{0}(x)),\\
f(w) & =\beta^{\prime}p(w),\sup_{w\in\mathcal{W}}\left|f(w)\right|\leq C_L.
\end{align*}
Consider the following functional class $\mathcal{G}:=\left\{ g=g_{f}(z)\mid f\in\mathcal{D}_{C_L}\right\}$. Under Assumptions \ref{asm:unconfounded} and \ref{asm:reg}, there
exists some finite $K_{1}\geq1$ such that 
\begin{align*}
\left|g_{f}(z)\right| & \leq\left|\xi(z)\right|\left(1+L\right)^{2}+\left|\xi(z)\right|\leq K_{1},
\end{align*}
for all $g_{f}\in\mathcal{G}$ and $z\in\mathcal{Z}$. Furthermore, we can equivalently write each $g_{f}\in\mathcal{G}$ as 
\begin{align*}
g_{f}(z) & =\xi(z)\left[\left(\delta^{*}(w)-f(w)\right)^{2}+2\left(\mathbf{1}\{\tau(x)\geq0\}-\delta^{*}(w)\right)\left(\delta^{*}(w)-f(w)\right)\right].
\end{align*}
Moreover, for each $g_{f}\in\mathcal{G}$, 
\begin{align*}
 & \mathbb{E}\xi(Z)\left[\left(\mathbf{1}\{\tau(X)\geq0\}-\delta^{*}(W)\right)\left(\delta^{*}(W)-f(W)\right)\right]\\
= & \mathbb{E}\left[\tau(X)^{2}\left(\mathbf{1}\{\tau(X)\geq0\}-\delta^{*}(W)\right)\left(\delta^{*}(W)-f(W)\right)\right]=  0,
\end{align*}
where the last equality follows from the fact that $\delta^{*}$ must
satisfy the following condition (due to Proposition \ref{prop:population}(i))
\begin{align*}
\mathbb{E}\left[\tau(X)^{2}\left(\mathbf{1}\{\tau(X)\geq0\}-\delta^{*}(W)\right)\mid W\right]=0.
\end{align*}
As a result, we conclude 
\begin{align*}
\mathbb{E}[g_{f}(Z)] & =\mathbb{E}\left[\xi(Z)\left(\delta^{*}(W)-f(W)\right)^{2}\right]=\mathbb{E}\left[\tau^{2}(X)\left(\delta^{*}(W)-f(W)\right)^{2}\right],\forall g_{f}\in\mathcal{G}.
\end{align*}
Moreover, note 
\[
g_{f}^{2}(z)=\xi^{2}(z)\left(\delta^{*}(w)-f(w)\right)^{2}\left[\left(\delta^{*}(w)-f(w)\right)+2\left(\mathbf{1}\{\tau(x)\geq0\}-\delta^{*}(w)\right)\right]^{2}.
\]
It follows then 
\begin{align*}
\mathbb{E}[g_{f}^{2}(Z)] & =\mathbb{E}\left\{ \xi^{2}(Z)\left[2\mathbf{1}\{\tau(X)\geq0\}-f(W)-\delta^{*}(W)\right]^{2}\left[\delta^{*}(W)-f(W)\right]^{2}\right\} \\
 & \leq\left(3+L\right)^{2}\mathbb{E}\left\{ \xi^{2}(Z)\left[\delta^{*}(W)-f(W)\right]^{2}\right\} ,
\end{align*}
where note 
\begin{align*}
 & \mathbb{E}\left\{ \xi^{2}(Z)\left[\delta^{*}(W)-f(W)\right]^{2}\right\} \\
= & \mathbb{E}\left\{ \tau^{4}(X)\left[\delta^{*}(W)-f(W)\right]^{2}\right\} \\
+ & \mathbb{E}\left\{ \left(\frac{2De_{1}}{\pi(X)}-\frac{\left(1-D\right)2e_{0}}{1-\pi(X)}\right)^{2}\tau^{2}(X)\left[\delta^{*}(W)-f(W)\right]^{2}\right\} .
\end{align*}
Furthermore, under Assumptions \ref{asm:unconfounded} and \ref{asm:reg},
observe that 
\[
\mathbb{E}\left\{ \tau^{4}(X)\left[\delta^{*}(W)-f(W)\right]^{2}\right\} \leq\sup_{x\in\mathcal{X}}\tau^{2}(x)\mathbb{E}[g_{f}(Z)],
\]
and 
\begin{align*}
 & \mathbb{E}\left\{ \left(\frac{2De_{1}}{\pi(X)}-\frac{\left(1-D\right)2e_{0}}{1-\pi(X)}\right)^{2}\tau^{2}(X)\left[\delta^{*}(W)-f(W)\right]^{2}\right\} \\
\leq & \sup_{x\in\mathcal{X}}\left(\frac{4\mathbb{E}[e_{1}^{2}\mid X=x]}{\pi^{2}(x)}+\frac{4\mathbb{E}[e_{1}^{2}\mid X=x]}{\left(1-\pi(x)\right)}\right)\mathbb{E}[g_{f}(Z)].
\end{align*}
Thus, we conclude that there exists a finite number $K_{2}\geq1$
such that 
$\mathbb{E}[g_{f}^{2}(Z)]\leq K_{2}\mathbb{E}g_{f}(Z)
$.
Denote by $N(\epsilon,\mathcal{G},d_{2,n})$ the covering number for
$\mathcal{G}$ with respect to the empirical $L_{2}$ distance 
\[
d_{2,n}(g_{f_{1}},g_{f_{2}}):=\left\{ \frac{1}{n}\sum_{i=1}^{n}\left[g_{f_{1}}(Z_{i})-g_{f_{2}}(Z_{i})\right]^{2}\right\} ^{1/2}.
\]
Note for each $g_{f}\in\mathcal{G}$, 
\begin{align*}
g_{f}(z) & =\xi(z)\left(\mathbf{1}\{\tau(x)\geq0\}-f(w)\right)^{2}-\xi(z)\left(\mathbf{1}\{\tau(x)\geq0\}-\delta^{*}(w)\right)^{2}\\
 & =\xi(z)\left(\mathbf{1}\{\tau(x)\geq0\}-2f(w)+f^{2}(w)\right)-\xi(z)\left(\mathbf{1}\{\tau(x)\geq0\}-\delta^{*}(w)\right)^{2},
\end{align*}
implying $\mathcal{G}$ is a subset of a linear vector space with dimension $d_\mathcal{G}$, where $d_\mathcal{G}\leq1+\sieved+\sievedstar\leq1+2\sievedstar$. 
It follows from \citet[Corollary 2.6]{geer2000empirical}
that 
\[
N\left(\epsilon,\left\{ g_{f}\in\mathcal{G}:\frac{1}{n}\sum_{i=1}^{n}\left[g_{f}(Z_{i})\right]^{2}\leq4\delta\right\} ,d_{2,n}\right)\leq\left(\frac{8\sqrt{\delta}+\epsilon}{\epsilon}\right)^{d_\mathcal{G}+1},
\]
Hence, integration by change-of-variable yields 
\begin{align*}
 & \int_{0}^{\sqrt{\delta}}\left\{ \log N\left(\epsilon,\left\{ g_{f}\in\mathcal{G}:\frac{1}{n}\sum_{i=1}^{n}\left[g_{f}(Z_{i})\right]^{2}\leq4\delta\right\} ,d_{2,n}\right)\right\} ^{1/2}d\epsilon\\
\leq & \left(d_\mathcal{G}+1\right)^{1/2}\int_{0}^{\sqrt{\delta}}\left\{ \log\left(\frac{8\sqrt{\delta}+\epsilon}{\epsilon}\right)\right\} ^{1/2}d\epsilon\\
= & \left(d_\mathcal{G}+1\right)^{1/2}\sqrt{\delta}\int_{0}^{1}\left\{ \log\left(1+\frac{8}{t}\right)\right\} ^{1/2}dt\\
= & \left(d_\mathcal{G}+1\right)^{1/2}\sqrt{\delta}\int_{1}^{\infty}\frac{\sqrt{\log\left(1+8u\right)}}{u^{2}}du\\
\leq & \left(d_\mathcal{G}+1\right)^{1/2}\sqrt{\delta}\int_{1}^{\infty}\frac{\sqrt{8u}}{u^{2}}du(\log\left(1+x\right)\leq x\text{ for }x>0)\\
= & 2\sqrt{2}\left(d_\mathcal{G}+1\right)^{1/2}\sqrt{\delta}\int_{1}^{\infty}u^{-\frac{3}{2}}du\\
= & 4\sqrt{2}\left(d_\mathcal{G}+1\right)^{1/2}\sqrt{\delta}.
\end{align*}
Thus, for $\varepsilon=\frac{1}{2}$ in the conditions of Lemma \ref{lem:kohler},
there exists some finite constant $c_{K_{1},K_{2}}>0$ such that and
for all 
\[
\alpha\geq c_{K_{1},K_{2}}\frac{d_\mathcal{G}+1}{n},\delta\geq4\alpha,
\]
all the conditions of Lemma \ref{lem:kohler} are met so that we conclude
for all $\alpha=c_{K_{1},K_{2}}\frac{d_\mathcal{G}+1}{n}$, we have that
there exists some finite constant $\overline{c}_{K_{1},K_{2}}>0$
such that for all $t\geq c_{K_{1},K_{2}}\frac{d_\mathcal{G}+1}{n}$, we
have 
\begin{align*}
Pr\{T_{1n}>t\}\leq & 50\exp\left(-\frac{nt}{\overline{c}_{K_{1},K_{2}}}\right).
\end{align*}
Then, for each $P^{n}\in\mathcal{P}^{n}$, 
\begin{align*}
\mathbb{E}_{P^{n}}[T_{1n}] & \leq\int_{0}^{\infty}Pr\{T_{1n}>t\}dt\\
 & \leq c_{K_{1},K_{2}}\frac{d_\mathcal{G}+1}{n}+\int_{C_{K_{1},K_{2}}\frac{d_\mathcal{G}+1}{n}}^{\infty}50\exp\left(-\frac{nt}{\overline{c}_{K_{1},K_{2}}}\right)dt\\
 & =c_{K_{1},K_{2}}\frac{d_\mathcal{G}+1}{n}+\frac{50\overline{c}_{K_{1},K_{2}}}{n}\exp\left(-\frac{c_{K_{1},K_{2}}\left(d_\mathcal{G}+1\right)}{\overline{c}_{K_{1},K_{2}}}\right),
\end{align*}
implying there exists some constant $\mathcal{C}_{1}$ that only depends
on $K_{1}$, $K_{2}$ 
such that 
\[
\mathbb{E}_{P^{n}}[T_{1n}]\leq\mathcal{C}_{1}\frac{\sievedstar}{n}.
\]
As $\sievedstar$ and $\mathcal{C}_{1}$ does not depend on $P^{n}$,
the conclusion of the lemma follows.

\subsection{Proof of Lemma \ref{lem:main.2}}
On event $\mathcal{E}_{n}$, we have
\[
\hat{\beta}=\hat{A}_{n}^{-1}\hat{B}_{n},\quad\tilde{\beta}=A_{n}^{-1}B_{n},
\]
and  $\tilde{\delta}$ solves the oracle
problem  $\min_{\delta=\beta^{\prime}p,\beta\in\mathbb{R}^{\sieved}}L_{n}^{o}(\delta)$, satisfying the following FOC:
\[
\frac{1}{n}\sum_{i=1}^{n}\left[\xi_{i}\left(\mathbf{1}\left\{ \tau_{i}\geq0\right\} -\tilde{\delta}_{i}\right)p_i\right]=0.
\]
Then, Taylor's theorem implies
\begin{align*}
 0\leq& L_{n}^{o}(\hat{\delta},\eta_{0})-L_{n}^{o}(\tilde{\delta},\eta_{0})\\
= & \frac{1}{n}\sum_{i=1}^{n}\xi_{i}\left(\mathbf{1}\left\{ \tau_{i}\geq0\right\} -\hat{\delta}_{i}\right)^{2}-\frac{1}{n}\sum_{i=1}^{n}\xi_{i}\left(\mathbf{1}\left\{ \tau_{i}\geq0\right\} -\tilde{\delta}_{i}\right)^{2}\\
= & (\hat{\beta}-\tilde{\beta})^{\prime}A_{n}(\hat{\beta}-\tilde{\beta}),
\end{align*}
where 
\[
\hat{\beta}-\tilde{\beta}=A_{n}^{-1}\left(A_{n}-\hat{A}_{n}\right)\hat{A}_{n}^{-1}\hat{B}_{n}+A_{n}^{-1}\left(\hat{B}_{n}-B_{n}\right).
\]
Furthermore, on event $\mathcal{E}_{n}$, algebra shows 
\begin{align*}
 L_{n}^{o}(\hat{\delta},\eta_{0})-L_{n}^{o}(\tilde{\delta},\eta_{0})= & \hat{B}_{n}^{\prime}\hat{A}_{n}^{-1}\left(A_{n}-\hat{A}_{n}\right)A_{n}^{-1}\left(A_{n}-\hat{A}_{n}\right)\hat{A}_{n}^{-1}\hat{B}_{n}\\
+&2\hat{B}_{n}^{\prime}\hat{A}_{n}^{-1}\left(A_{n}-\hat{A}_{n}\right)A_{n}^{-1}\left(\hat{B}_{n}-B_{n}\right)\\
+&\left(\hat{B}_{n}-B_{n}\right)^{\prime}A_{n}^{-1}\left(\hat{B}_{n}-B_{n}\right).
\end{align*}
Also note $\left\Vert \hat{B}_{n}\right\Vert \leq\tilde{C}_{\xi}\zeta_{p}$.
Applying triangle and Cauchy-Schwarz inequalities yields:
\begin{align*}
 & \mathbb{E}_{P^{n}}\left[L_{n}^{o}(\hat{\delta},\eta_{0})-L_{n}^{o}(\tilde{\delta},\eta_{0})\right]\\
\leq & \left(\frac{\tilde{C}_{\xi}^{2}\zeta_{p}^2}{\left(\underline{\lambda}-\varepsilon\right)^{3}}+\frac{\tilde{C}_{\xi}\zeta_{p}}{\left(\underline{\lambda}-\varepsilon\right)^{2}}\right)\mathbb{E}_{P^{n}}\left\Vert \hat{A}_{n}-A_{n}\right\Vert ^{2}\\
+ & \left(\frac{1}{\underline{\lambda}-\varepsilon}+\frac{\tilde{C}_{\xi}\zeta_{p}}{\left(\underline{\lambda}-\varepsilon\right)^{2}}\right)\mathbb{E}_{P^{n}}\left\Vert \hat{B}_{n}-B_{n}\right\Vert ^{2}\\
\leq & c_1\left(\zeta_{p}^2\mathbb{E}_{P^{n}}\left\Vert \hat{A}_{n}-A_{n}\right\Vert ^{2}+\zeta_{p}\mathbb{E}_{P^{n}}\left\Vert \hat{B}_{n}-B_{n}\right\Vert ^{2}\right),
\end{align*}
where $c_1$ is a constant that depends on $\tilde{C}_{\xi}$, $\varepsilon$
and $\underline{\lambda}$. Then, the conclusion of statement (i)
follows from invoking Lemmas \ref{lem:remainder.5} and \ref{lem:remainder.6}(i), and that of statement (ii) follows from Lemmas \ref{lem:remainder.5} and \ref{lem:remainder.6}(ii).

\subsection{Additional Lemmas supporting Theorem \ref{thm:main}}

\begin{lem}\label{lem:remainder.1}
Under Assumptions \ref{asm:unconfounded}-\ref{asm:quality}, we have 
\[
P\left\{ \lambda_{\min}\left(A_{n}\right)<\underline{\lambda}-\varepsilon\right\} \leq2\sieved\exp\left(\frac{-n\varepsilon^{2}}{4C_{\xi}^{2}\zeta_{p}^{4}}\right),
\]
for all $0<\varepsilon\leq\min\left\{ 3C_{\xi}\zeta_{p}^{2}/2,\underline{\lambda}\right\} $. 
\end{lem}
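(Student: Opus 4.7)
The plan is to reduce the eigenvalue bound to a standard matrix concentration statement. First, I would verify that $A = \mathbb{E}[A_n]$. By Assumption~\ref{asm:unconfounded} and iterated expectations, $\mathbb{E}[D(Y - \gamma_1(X)) \mid X] = 0$ and $\mathbb{E}[(1-D)(Y - \gamma_0(X)) \mid X] = 0$, so $\mathbb{E}[\xi(Z,\eta_0) \mid X] = \tau^2(X)$; since $W$ is a deterministic function of $X$, a second conditioning gives $\mathbb{E}[\xi_i p_i p_i'] = \mathbb{E}[\tau^2(X_i) p(W_i) p(W_i)'] = A$. By Weyl's inequality, the event $\{\lambda_{\min}(A_n) < \underline{\lambda} - \varepsilon\}$ is contained in $\{\|A_n - A\| > \varepsilon\}$ (here $\|\cdot\|$ denotes the spectral norm), so it suffices to bound the probability of the latter.

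Second, I would invoke the matrix Bernstein inequality for the i.i.d.\ sum $\sum_{i=1}^n (\xi_i p_i p_i' - A)$. Assumptions~\ref{asm:unconfounded} and~\ref{asm:reg} (with $|\xi| \leq C_\xi$) together with the definition of $\zeta_p$ give the uniform bound $\|\xi_i p_i p_i' - A\| \leq 2 C_\xi \zeta_p^2$ almost surely. For the variance proxy, the key observation is the identity $p_i p_i' \cdot p_i p_i' = \|p_i\|^2 \, p_i p_i' \preceq \zeta_p^2\, p_i p_i'$, which, combined with $|\xi_i| \leq C_\xi$ and $\|\mathbb{E}[p_i p_i']\| \leq \mathbb{E}[\|p_i\|^2] \leq \zeta_p^2$, yields
\[
\Bigl\|\sum_{i=1}^n \mathbb{E}[(\xi_i p_i p_i' - A)^2]\Bigr\| \leq n\, C_\xi^2 \zeta_p^4.
\]
Matrix Bernstein with $t = n\varepsilon$ then gives
\[
P\{\|A_n - A\| > \varepsilon\} \leq 2 d_p \exp\!\left(-\frac{n\,\varepsilon^2}{2 C_\xi^2 \zeta_p^4 \bigl(1 + \tfrac{2\varepsilon}{3 C_\xi \zeta_p^2}\bigr)}\right).
\]
The hypothesis $\varepsilon \leq \tfrac{3}{2} C_\xi \zeta_p^2$ forces the parenthesized correction to be at most $2$, producing the announced exponent $-n\varepsilon^2/(4 C_\xi^2 \zeta_p^4)$. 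The companion condition $\varepsilon \leq \underline{\lambda}$ merely ensures that the claim is non-vacuous.

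The one delicate point is the variance computation. A naive matrix Hoeffding application would replace the variance proxy by the crude bound $n(2 C_\xi \zeta_p^2)^2$ and lose a multiplicative constant in the exponent; exploiting the semidefinite ordering $p_i p_i' \cdot p_i p_i' \preceq \zeta_p^2\, p_i p_i'$ and pulling out $\|\mathbb{E}[p_i p_i']\| \leq \zeta_p^2$ is what matches the stated constant exactly. Everything else is routine bookkeeping.
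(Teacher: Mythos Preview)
Your proposal is correct and follows essentially the same route as the paper: reduce the eigenvalue event to $\{\|A_n - A\| > \varepsilon\}$ via eigenvalue perturbation, then apply matrix Bernstein (the paper cites \citealt[Corollary~6.2.1]{tropp2015introduction}) with the same variance proxy $nC_\xi^2\zeta_p^4$ and simplify under the constraint $\varepsilon \leq 3C_\xi\zeta_p^2/2$. Your explicit verification that $\mathbb{E}[A_n]=A$ via unconfoundedness and your slightly more careful centered bound $\|\xi_i p_i p_i' - A\| \leq 2C_\xi\zeta_p^2$ are minor elaborations on what the paper leaves implicit, but the argument is otherwise identical.
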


\begin{proof}
As $\left|\lambda_{\min}\left(A_{n}\right)-\underline{\lambda}\right|\leq\left\Vert A_{n}-A\right\Vert $,
we have 
\begin{align*}
P\left\{ \lambda_{\min}\left(A_{n}\right)<\underline{\lambda}-\varepsilon\right\}  & \leq P\left\{ \left|\lambda_{\min}\left(A_{n}\right)-\underline{\lambda}\right|>\varepsilon\right\} \leq P\left\{ \left\Vert A_{n}-A\right\Vert >\varepsilon\right\} .
\end{align*}
Note $\left\Vert \xi_{i}p(W_{i})p^{\prime}(W_{i})\right\Vert\leq C_{\xi}\zeta_{p}^{2}$, $\left\Vert \mathbb{E}\xi_{i}^{2}p(W_{i})p^{\prime}(W_{i})p(W_{i})p^{\prime}(W_{i})\right\Vert  \leq C_{\xi}^{2}\zeta_{p}^{4}$.
\citet[][Corollary 6.2.1]{tropp2015introduction} implies that 
\[
P\left\{ \left\Vert A_{n}-A\right\Vert >\varepsilon\right\} \leq2\sieved\exp\left(\frac{-n\varepsilon^{2}/2}{\left(C_{\xi}^{2}\zeta_{p}^{4}+C_{\xi}\zeta_{p}^{2}\varepsilon/3\right)}\right).
\]
The conclusion follows by picking $\varepsilon<\underline{\lambda}$
such that $2C_{\xi}\zeta_{p}^{2}\varepsilon/3<C_{\xi}^{2}\zeta_{p}^{4}$,
i.e., $\varepsilon<\min\left\{ 3C_{\xi}\zeta_{p}^{2}/2,\underline{\lambda}\right\} $. 
\end{proof}
\begin{lem}\label{lem:remainder.2}
Under Assumptions \ref{asm:unconfounded}-\ref{asm:quality}, for all $0<\varepsilon<\min\left\{ \underline{\lambda},6\tilde{C}_{\xi}\zeta_{p}^{2}\right\} $
and all $n$ such that 
\[
4C_{M}\zeta_{p}^{2}\left(n^{-r_{\gamma_{1}}}+n^{-r_{\gamma_{0}}}+n^{-\frac{r_{\omega_{1}}+r_{\gamma_{1}}}{2}}+n^{-\frac{r_{\omega_{0}}+r_{\gamma_{0}}}{2}}\right)<\varepsilon,
\]
we have
\[
P\left\{ \lambda_{\min}\left(\widehat{A}_{n}\right)<\underline{\lambda}-\varepsilon\right\} \leq2\sieved K\exp\left(\frac{-n\varepsilon^{2}}{16K\tilde{C}_{\xi}^{2}\zeta_{p}^{4}}\right).
\]
\end{lem}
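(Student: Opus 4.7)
The plan is to bound $P\{\|\hat{A}_{n}-A\|>\varepsilon\}$, from which the lemma follows via Weyl's inequality $|\lambda_{\min}(\hat{A}_{n})-\underline{\lambda}|\leq\|\hat{A}_{n}-A\|$. I will split the deviation into a deterministic bias piece (controlled by Neyman orthogonality of the debiased integrand $\xi$) and a stochastic noise piece (handled by a fold-wise matrix Bernstein inequality, mirroring Lemma \ref{lem:remainder.1}).

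First, decompose by folds: write $\hat{A}_{n}=K^{-1}\sum_{k=1}^{K}\hat{A}_{n}^{k}$ with $\hat{A}_{n}^{k}:=m^{-1}\sum_{i\in I_{k}}\hat{\xi}^{k}(Z_{i})p(W_{i})p(W_{i})'$ and $m=n/K$, and set $\tilde{A}_{n}^{k}:=\mathbb{E}_{k}[\hat{A}_{n}^{k}]$. Then $\hat{A}_{n}-A=K^{-1}\sum_{k}(\hat{A}_{n}^{k}-\tilde{A}_{n}^{k})+K^{-1}\sum_{k}(\tilde{A}_{n}^{k}-A)$. For the bias, a direct calculation using Assumption \ref{asm:unconfounded} together with the identities $\omega_{1}\pi=2\tau=\omega_{0}(1-\pi)$ produces the Neyman-orthogonal cancellation
\begin{equation*}
\mathbb{E}_{k}[\hat{\xi}^{k}(Z)-\xi(Z)\mid X]=(\Delta\gamma_{1}^{k}-\Delta\gamma_{0}^{k})^{2}-\Delta\omega_{1}^{k}\pi(X)\Delta\gamma_{1}^{k}+\Delta\omega_{0}^{k}(1-\pi(X))\Delta\gamma_{0}^{k},
\end{equation*}
where $\Delta\gamma_{j}^{k}:=\hat{\gamma}_{j}^{k}-\gamma_{j}$ and $\Delta\omega_{j}^{k}:=\hat{\omega}_{j}^{k}-\omega_{j}$; every first-order piece in the nuisance errors cancels. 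Bounding $\|p(W)p(W)'\|\leq\zeta_{p}^{2}$, using $(a-b)^{2}\leq 2a^{2}+2b^{2}$ on the square and Cauchy--Schwarz on the cross terms, then invoking Assumption \ref{asm:quality}(i), gives the a.s.\ bound
\begin{equation*}
\|\tilde{A}_{n}^{k}-A\|\leq 2C_{M}\zeta_{p}^{2}\bigl(n^{-r_{\gamma_{1}}}+n^{-r_{\gamma_{0}}}+n^{-(r_{\omega_{1}}+r_{\gamma_{1}})/2}+n^{-(r_{\omega_{0}}+r_{\gamma_{0}})/2}\bigr)<\varepsilon/2
\end{equation*}
under the stated condition on $n$; averaging across folds preserves this a.s.\ bound.

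For the noise I argue fold-by-fold. Conditional on $\{Z_{j}\}_{j\notin I_{k}}$, the matrices $\hat{\xi}^{k}(Z_{i})p(W_{i})p(W_{i})'-\tilde{A}_{n}^{k}$, $i\in I_{k}$, are i.i.d.\ mean-zero with spectral norm bounded by $2\tilde{C}_{\xi}\zeta_{p}^{2}$ (using $|\hat{\xi}^{k}|\leq\tilde{C}_{\xi}$, which follows from Assumptions \ref{asm:reg} and \ref{asm:quality}(ii)) and matrix second-moment norm bounded by $\tilde{C}_{\xi}^{2}\zeta_{p}^{4}$. Matrix Bernstein then yields
\begin{equation*}
P\bigl(\|\hat{A}_{n}^{k}-\tilde{A}_{n}^{k}\|>\varepsilon/2\,\big|\,\{Z_{j}\}_{j\notin I_{k}}\bigr)\leq 2d_{p}\exp\!\left(\frac{-m\varepsilon^{2}/8}{\tilde{C}_{\xi}^{2}\zeta_{p}^{4}+\tilde{C}_{\xi}\zeta_{p}^{2}\varepsilon/3}\right)\leq 2d_{p}\exp\!\left(\frac{-n\varepsilon^{2}}{16K\tilde{C}_{\xi}^{2}\zeta_{p}^{4}}\right),
\end{equation*}
where the last step uses $\varepsilon<6\tilde{C}_{\xi}\zeta_{p}^{2}$ to dominate the Bernstein correction term by the variance term. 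Since the triangle inequality gives $\{\|K^{-1}\sum_{k}X_{k}\|>\varepsilon/2\}\subseteq\bigcup_{k}\{\|X_{k}\|>\varepsilon/2\}$, a union bound over the $K$ folds and then integrating out the conditioning deliver the claimed inequality.

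I expect the main obstacle to be the Neyman-orthogonality computation: one must carefully expand $[\hat{\gamma}_{1}^{k}-\hat{\gamma}_{0}^{k}]^{2}-[\gamma_{1}-\gamma_{0}]^{2}$ together with the two weighted-residual pieces of $\hat{\xi}^{k}$, then invoke unconfoundedness and $\omega_{j}\pi_{j}=2\tau$ to see that every first-order nuisance error cancels, leaving only products of two errors that can be controlled by the $L^{2}$ rates. A secondary technical point is keeping track of the constants in matrix Bernstein so that they reproduce the exponent $-n\varepsilon^{2}/(16K\tilde{C}_{\xi}^{2}\zeta_{p}^{4})$; this is precisely why the upper bound $\varepsilon<6\tilde{C}_{\xi}\zeta_{p}^{2}$ is imposed in the statement.
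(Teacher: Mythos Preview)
Your proposal is correct and follows essentially the same route as the paper: the paper likewise splits $\hat{A}_n$ by fold, controls the bias $\mathbb{E}_k[\hat{A}_k]-A$ via the Neyman-orthogonality expansion of $\hat{\xi}^k-\xi$ (its Lemma~\ref{lem:remainder.4}, matching your second-order remainder computation and the $2C_M\zeta_p^2(\cdots)<\varepsilon/2$ bound), applies matrix Bernstein (Tropp) fold-wise conditional on $\{Z_j\}_{j\notin I_k}$, and union-bounds over $k$. The only cosmetic difference is the order of operations: the paper first invokes Weyl's inequality for sums, $\lambda_{\min}(\hat{A}_n)\geq K^{-1}\sum_k\lambda_{\min}(\hat{A}_k)$, and then bounds each fold's minimum eigenvalue, whereas you first pass to $\|\hat{A}_n-A\|$ and then split by fold; both routes arrive at the same fold-wise Bernstein event and the same exponent.
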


\begin{proof}
Note 
\[
\widehat{A}_{n}=\frac{1}{K}\sum_{k=1}^{K}\left[\frac{1}{m}\sum_{i\in I_{k}}\hat{\xi}_{i}p_{i}p_{i}^{\prime}\right]=\frac{1}{K}\sum_{k=1}^{K}\left[\frac{1}{m}\sum_{i\in I_{k}}\hat{\xi}_{i}^{k}p_{i}p_{i}^{\prime}\right].
\]
Weyl\textquoteright s inequality implies
\[
\lambda_{\min}\left(\widehat{A}_{n}\right)\geq\frac{1}{K}\sum_{k=1}^{K}\lambda_{\min}\left[\frac{1}{m}\sum_{i\in I_{k}}\hat{\xi}_{i}^{k}p_{i}p_{i}^{\prime}\right].
\]
First, fix $0<\varepsilon<\underline{\lambda}$, and  write $\hat{A}_{k}:=\sum_{i\in I_{k}}\frac{\hat{\xi}_{i}^{k}p(W_{i})p^{\prime}(W_{i})}{m}$
and $P_{k}\left\{ \cdotp\right\} :=P_{k}\left\{ \cdotp\mid\left\{ Z_{j}\right\} _{j\in[n]\setminus I_{k}}\right\} $.
Also recall $\mathbb{E}_{k}\left[\cdotp\right]=\mathbb{E}_{k}\left[\cdotp\mid\left\{ Z_{j}\right\} _{j\in[n]\setminus I_{k}}\right]$.
We have
\begin{align*}
&P\left\{ \lambda_{\min}\left(\widehat{A}_{n}\right)<\underline{\lambda}-\varepsilon\right\}  \leq P\left\{ \frac{1}{K}\sum_{k=1}^{K}\lambda_{\min}\left[\hat{A}_{k}\right]<\underline{\lambda}-\varepsilon\right\} \\
 \leq & P\left\{ \lambda_{\min}\left[\hat{A}_{k}\right]<\underline{\lambda}-\varepsilon,\text{for some }k\in[K]\right\}  \leq\sum_{k=1}^{K}P\left\{ \lambda_{\min}\left[\hat{A}_{k}\right]<\underline{\lambda}-\varepsilon\right\} \\
  =&\sum_{k=1}^{K}\mathbb{E}\left[P_{k}\left\{ \lambda_{\min}\left(\hat{A}_{k}\right)<\underline{\lambda}-\varepsilon\right\} \right].
\end{align*}
Then, it suffices to bound $P_{k}\left\{ \lambda_{\min}\left(\hat{A}_{k}\right)<\underline{\lambda}-\varepsilon\right\} $
for each $\forall k\in[K]$. To this end, note Lemma \ref{lem:remainder.3} implies that
\begin{equation}
P_{k}\left\{ \lambda_{\min}\left(\hat{A}_{k}\right)<\underline{\lambda}-\varepsilon\right\} \leq P_{k}\left\{ \lambda_{\min}\left(\hat{A}_{k}\right)<\lambda_{\min}\left(\mathbb{E}_{k}\left[\widehat{A}_{k}\right]\right)-\varepsilon/2\right\} \label{pf:min.eigen.estimated.1}
\end{equation}
for all $n$ such that 
\[
4C_{M}\zeta_{p}^{2}\left(n^{-r_{\gamma_{1}}}+n^{-r_{\gamma_{0}}}+n^{-\frac{r_{\omega_{1}}+r_{\gamma_{1}}}{2}}+n^{-\frac{r_{\omega_{0}}+r_{\gamma_{0}}}{2}}\right)<\varepsilon.
\]
Furthermore, 

\begin{equation}
P_{k}\left\{ \lambda_{\min}\left[\hat{A}_{k}\right]<\lambda_{\min}\left(\mathbb{E}_{k}\left[\widehat{A}_{k}\right]\right)-\varepsilon/2\right\} \leq P_{k}\left\{ \left\Vert \widehat{A}_{k}-\mathbb{E}_{k}\widehat{A}_{k}\right\Vert >\varepsilon/2\right\} .\label{pf:min.eigen.estimated.2}
\end{equation}
And for each $k\in[K]$ and conditional on $\left\{ Z_{j}\right\} _{j\in[n]\setminus I_{k}}$,
$\hat{A}_{k}$ is a sum of $m$ random matrices with independent entries.
Since
\begin{align*}
\left\Vert \hat{\xi}_{i}^{k}p_{i}p_{i}^{\prime}\right\Vert  & \lesssim\tilde{C}_{\xi}\zeta_{p}^{2},\quad
\left\Vert \mathbb{E}_{k}\left[\left(\hat{\xi}_{i}^{k}\right)^{2}p_{i}p_{i}^{\prime}p_{i}p_{i}^{\prime}\right]\right\Vert   \lesssim\zeta_{p}^{4}\tilde{C}_{\xi}^{2},
\end{align*}
it follows by \citet[][Corollary 6.2.1]{tropp2015introduction}  that 
\begin{align*}
P\left\{ \left\Vert \widehat{A}_{k}-\mathbb{E}_{k}\widehat{A}_{k}\right\Vert >\varepsilon\right\}  & \leq2d_{p}\exp\left(\frac{-\frac{n}{K}\frac{\varepsilon^{2}}{8}}{\left(\zeta_{p}^{4}\tilde{C}_{\xi}^{2}+\tilde{C}_{\xi}\zeta_{p}^{2}\varepsilon/6\right)}\right).
\end{align*}
Thus, picking $\varepsilon<\min\left\{ \underline{\lambda},6\tilde{C}_{\xi}\zeta_{p}^{2}\right\} $,
we conclude that 
\begin{equation}
P_{k}\left\{ \left\Vert \widehat{A}_{k}-\mathbb{E}_{k}\widehat{A}_{k}\right\Vert >\varepsilon/2\right\} \leq2d_{p}\exp\left(-\frac{\varepsilon^{2}n}{16K\tilde{C}_{\xi}^{2}\zeta_{p}^{4}}\right)\label{pf:min.eigen.estimated.3}
\end{equation}
for all $k\in[K]$. The conclusion follows by  combining (\ref{pf:min.eigen.estimated.1}),
(\ref{pf:min.eigen.estimated.2}), and (\ref{pf:min.eigen.estimated.3})
\end{proof}
\begin{lem}\label{lem:remainder.3}
Suppose Assumptions \ref{asm:unconfounded}-\ref{asm:quality} hold. For each $\varepsilon>0$, and for each $n$ such that
\[
4C_{M}\zeta_{p}^{2}\left(n^{-r_{\gamma_{1}}}+n^{-r_{\gamma_{0}}}+n^{-\frac{r_{\omega_{1}}+r_{\gamma_{1}}}{2}}+n^{-\frac{r_{\omega_{0}}+r_{\gamma_{0}}}{2}}\right)<\varepsilon,
\]
we have, for all $k\in[K]$,
\[
\lambda_{\min}\left(\mathbb{E}_{k}\left[\widehat{A}_{k}\right]\right)\geq\underline{\lambda}-\frac{\varepsilon}{2}.
\]
\end{lem}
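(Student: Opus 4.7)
The plan is to reduce the eigenvalue bound to a norm bound on $\mathbb{E}_k[\hat{A}_k] - A$, then exploit the Neyman-orthogonal structure of $\xi$ to control that norm at second-order rates in the bias of $\hat{\eta}^k$. Since $\{Z_i\}_{i\in I_k}$ are iid copies of $Z$ and independent of $\{Z_j\}_{j\notin I_k}$, conditional on the latter $\hat{\eta}^k$ is fixed, so
\[
\mathbb{E}_k[\hat{A}_k] = \mathbb{E}\bigl[\xi(Z,\hat{\eta}^k)\,p(W)p(W)^{\prime} \,\big|\, \hat{\eta}^k\bigr].
\]
Under Assumption \ref{asm:unconfounded} together with $\pi\omega_1 = (1-\pi)\omega_0 = 2\tau$, one computes $\mathbb{E}[\xi(Z,\eta_0)\mid X] = \tau^2(X)$, so $A = \mathbb{E}[\xi(Z,\eta_0)p(W)p(W)^{\prime}]$. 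By Weyl's inequality, it then suffices to establish $\|\mathbb{E}_k[\hat{A}_k] - A\| \leq \varepsilon/2$.

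For any unit vector $v$, the tower property applied to the conditional expectation given $(X,\hat{\eta}^k)$ (using that $W$ is a function of $X$), together with $(v^{\prime} p(W))^{2} \leq \zeta_{p}^{2}$, yields
\[
\bigl|v^{\prime}(\mathbb{E}_k[\hat{A}_k] - A)v\bigr| \leq \zeta_p^2 \,\mathbb{E}\bigl[\,\bigl|\mathbb{E}[\xi(Z,\hat{\eta}^k) - \xi(Z,\eta_0)\mid X, \hat{\eta}^k]\bigr| \,\big|\, \hat{\eta}^k\bigr].
\]
The crucial step is to evaluate the inner conditional expectation in closed form. Writing $\Delta_{\gamma_j} := \hat{\gamma}_j^k - \gamma_j$ and $\Delta_{\omega_j} := \hat{\omega}_j^k - \omega_j$ for $j=0,1$, a direct expansion using $\mathbb{E}[D(Y-\gamma_1(X))\mid X] = 0 = \mathbb{E}[(1-D)(Y-\gamma_0(X))\mid X]$ and the identities $\pi\omega_1 = (1-\pi)\omega_0 = 2\tau$ forces the linear-in-$\Delta$ terms coming from $(\hat{\gamma}_1^k - \hat{\gamma}_0^k)^{2} - \tau^{2}$ to cancel against those produced by the reweighted residuals, leaving the second-order expression
\[
\mathbb{E}[\xi(Z,\hat{\eta}^k) - \xi(Z,\eta_0)\mid X, \hat{\eta}^k] = (\Delta_{\gamma_1} - \Delta_{\gamma_0})^{2} - \pi\,\Delta_{\omega_1}\Delta_{\gamma_1} + (1-\pi)\,\Delta_{\omega_0}\Delta_{\gamma_0}.
\]

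The three pieces are now controlled in $L^{1}$ using Assumption \ref{asm:quality}(i). The quadratic part contributes at most $2C_M(n^{-r_{\gamma_1}} + n^{-r_{\gamma_0}})$ via $(a-b)^{2} \leq 2a^{2} + 2b^{2}$, while each cross term is bounded by Cauchy--Schwarz together with $0 < \pi < 1$: for instance, $\mathbb{E}[|\pi\Delta_{\omega_1}\Delta_{\gamma_1}|\mid \hat{\eta}^k] \leq C_M n^{-(r_{\omega_1}+r_{\gamma_1})/2}$, and analogously for the control arm. Summing and multiplying by $\zeta_p^{2}$ gives $\|\mathbb{E}_k[\hat{A}_k] - A\| \leq 2 C_M \zeta_p^{2}(n^{-r_{\gamma_1}} + n^{-r_{\gamma_0}} + n^{-(r_{\omega_1}+r_{\gamma_1})/2} + n^{-(r_{\omega_0}+r_{\gamma_0})/2})$, which is strictly less than $\varepsilon/2$ under the hypothesis on $n$.

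The main technical obstacle is obtaining the closed-form cancellation displayed above; this is precisely the Neyman-orthogonality property built into the debiased loss $L^{o}(\delta,\eta_0)$ of Section \ref{sec:proposal}, and it is what makes the convergence rates specified in Assumption \ref{asm:quality}(i) sufficient. A naive Jensen bound on $|\xi(Z,\hat{\eta}^k) - \xi(Z,\eta_0)|$ in $L^{1}$ would yield only $n^{-r_{\gamma_j}/2}$-type rates, which are too slow to satisfy $4 C_M \zeta_p^{2}(\cdot) < \varepsilon$ without strengthening the nuisance-convergence assumptions.
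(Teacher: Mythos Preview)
Your proposal is correct and follows essentially the same route as the paper. The paper reduces to bounding $\sup_{\|a\|=1}\bigl|\mathbb{E}_k[(\hat{\xi}^k-\xi)(a'p)^2]\bigr|$ and delegates this to Lemma~\ref{lem:remainder.4}, where $\hat{\xi}^k-\xi$ is decomposed into seven terms, four of which have zero conditional expectation and three of which are second-order; your argument performs the identical computation by first conditioning on $X$ to obtain the closed form $(\Delta_{\gamma_1}-\Delta_{\gamma_0})^2 - \pi\Delta_{\omega_1}\Delta_{\gamma_1} + (1-\pi)\Delta_{\omega_0}\Delta_{\gamma_0}$, and the resulting bound $2C_M\zeta_p^2(\cdots)$ is the same.
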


\begin{proof}
Note $\mathbb{E}_{k}\left[\widehat{A}_{k}\right]=\mathbb{E}_{k}\left[\sum_{i\in I_{k}}\frac{\hat{\xi}_{i}^{k}p_{i}p_{i}^{\prime}}{m}\right]=\mathbb{E}_{k}\left[\hat{\xi}^{k}pp^{\prime}\right]$.
Thus,
\begin{align*}
\lambda_{\min}\left(\mathbb{E}_{k}\left[\widehat{A}_{k}\right]\right) & \geq\lambda_{\min}\left(\mathbb{E}_{k}\left[\xi pp^{\prime}\right]\right)+\lambda_{\min}\left(\mathbb{E}_{k}\left[\left(\hat{\xi}^{k}-\xi\right)pp^{\prime}\right]\right)\\
 & =\underline{\lambda}+\lambda_{\min}\left(\mathbb{E}_{k}\left[\left(\hat{\xi}^{k}-\xi\right)pp^{\prime}\right]\right),
\end{align*}
and
\begin{align*}
 & \left|\lambda_{\min}\left(\mathbb{E}_{k}\left(\hat{\xi}^{k}-\xi\right)pp^{\prime}\right)\right|= \left|\min_{\left\Vert a\right\Vert =1}\mathbb{E}_{k}\left[\left(\hat{\xi}^{k}-\xi\right)\left(a^{\prime}p\right)^{2}\right]\right|\\
\leq & \sup_{\left\Vert a\right\Vert =1}\left|\mathbb{E}_{k}\left[\left(\hat{\xi}^{k}-\xi\right)\left(a^{\prime}p\right)^{2}\right]\right|
\leq 2C_{M}\zeta_{p}^{2}\left(n^{-r_{\gamma_{1}}}+n^{-r_{\gamma_{0}}}+n^{-\frac{r_{\omega_{1}}+r_{\gamma_{1}}}{2}}+n^{-\frac{r_{\omega_{0}}+r_{\gamma_{0}}}{2}}\right),
\end{align*}
where the last inequality follows by Lemma \ref{lem:remainder.4}. The conclusion follows
by picking 
\[
2C_{M}\zeta_{p}^{2}\left(n^{-r_{\gamma_{1}}}+n^{-r_{\gamma_{0}}}+n^{-\frac{r_{\omega_{1}}+r_{\gamma_{1}}}{2}}+n^{-\frac{r_{\omega_{0}}+r_{\gamma_{0}}}{2}}\right)\leq\varepsilon/2.
\]
\end{proof}
\begin{lem}\label{lem:remainder.4}
Under Assumptions \ref{asm:unconfounded}-\ref{asm:quality}, we have for each $i\in I_{k}$, $k\in[K]$,
\begin{align*}
\sup_{\left\Vert a\right\Vert =1}\left|\mathbb{E}_{k}\left[\left(\hat{\xi}^{k}-\xi\right)\left(a^{\prime}p_{i}\right)^{2}\right]\right| & \leq2C_{M}\zeta_{p}^{2}\left(n^{-r_{\gamma_{1}}}+n^{-r_{\gamma_{0}}}+n^{-\frac{r_{\omega_{1}}+r_{\gamma_{1}}}{2}}+n^{-\frac{r_{\omega_{0}}+r_{\gamma_{0}}}{2}}\right).
\end{align*}
\end{lem}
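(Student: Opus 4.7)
My plan is to exploit the Neyman-orthogonal structure of $\xi(Z,\eta)$ — specifically its double-robust form — so that once we condition on $X$, the terms that are linear in the nuisance errors cancel exactly, leaving only quadratic or bilinear remainders that can be bounded directly by Assumption \ref{asm:quality}(i). Setting $g(X):=(a'p_i)^{2}$, I would use $|g(X)|\leq\zeta_{p}^{2}$ and the fact that $g$ depends on $X$ only through $W$; since under $\mathbb{E}_{k}$ the estimator $\hat{\eta}^{k}$ is deterministic, the tower property reduces everything to computing $\mathbb{E}[\hat{\xi}^{k}-\xi\mid X,\hat{\eta}^{k}]$ and integrating it against $g(X)\,dF_{X}$. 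Assumption \ref{asm:unconfounded} gives $\mathbb{E}[Y-\gamma_{j}(X)\mid X,D=j]=0$ for $j\in\{0,1\}$, which turns this inner conditional expectation into a fully deterministic function of $(X,\hat{\eta}^{k})$.

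The main algebraic step is then to set $e_{j}:=\hat{\gamma}_{j}^{k}-\gamma_{j}$ and $f_{j}:=\hat{\omega}_{j}^{k}-\omega_{j}$, expand
\[
(\hat{\gamma}_{1}^{k}-\hat{\gamma}_{0}^{k})^{2}-\tau^{2}=2\tau(e_{1}-e_{0})+(e_{1}-e_{0})^{2},
\]
and invoke the identities $\pi\omega_{1}=2\tau$ and $(1-\pi)\omega_{0}=2\tau$ that are built into the very definitions of $\omega_{0},\omega_{1}$. The linear-in-$e_{j}$ pieces from the squared-CATE expansion then cancel exactly against those generated by $-\pi\hat{\omega}_{1}^{k}e_{1}+(1-\pi)\hat{\omega}_{0}^{k}e_{0}$, and what is left is
\[
\mathbb{E}[\hat{\xi}^{k}-\xi\mid X,\hat{\eta}^{k}]=(e_{1}-e_{0})^{2}-\pi(X)\,f_{1}\,e_{1}+(1-\pi(X))\,f_{0}\,e_{0},
\]
an expression that is purely quadratic or bilinear in the nuisance errors. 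This cancellation is the one delicate part of the argument — everything else is bookkeeping — and it is where I expect any mistakes to arise.

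To finish, I would multiply by $g(X)$, use $|g|\leq\zeta_{p}^{2}$ and $0\le\pi\le1$, apply $(e_{1}-e_{0})^{2}\leq 2e_{1}^{2}+2e_{0}^{2}$ on the quadratic piece, and apply Cauchy--Schwarz on the bilinear pieces via $\mathbb{E}_{k}[|f_{j}e_{j}|]\leq\sqrt{\mathbb{E}_{k}[f_{j}^{2}]}\sqrt{\mathbb{E}_{k}[e_{j}^{2}]}$. Plugging in Assumption \ref{asm:quality}(i) controls the four resulting mean-square quantities by $C_{M}n^{-r_{\gamma_{1}}}$, $C_{M}n^{-r_{\gamma_{0}}}$, $C_{M}n^{-(r_{\omega_{1}}+r_{\gamma_{1}})/2}$, and $C_{M}n^{-(r_{\omega_{0}}+r_{\gamma_{0}})/2}$ respectively; summing, and noting that all these bounds are uniform in $a$ so that taking $\sup_{\|a\|=1}$ is free, produces exactly $2C_{M}\zeta_{p}^{2}$ times the four-rate sum in the statement.
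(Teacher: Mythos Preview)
Your proposal is correct and follows essentially the same route as the paper: both arguments exploit the identities $\pi\omega_1=(1-\pi)\omega_0=2\tau$ so that the terms linear in $\hat{\gamma}_j^k-\gamma_j$ cancel, leaving only the quadratic $(e_1-e_0)^2$ and the two bilinear cross-terms $f_j e_j$, which are then controlled by $(e_1-e_0)^2\le 2e_1^2+2e_0^2$ and Cauchy--Schwarz together with Assumption~\ref{asm:quality}(i). The only organizational difference is that the paper writes out a seven-term decomposition of $\hat{\xi}^k-\xi$ at the random-variable level and then identifies four mean-zero pieces one by one, whereas you first pass to $\mathbb{E}[\hat{\xi}^k-\xi\mid X,\hat{\eta}^k]$, which automatically eliminates those pieces and lands directly on the three-term remainder; the constants and final bound are identical.
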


\begin{proof}
For each $k\in[K]$, the following decomposition holds:
\begin{align*}
\hat{\xi}^{k}-\xi= & \left[2\left(\gamma_{1}-\gamma_{0}\right)-D\omega_{1}\right]\left(\hat{\gamma}_{1}^{k}-\gamma_{1}\right)+\left[(1-D)\omega_{0}-2\left(\gamma_{1}-\gamma_{0}\right)\right]\left(\hat{\gamma}_{0}^{k}-\gamma_{0}\right)\\
+ & \left(\hat{\gamma}_{1}^{k}-\hat{\gamma}_{0}^{k}-\gamma_{1}+\gamma_{0}\right)^{2}\\
+ & D\left(\hat{\omega}_{1}^{k}-\omega_{1}\right)e_{1}+D\left(\hat{\omega}_{1}^{k}-\omega_{1}\right)(\gamma_{1}-\hat{\gamma}_{1}^{k})\\
- & (1-D)\left(\hat{\omega}_{0}^{k}-\omega_{0}\right)e_{0}-(1-D)\left(\hat{\omega}_{0}^{k}-\omega_{0}\right)(\gamma_{0}-\hat{\gamma}_{0}^{k}).
\end{align*}
For each $a$ such that $\left\Vert a\right\Vert =1$, note:
\begin{align*}
\mathbb{E}_{k}\left[\left[2\left(\gamma_{1}-\gamma_{0}\right)-D\omega_{1}\right]\left(\hat{\gamma}_{1}^{k}-\gamma_{1}\right)\left(a^{\prime}p\right)^{2}\right] & =0,\\
\mathbb{E}_{k}\left[\left[(1-D)\omega_{0}-2\left(\gamma_{1}-\gamma_{0}\right)\right]\left(\hat{\gamma}_{0}^{k}-\gamma_{0}\right)\left(a^{\prime}p\right)^{2}\right] & =0,\\
\mathbb{E}_{k}\left[D\left(\hat{\omega}_{1}^{k}-\omega_{1}\right)e_{1}\left(a^{\prime}p\right)^{2}\right] & =0,\\
\mathbb{E}_{k}\left[(1-D)\left(\hat{\omega}_{0}^{k}-\omega_{0}\right)e_{0}\left(a^{\prime}p\right)^{2}\right] & =0.
\end{align*}
The conclusion follows by noting 
\begin{align*}
 & \sup_{\left\Vert a\right\Vert =1}\left|\mathbb{E}_{k}\left[\left(\hat{\gamma}_{1}^{k}-\hat{\gamma}_{0}^{k}-\gamma_{1}+\gamma_{0}\right)^{2}\left(a^{\prime}p\right)^{2}\right]\right|\\
\leq & \zeta_{p}^{2}\left|\mathbb{E}_{k}\left[\left(\hat{\gamma}_{1}^{k}-\hat{\gamma}_{0}^{k}-\gamma_{1}+\gamma_{0}\right)^{2}\right]\right|\\
\leq & 2\zeta_{p}^{2}\left(\mathbb{E}_{k}\left[\left(\hat{\gamma}_{1}^{k}-\gamma_{1}\right)^{2}\right]+\mathbb{E}_{k}\left[\left(\hat{\gamma}_{0}^{k}-\gamma_{0}\right)^{2}\right]\right)\\
\leq & 2C_{M}\zeta_{p}^{2}\left(n^{-r_{\gamma_{1}}}+n^{-r_{\gamma_{0}}}\right),
\end{align*}
and 
\[
\begin{aligned}\sup_{\left\Vert a\right\Vert =1}\left|\mathbb{E}_{k}\left[D\left(\hat{\omega}_{1}^{k}-\omega_{1}\right)(\gamma_{1}-\hat{\gamma}_{1}^{k})\left(a^{\prime}p\right)^{2}\right]\right| & \leq C_{M}\zeta_{p}^{2}n^{-\frac{r_{\omega_{1}}+r_{\gamma_{1}}}{2}},\\
\sup_{\left\Vert a\right\Vert =1}\left|\mathbb{E}_{k}\left[(1-D)\left(\hat{\omega}_{0}^{k}-\omega_{0}\right)(\gamma_{0}-\hat{\gamma}_{0}^{k})\right]\right| & \leq C_{M}\zeta_{p}^{2}n^{-\frac{r_{\omega_{0}}+r_{\gamma_{0}}}{2}}.
\end{aligned}
\]
\end{proof}
\begin{lem}\label{lem:remainder.5}
Under Assumptions \ref{asm:unconfounded}-\ref{asm:quality}, we have
\begin{align*}
\mathbb{E}\left\Vert \hat{A}_{n}-A_{n}\right\Vert ^{2} & \leq c_3\left(\log2d_{p}\right)^{2}\zeta_{p}^{4}\left(n^{-2r_{\gamma_{1}}}+n^{-2r_{\gamma_{0}}}+n^{-\left(r_{\omega_{1}}+r_{\gamma_{1}}\right)}+n^{-\left(r_{\omega_{0}}+r_{\gamma_{0}}\right)}\right),
\end{align*}
where $c_3$ is a constant that depends on $K$, $\underline{\pi}$, $C_{e},C_{\gamma}$,
$\tilde{C}_{M}$ and $C_{M}$.
\end{lem}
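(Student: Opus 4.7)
The plan is to combine cross-fitting with matrix concentration. First I would use Jensen's inequality over folds,
\[
\|\hat A_n - A_n\|^2 \leq \frac{1}{K}\sum_{k=1}^K \|M_k\|^2, \qquad M_k := \frac{1}{m}\sum_{i\in I_k}(\hat\xi_i^k-\xi_i)\,p_i p_i',
\]
and then condition on $\{Z_j\}_{j\notin I_k}$: under this conditioning $\hat\eta^k$ is frozen and $M_k$ is an average of $m$ (conditionally) iid random matrices, so it suffices to bound $\mathbb{E}_k\|M_k\|^2$ and then take the outer expectation and average over $k$.

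Next I would reuse the expansion of $\hat\xi^k-\xi$ derived in the proof of Lemma~\ref{lem:remainder.4}, which expresses it as a sum of seven pieces: two ``linear sensitivity'' terms of the form $[2(\gamma_1-\gamma_0)-D\omega_1](\hat\gamma_1^k-\gamma_1)$ and its $\gamma_0$-analog; a quadratic bias $(\hat\gamma_1^k-\hat\gamma_0^k-\gamma_1+\gamma_0)^2$; two ``noise-residual'' terms $D(\hat\omega_1^k-\omega_1)e_1$ and $-(1-D)(\hat\omega_0^k-\omega_0)e_0$; and two cross-bias terms $D(\hat\omega_1^k-\omega_1)(\gamma_1-\hat\gamma_1^k)$ and $-(1-D)(\hat\omega_0^k-\omega_0)(\gamma_0-\hat\gamma_0^k)$. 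Writing $M_k=\sum_\ell M_k^{(\ell)}$ accordingly and applying the triangle inequality $\|M_k\|^2\leq 7\sum_\ell\|M_k^{(\ell)}\|^2$, I would bound each piece separately.

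For the four mean-zero pieces (the two linear sensitivities and two noise-residuals), the balancing identity $\mathbb{E}[D\omega_1(X)g(X)]=\mathbb{E}[2(\gamma_1-\gamma_0)(X)g(X)]$, its $\omega_0$-counterpart, and $\mathbb{E}[e_j\mid X,D]=0$ jointly force $\mathbb{E}_k M_k^{(\ell)}=0$. I would then invoke a matrix Bernstein second-moment bound of the form
\[
\mathbb{E}_k\|M_k^{(\ell)}\|^2 \;\lesssim\; \frac{v_\ell\,\log(2d_p)}{m}+\frac{L^2 (\log(2d_p))^2}{m^2},
\]
with per-summand bound $L\leq C\zeta_p^2$ from Assumption~\ref{asm:quality}(ii) and matrix variance $v_\ell\leq C\zeta_p^4\,\mathbb{E}_k\!\int(\hat\gamma_1^k-\gamma_1)^2 dF_X\leq C\zeta_p^4 n^{-r_{\gamma_1}}$ for the $\gamma_1$-sensitivity piece (and the analogous $n^{-r_{\omega_1}}$ factor for the $\omega_1$-noise-residual piece, with the obvious symmetric statements for $\gamma_0$ and $\omega_0$). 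Using $r_{\gamma_j},r_{\omega_j}\in(0,1]$ to absorb both $v_\ell/m$ and the tail term $1/m^2$ into the target rate produces contributions of order $\zeta_p^4(\log 2d_p)^2 n^{-2r_{\gamma_j}}$ from the sensitivity terms and $\zeta_p^4(\log 2d_p)^2 n^{-r_{\omega_j}-1}\leq\zeta_p^4(\log 2d_p)^2 n^{-(r_{\omega_j}+r_{\gamma_j})}$ from the noise-residual terms.

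Finally, for the three bias pieces (quadratic plus two cross-bias), $\mathbb{E}_k M_k^{(\ell)}$ no longer vanishes and I would split $M_k^{(\ell)}=\mathbb{E}_k M_k^{(\ell)}+(M_k^{(\ell)}-\mathbb{E}_k M_k^{(\ell)})$. The mean is bounded in operator norm by $\zeta_p^2$ times a scalar $L^1$ error, giving $C\zeta_p^2(n^{-r_{\gamma_1}}+n^{-r_{\gamma_0}})$ for the quadratic piece and $C\zeta_p^2 n^{-(r_{\omega_j}+r_{\gamma_j})/2}$ for each cross-bias piece (by Cauchy--Schwarz on Assumption~\ref{asm:quality}(i)); squaring delivers exactly the four target rates. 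The remaining fluctuation is handled by the same matrix Bernstein bound with $L\leq C\zeta_p^2$ and a variance at most of the same order, so it is absorbed. Summing the seven contributions, taking outer expectations, and averaging over $k$ yields the claimed inequality with $c_3$ depending only on $K$, $\underline{\pi}$, $C_e$, $C_\gamma$, $\tilde C_M$, $C_M$. The main technical obstacle is translating the matrix Bernstein tail into a clean second-moment bound---the $(\log 2d_p)^2$ prefactor is precisely the signature of the boundedness term $L^2(\log 2d_p)^2/m^2$ surviving the squaring---and checking that $r_\cdot\leq 1$ allows that subleading remainder to be absorbed into the nonparametric bias rates $n^{-2r_\cdot}$; the rest is careful bookkeeping across the seven decomposition pieces.
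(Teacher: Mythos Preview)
Your proposal is correct and lands on the same bound, but the paper organizes the argument more economically. Rather than decomposing $\hat\xi^k-\xi$ into seven pieces and applying matrix concentration to each, the paper performs a single mean--fluctuation split on the whole fold average: $M_k = S_{k,1} + S_{k,2}$ with $S_{k,2}=\mathbb{E}_k M_k$ and $S_{k,1}=M_k-S_{k,2}$. For the centered part $S_{k,1}$ it invokes one matrix concentration bound (Chen et al.\ 2012, Theorem~A.1) with the crude variance proxy $\frac{\zeta_p^4}{m}\mathbb{E}_k[(\hat\xi^k-\xi)^2]$, which is then controlled by $c_2 C_M(n^{-r_{\gamma_1}}+n^{-r_{\gamma_0}}+n^{-r_{\omega_1}}+n^{-r_{\omega_0}})$; the extra $1/m$ in front makes this term dominated by the target rates via $r_\cdot\leq 1$. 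For the mean part $S_{k,2}$ the paper simply invokes Lemma~\ref{lem:remainder.4} to get $\|S_{k,2}\|\leq 2C_M\zeta_p^2(n^{-r_{\gamma_1}}+n^{-r_{\gamma_0}}+n^{-(r_{\omega_1}+r_{\gamma_1})/2}+n^{-(r_{\omega_0}+r_{\gamma_0})/2})$, and squaring delivers exactly the stated rates. Your seven-piece route is more transparent about which structural features (balancing, $\mathbb{E}[e_j\mid X,D]=0$, Cauchy--Schwarz on the cross-bias) drive which rate, at the cost of more bookkeeping and multiple invocations of matrix Bernstein; the paper's version hides this inside a single call to Lemma~\ref{lem:remainder.4} and a crude variance bound on the fluctuation. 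Both approaches ultimately rely on the same two ingredients---the debiasing-induced mean-zero structure and $r_\cdot\leq 1$---so the difference is purely organizational.
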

\begin{proof}
As 
\begin{align*}
\hat{A}_{n}-A_{n} & =\frac{1}{K}\sum_{k=1}^{K}\left(\frac{1}{m}\sum_{i\in I_{k}}\left(\hat{\xi}_{i}-\xi_{i}\right)p_{i}p_{i}^{\prime}\right) =\frac{1}{K}\sum_{k=1}^{K}\left(\frac{1}{m}\sum_{i\in I_{k}}\left(\hat{\xi}_{i}^{k}-\xi_{i}\right)p_{i}p_{i}^{\prime}\right),
\end{align*}
triangle inequality implies 
\[
\mathbb{E}_{P^{n}}\left\Vert \hat{A}_{n}-A_{n}\right\Vert ^{2}\leq\frac{1}{K}\sum_{k=1}^{K}\mathbb{E}_{P^{n}}\left[\left\Vert \frac{1}{m}\sum_{i\in I_{k}}\left(\hat{\xi}_{i}^{k}-\xi_{i}\right)p_{i}p_{i}^{\prime}\right\Vert ^{2}\right].
\]
Furthermore, for each $k\in[K]$,
\[
\mathbb{E}_{P^{n}}\left[\left\Vert \frac{1}{m}\sum_{i\in I_{k}}\left(\hat{\xi}_{i}^{k}-\xi_{i}\right)p_{i}p_{i}^{\prime}\right\Vert ^{2}\right]\leq\mathbb{E}_{P^{n}}\left\{ \mathbb{E}_{k}\left[\left\Vert S_{k,1}\right\Vert ^{2}\right]+\left\Vert S_{k,2}\right\Vert ^{2}\right\} ,
\]
where 
\begin{align*}
S_{k,1} & =\frac{1}{m}\sum_{i\in I_{k}}\left(\hat{\xi}_{i}^{k}-\xi_{i}\right)p_{i}p_{i}^{\prime}-\mathbb{E}_{k}\left[\left(\hat{\xi}^{k}-\xi\right)pp^{\prime}\right],\\
S_{k,2} & =\mathbb{E}_{k}\left[\left(\hat{\xi}^{k}-\xi\right)pp^{\prime}\right].
\end{align*}
Conditional on $\left\{ Z_{j}\right\} _{j\in[n]\setminus I_{k}}$,
$S_{k,1}$ is a centered random matrix with independent entries. We
calculate:
\begin{align*}
v_{1} & :=\frac{1}{m}\left\Vert \mathbb{E}_{k}\left[\left(\hat{\xi}^{k}-\xi\right)^{2}pp^{\prime}pp^{\prime}\right]\right\Vert \leq\frac{\zeta_{p}^{4}}{m}\mathbb{E}_{k}\left[\left(\hat{\xi}^{k}-\xi\right)^{2}\right],\\
v_{2} & :=\mathbb{E}_{k}\left[\max_{i\in I_{k}}\left\Vert \frac{\left(\hat{\xi}_{i}^{k}-\xi_{i}\right)p_{i}p_{i}^{\prime}}{m}\right\Vert ^{2}\right] \leq\frac{\mathbb{E}_{k}\left[\left\Vert \left(\hat{\xi}^{k}-\xi\right)pp^{\prime}\right\Vert ^{2}\right]}{m} \leq\frac{\zeta_{p}^{4}}{m}\mathbb{E}_{k}\left[\left(\hat{\xi}^{k}-\xi\right)^{2}\right].
\end{align*}
Applying \citet[][Theorem A.1]{chen2012masked} yields
\begin{align*}
\mathbb{E}_{k}\left[\left\Vert S_{k,1}\right\Vert ^{2}\right] & \leq2\left[2ev_{1}\log2d_{p}+16e^{2}v_{2}\left(\log2d_{p}\right)^{2}\right]\\
 & \leq2\left[2e\frac{\zeta_{p}^{4}}{m}\mathbb{E}_{k}\left[\left(\hat{\xi}^{k}-\xi\right)^{2}\right]\log2d_{p}+16e^{2}\left(\frac{\zeta_{p}^{4}}{m}\mathbb{E}_{k}\left[\left(\hat{\xi}^{k}-\xi\right)^{2}\right]\right)\left(\log2d_{p}\right)^{2}\right]\\
 & =\left[4e\log2d_{p}+16e^{2}\left(\log2d_{p}\right)^{2}\right]\frac{\zeta_{p}^{4}K}{n}\mathbb{E}_{k}\left[\left(\hat{\xi}^{k}-\xi\right)^{2}\right],
\end{align*}
where note under Assumptions \ref{asm:unconfounded}-\ref{asm:quality},
\[
\mathbb{E}_{k}\left[\left(\hat{\xi}^{k}-\xi\right)^{2}\right]\leq c_2C_{M}\left(n^{-r_{\gamma_{1}}}+n^{-r_{\gamma_{0}}}+n^{-r_{\omega_{1}}}+n^{-r_{\omega_{0}}}\right)
\]
for some constant $c_2$ that only depends on $\underline{\pi},C_{e},C_{\gamma}$,
and $\tilde{C}_{M}$. For $\left\Vert S_{k,2}\right\Vert$, note Lemma \ref{lem:remainder.4} implies
\begin{align*}
\left\Vert S_{k,2}\right\Vert  & =\sup_{\left\Vert a\right\Vert =1}\mathbb{E}_{k}\left[\left(\hat{\xi}^{k}-\xi\right)\left(a^{\prime}p\right)^{2}\right]\\
 & \leq\sup_{\left\Vert a\right\Vert =1}\left|\mathbb{E}_{k}\left[\left(\hat{\xi}^{k}-\xi\right)\left(a^{\prime}p\right)^{2}\right]\right|\\
 & \leq2C_{M}\zeta_{p}^{2}\left(n^{-r_{\gamma_{1}}}+n^{-r_{\gamma_{0}}}+n^{-\frac{r_{\omega_{1}}+r_{\gamma_{1}}}{2}}+n^{-\frac{r_{\omega_{0}}+r_{\gamma_{0}}}{2}}\right).
\end{align*}
Thus, we conclude
\begin{align*}
\mathbb{E}\left\Vert \hat{A}_{n}-A_{n}\right\Vert ^{2} & \leq\left[4e\log2d_{p}+16e^{2}\left(\log2d_{p}\right)^{2}\right]\frac{Kc_{2}C_{M}\zeta_{p}^{4}}{n}\left(n^{-r_{\gamma_{1}}}+n^{-r_{\gamma_{0}}}+n^{-r_{\omega_{1}}}+n^{-r_{\omega_{0}}}\right)\\
 & +16C_{M}^{4}\zeta_{p}^{4}\left(n^{-2r_{\gamma_{1}}}+n^{-2r_{\gamma_{0}}}+n^{-\left(r_{\omega_{1}}+r_{\gamma_{1}}\right)}+n^{-\left(r_{\omega_{0}}+r_{\gamma_{0}}\right)}\right)\\
 & \leq c\left(\log2d_{p}\right)^{2}\zeta_{p}^{4}\left(n^{-2r_{\gamma_{1}}}+n^{-2r_{\gamma_{0}}}+n^{-\left(r_{\omega_{1}}+r_{\gamma_{1}}\right)}+n^{-\left(r_{\omega_{0}}+r_{\gamma_{0}}\right)}\right),
\end{align*}
where $c_3$ depends on $K,c_{2}$ and $C_{M}$.
\end{proof}

\begin{lem}\label{lem:remainder.6}
\begin{itemize}
\item[(i)] Under Assumptions \ref{asm:unconfounded}-\ref{asm:quality}, we have
\begin{align*}
\mathbb{E}\left\Vert \hat{B}_{n}-B_{n}\right\Vert ^{2} & \leq c_4\left[\frac{\zeta_{p}^{2}}{n}+\sieved\zeta_{p}^{2}\left[\left(n^{-2r_{\gamma_{1}}}+n^{-2r_{\gamma_{0}}}\right)+n^{-(r_{\omega_{1}}+r_{\gamma_{1}})}+n^{-(r_{\omega_{0}}+r_{\gamma_{0}})}\right]\right]
\end{align*}
for some $c_4$ that depends on $K$, $C_{M}$, $C_{\xi}$ and $\tilde{C}_{\xi}$.

\item[(ii)] Suppose in addition,  Assumption \ref{asm:margin} also holds, and  $n$ is such that
\[
\left(4C_{M}C_{\tau}^{-1}\left(n^{-r_{\gamma_{1}}}+n^{-r_{\gamma_{0}}}\right)\right)^{\frac{1}{\alpha+2}}<t^{*}.
\]
Then, we have
\begin{align*}
\mathbb{E}\left\Vert \hat{B}_{n}-B_{n}\right\Vert ^{2} & \leq\frac{c_5\zeta_{p}^{2}}{n}\left[\left(n^{-r_{\gamma_{1}}}+n^{-r_{\gamma_{0}}}\right)^{\frac{\alpha}{\alpha+2}}+n^{-r_{\omega_{1}}}+n^{-r_{\omega_{0}}}\right]\\
 & +c_5d_{p}\zeta_{p}^{2}\left[\left(n^{-2r_{\gamma_{1}}}+n^{-2r_{\gamma_{0}}}\right)+n^{-(r_{\omega_{1}}+r_{\gamma_{1}})}+n^{-(r_{\omega_{0}}+r_{\gamma_{0}})}\right],
\end{align*}
where $c_5$ is a suitably redefined constant that also depends on $\underline{\pi},C_{\tau},C_{e}$,
$\tilde{C}_{M}$, $C_{\tau}$ and $\alpha$.
\end{itemize}
\end{lem}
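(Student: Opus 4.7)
My plan extends the strategy of Lemma~\ref{lem:remainder.5} but requires two extra ingredients to deal with the indicator $\mathbf{1}\{\hat\tau^k\ge 0\}$. Setting
\[
V_i:=\hat\xi_i^k p_i\mathbf{1}\{\hat\tau_i^k\ge 0\}-\xi_i p_i\mathbf{1}\{\tau_i\ge 0\},
\]
I write $\hat B_n-B_n=K^{-1}\sum_k U_k$ with $U_k=m^{-1}\sum_{i\in I_k}V_i$, apply $\|\hat B_n-B_n\|^2\le K^{-1}\sum_k\|U_k\|^2$ by Jensen, and further split $U_k=(U_k-\mathbb{E}_kV)+\mathbb{E}_kV$ into a conditionally centered piece $T_{k,1}$ and a conditional bias $T_{k,2}=\mathbb{E}_kV$. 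Since the summands of $T_{k,1}$ are conditionally i.i.d., $\mathbb{E}_k\|T_{k,1}\|^2\le m^{-1}\mathbb{E}_k\|V\|^2$. For (i) the crude bound $\|V\|\le 2\tilde C_\xi\zeta_p$ produces the leading $\zeta_p^2/n$ variance term. For (ii) I decompose $V=(\hat\xi^k-\xi)p\mathbf{1}\{\hat\tau^k\ge 0\}+\xi p(\mathbf{1}\{\hat\tau^k\ge 0\}-\mathbf{1}\{\tau\ge 0\})$, so
\[
\mathbb{E}\|V\|^2\lesssim\zeta_p^2\bigl[\mathbb{E}(\hat\xi^k-\xi)^2+C_\xi^2\,\mathbb{P}(\mathrm{sgn}(\hat\tau^k)\neq\mathrm{sgn}(\tau))\bigr].
\]
Assumption~\ref{asm:quality} controls the first summand by $n^{-r_{\omega_1}}+n^{-r_{\omega_0}}+n^{-r_{\gamma_1}}+n^{-r_{\gamma_0}}$; for the sign-mismatch probability I use the margin--peeling inequality $\mathbb{P}(\mathrm{sgn}(\hat\tau^k)\neq\mathrm{sgn}(\tau))\le\mathbb{P}(|\tau|\le t)+t^{-2}\mathbb{E}(\hat\tau^k-\tau)^2\le C_\tau t^\alpha+C_M t^{-2}(n^{-r_{\gamma_1}}+n^{-r_{\gamma_0}})$, optimized at $t\asymp(n^{-r_{\gamma_1}}+n^{-r_{\gamma_0}})^{1/(\alpha+2)}$; the lemma's $n$-condition ensures $t\le t^*$, validating the use of Assumption~\ref{asm:margin}.

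The more delicate piece is the bias $T_{k,2}=T_{k,2,A}+T_{k,2,B}$, with
\[
T_{k,2,A}=\mathbb{E}_k\bigl[(\hat\xi^k-\xi)p\mathbf{1}\{\hat\tau^k\ge 0\}\bigr],\quad T_{k,2,B}=\mathbb{E}_k\bigl[\xi p(\mathbf{1}\{\hat\tau^k\ge 0\}-\mathbf{1}\{\tau\ge 0\})\bigr].
\]
For $T_{k,2,A}$ I expand $\hat\xi^k-\xi$ exactly as in the proof of Lemma~\ref{lem:remainder.4}. Because $p\,\mathbf{1}\{\hat\tau^k\ge 0\}$ is a function of $X$ alone once we condition on the out-of-fold data, the terms linear in $\hat\gamma_j^k-\gamma_j$ and $\hat\omega_j^k-\omega_j$ all vanish when one takes $\mathbb{E}_k[\,\cdot\mid X]$, thanks to the Neyman-orthogonality identities $\mathbb{E}[D\omega_1\mid X]=\mathbb{E}[(1-D)\omega_0\mid X]=2\tau(X)$ and $\mathbb{E}[De_1\mid X]=\mathbb{E}[(1-D)e_0\mid X]=0$ supplied by unconfoundedness and overlap. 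Only the quadratic cross-products $(\hat\gamma_1^k-\hat\gamma_0^k-\gamma_1+\gamma_0)^2$ and $(\hat\omega_j^k-\omega_j)(\hat\gamma_j^k-\gamma_j)$ survive. A coordinatewise bound $|T_{k,2,A,j}|\le\sup_w|p_j(w)|\cdot\mathbb{E}_k|\text{quad}|\le\zeta_p\mathbb{E}_k|\text{quad}|$ together with $\|T_{k,2,A}\|^2\le d_p\max_jT_{k,2,A,j}^2$ then gives $\|T_{k,2,A}\|^2\lesssim d_p\zeta_p^2(\mathbb{E}_k|\text{quad}|)^2$, and Cauchy--Schwarz on the surviving products yields exactly the targeted $d_p\zeta_p^2[n^{-2r_{\gamma_1}}+n^{-2r_{\gamma_0}}+n^{-(r_{\omega_1}+r_{\gamma_1})}+n^{-(r_{\omega_0}+r_{\gamma_0})}]$ bound.

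For $T_{k,2,B}$, the tower rule plus unconfoundedness give $\mathbb{E}_k[\xi\mid X]=\tau^2(X)$, hence $T_{k,2,B}=\mathbb{E}_k[\tau^2(X)\,p\,(\mathbf{1}\{\hat\tau^k\ge 0\}-\mathbf{1}\{\tau\ge 0\})]$. Any sign disagreement forces $|\tau(X)|\le|\hat\tau^k(X)-\tau(X)|$, so $\tau^2(X)|\mathbf{1}\{\hat\tau^k\ge 0\}-\mathbf{1}\{\tau\ge 0\}|\le(\hat\tau^k-\tau)^2$; the same coordinatewise $d_p\max_j$ route gives $\|T_{k,2,B}\|^2\lesssim d_p\zeta_p^2(n^{-r_{\gamma_1}}+n^{-r_{\gamma_0}})^2$, already absorbed into the bound above. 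Assembling the variance and bias contributions over the $K$ folds delivers (i); substituting the sharpened variance estimate derived via the margin condition gives (ii). The main obstacle is the bias analysis of $T_{k,2,A}$: without the cancellation of the first-order nuisance errors the naive Jensen bound $\|T_{k,2,A}\|^2\le\mathbb{E}_k\|(\hat\xi^k-\xi)p\|^2$ only produces $\zeta_p^2(n^{-r_{\gamma_1}}+n^{-r_{\gamma_0}}+n^{-r_{\omega_1}}+n^{-r_{\omega_0}})$, which misses the crucial product rate and would prevent the parametric rate in Proposition~\ref{prop:parametric}.
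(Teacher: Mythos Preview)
Your proposal is correct and follows the paper's overall variance--bias split, but your handling of the bias term $\mathbb{E}_k V$ is organized differently from the paper's Lemma~\ref{lem:remainder.7}. The paper does \emph{not} split into $T_{k,2,A}+T_{k,2,B}$; instead it expands $\hat\xi^{k,+}-\xi^+$ directly as a ten-term sum $\Delta_{k,1}+\cdots+\Delta_{k,10}$ by applying a first-order Taylor expansion to the continuously differentiable map $t\mapsto t^2\mathbf{1}\{t\ge 0\}$ at $\tau$, and then checks term by term which $\mathbb{E}_k[\Delta_{k,\ell}\,p_j]$ vanish by orthogonality and which survive as second-order products. Your route---reusing the $\hat\xi^k-\xi$ decomposition of Lemma~\ref{lem:remainder.4} wholesale for $T_{k,2,A}$, and for $T_{k,2,B}$ using the clean identities $\mathbb{E}[\xi\mid X]=\tau^2(X)$ and $\tau^2\,|\mathbf{1}\{\hat\tau^k\ge 0\}-\mathbf{1}\{\tau\ge 0\}|\le (\hat\tau^k-\tau)^2$---is more modular and avoids the Taylor step and the associated mean-value bookkeeping (the $\Delta_{k,6}$ analysis in the paper). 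The paper's unified expansion, by contrast, keeps all interaction terms in one list, which makes it transparent that \emph{every} first-order piece cancels, including those mixing the weight correction with the indicator shift (your split hides these inside $T_{k,2,A}$ but they still vanish for the same reason). Both routes deliver identical rates; for part~(ii) your margin--peeling argument is exactly the paper's Lemma~\ref{lem:remainder.8}.
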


\begin{proof}
\textbf{Statement (i)}. As
\begin{align*}
\hat{B}_{n}-B_{n} & =\left[\frac{1}{K}\sum_{k=1}^{K}\frac{1}{m}\sum_{i\in I_{k}}\left(\hat{\xi}_{i}\mathbf{1}\left\{ \hat{\tau}_{i}\geq0\right\} -\xi_{i}\mathbf{1}\left\{ \tau_{i}\geq0\right\} \right)p_{i}\right]\\
 & =\left[\frac{1}{K}\sum_{k=1}^{K}\frac{1}{m}\sum_{i\in I_{k}}\left(\hat{\xi}_{i}^{k}\mathbf{1}\left\{ \hat{\tau}_{i}^{k}\geq0\right\} -\xi_{i}\mathbf{1}\left\{ \tau_{i}\geq0\right\} \right)p_{i}\right],
\end{align*}
analogous arguments to Lemma \ref{lem:remainder.4} imply that it suffices to bound for each $k\in[K]$
\begin{align*}
 & \mathbb{E}_{k}\left\Vert \frac{1}{m}\sum_{i\in I_{k}}\left(\hat{\xi}_{i}^{k}\mathbf{1}\left\{ \hat{\tau}_{i}^{k}\geq0\right\} -\xi_{i}\mathbf{1}\left\{ \tau_{i}\geq0\right\} \right)p_{i}\right\Vert ^{2}
\leq \left|\mathbb{E}_{k}\left[S_{k,3}\right]\right|+\left|\mathbb{E}_{k}\left[S_{k,4}\right]\right|,
\end{align*}
where
\begin{align*}
\mathbb{E}_{k}\left[S_{k,3}\right] & :=\frac{1}{m}\mathbb{E}_{k}\left[\left(\hat{\xi}^{k,+}-\xi^{+}\right)^{2}p^{\prime}p\right],\\
\mathbb{E}_{k}\left[S_{k,4}\right] & :=\frac{1}{m(m-1)}\sum_{i,j\in I_{k},i\neq j}\mathbb{E}_{k}\left(\hat{\xi}_{i}^{k,+}-\xi_{i}^{+}\right)\left(\hat{\xi}_{j}^{k,+}-\xi_{j}^{+}\right)p_{i}^{\prime}p_{j},\\
\hat{\xi}^{k,+} & :=\hat{\xi}^{k}\mathbf{1}\left\{ \hat{\tau}^{k}(X)\geq0\right\} ,\\
\xi^{+} & :=\xi\mathbf{1}\left\{ \tau(X)\geq0\right\} .
\end{align*}
For term $S_{k,3}$, note under Assumptions \ref{asm:unconfounded}-\ref{asm:quality}:
\begin{align}
\left|\mathbb{E}_{k}\left[S_{k,3}\right]\right| & \leq\frac{\zeta_{p}^{2}}{m}\mathbb{E}_{k}\left[\left(\hat{\xi}^{k,+}-\xi^{+}\right)^{2}\right]\leq\frac{K\zeta_{p}^{2}}{n}\left(2C_{\xi}^{2}+2\tilde{C}_{\xi}^{2}\right).\label{pf:hat.B.1}
\end{align}
For term $S_{k,4}$, note conditional on $\left\{ Z_{j}\right\} _{j\in[n]\setminus I_{k}}$,
$\left(\hat{\xi}_{i}^{k,+}-\xi_{i}^{+}\right)p_{i}$ and $\left(\hat{\xi}_{j}^{k,+}-\xi_{j}^{+}\right)p_{j}$
are independent. Therefore, for each $i,j\in I_{k},i\neq j$:
\begin{align*}
 & \mathbb{E}_{k}\left(\hat{\xi}_{i}^{k,+}-\xi_{i}^{+}\right)\left(\hat{\xi}_{j}^{k,+}-\xi_{j}^{+}\right)p_{i}^{\prime}p_{j}\\
= & \left[\mathbb{E}_{k}\left(\hat{\xi}_{i}^{k,+}-\xi_{i}^{+}\right)p_{i}\right]^{\prime}\mathbb{E}_{k}\left[\left(\hat{\xi}_{j}^{k,+}-\xi_{j}^{+}\right)p_{j}\right]\\
= & \sum_{t=1}^{d_{p}}\mathbb{E}_{k}\left[\left(\hat{\xi}_{i}^{k,+}-\xi_{i}^{+}\right)p_{t,i}\right]\mathbb{E}_{k}\left[\left(\hat{\xi}_{j}^{k,+}-\xi_{j}^{+}\right)p_{t,j}\right]\\
= & \sum_{t=1}^{d_{p}}\left\{ \mathbb{E}_{k}\left[\left(\hat{\xi}^{k,+}-\xi^{+}\right)p_{t}\right]\right\} ^{2}.
\end{align*}
Applying Lemma \ref{lem:remainder.7} yields
\begin{equation}
\sum_{t=1}^{d_{p}}\left\{ \mathbb{E}_{k}\left[\left(\hat{\xi}^{k,+}-\xi^{+}\right)p_{t}\right]\right\} ^{2}\leq d_{p}\zeta_{p}^{2}C_{M}^{2}\left[12\left(n^{-r_{\gamma_{1}}}+n^{-r_{\gamma_{0}}}\right)+n^{-\frac{r_{\omega_{1}}+r_{\gamma_{1}}}{2}}+n^{-\frac{r_{\omega_{0}}+r_{\gamma_{0}}}{2}}\right]^{2}.\label{pf:hat.B.2}
\end{equation}
The conclusion follows by combining (\ref{pf:hat.B.1}) and (\ref{pf:hat.B.2}).

\noindent \textbf{Statement (ii)}. The proof is analogous to that of statement (i), but with a new upper bound for term $\mathbb{E}_{k}\left[S_{k,3}\right]$ established in Lemma \ref{lem:remainder.8}. 
\end{proof}

\begin{lem}\label{lem:remainder.7}
Under Assumptions \ref{asm:unconfounded}-\ref{asm:quality}, we have, for all $k\in[K]$ and each $j\in[\sieved]$:
\[
\left|\mathbb{E}_{k}\left[\left(\hat{\xi}^{k,+}-\xi^{+}\right)p_{j}\right]\right|\leq\zeta_{p}C_{M}\left[n^{-\frac{r_{\omega_{1}}+r_{\gamma_{1}}}{2}}+n^{-\frac{r_{\omega_{0}}+r_{\gamma_{0}}}{2}}+12\left(n^{-r_{\gamma_{1}}}+n^{-r_{\gamma_{0}}}\right)\right]
\]
\end{lem}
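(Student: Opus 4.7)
The plan is to decompose
\[
\hat{\xi}^{k,+} - \xi^{+} = (\hat{\xi}^{k} - \xi)\mathbf{1}\{\hat{\tau}^{k}(X) \geq 0\} + \xi\bigl(\mathbf{1}\{\hat{\tau}^{k}(X) \geq 0\} - \mathbf{1}\{\tau(X) \geq 0\}\bigr),
\]
so that the nuisance-estimation error in $\xi$ and the sign-estimation error in $\tau$ are handled separately, and then bound each term multiplied by $p_{j}(W)$.

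For the first term, I will substitute the eight-term expansion of $\hat{\xi}^{k} - \xi$ already used in the proof of Lemma \ref{lem:remainder.4}. Since $\hat{\gamma}_{1}^{k}, \hat{\gamma}_{0}^{k}, \hat{\omega}_{1}^{k}, \hat{\omega}_{0}^{k}, \hat{\tau}^{k}$ are functions of $\{Z_{j}\}_{j \in [n] \setminus I_{k}}$, once we condition on that data they become deterministic functions of $X$, and so does $g(X) := (\hat{\gamma}_{1}^{k}(X) - \gamma_{1}(X))\mathbf{1}\{\hat{\tau}^{k}(X) \geq 0\}p_{j}(W)$. The balancing identity \eqref{eq:balancing} then kills $\mathbb{E}_{k}\{[2(\gamma_{1}-\gamma_{0}) - D\omega_{1}](\hat{\gamma}_{1}^{k}-\gamma_{1}) g_{\text{ind}}(X)\}$ and its $\omega_{0}$ analogue, where $g_{\text{ind}}(X) = \mathbf{1}\{\hat{\tau}^{k}(X) \geq 0\}p_{j}(W)$. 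The terms $D(\hat{\omega}_{1}^{k}-\omega_{1})e_{1}$ and $(1-D)(\hat{\omega}_{0}^{k}-\omega_{0})e_{0}$ also drop out, because under Assumption \ref{asm:unconfounded}(i) we have $\mathbb{E}[De_{1}|X] = \mathbb{E}[(1-D)e_{0}|X] = 0$. What remains are the squared bias term $(\hat{\tau}^{k}-\tau)^{2}$ and the cross terms $D(\hat{\omega}_{1}^{k}-\omega_{1})(\gamma_{1}-\hat{\gamma}_{1}^{k})$ and $(1-D)(\hat{\omega}_{0}^{k}-\omega_{0})(\gamma_{0}-\hat{\gamma}_{0}^{k})$, all multiplied by $g_{\text{ind}}(X)$. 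Using $|p_{j}(W)|\leq \zeta_{p}$, the triangle inequality $(\hat{\tau}^{k}-\tau)^{2}\leq 2(\hat{\gamma}_{1}^{k}-\gamma_{1})^{2} + 2(\hat{\gamma}_{0}^{k}-\gamma_{0})^{2}$, Cauchy--Schwarz on the cross terms, and Assumption \ref{asm:quality}(i), this piece contributes at most
\[
\zeta_{p}C_{M}\Bigl[2\bigl(n^{-r_{\gamma_{1}}} + n^{-r_{\gamma_{0}}}\bigr) + n^{-(r_{\omega_{1}}+r_{\gamma_{1}})/2} + n^{-(r_{\omega_{0}}+r_{\gamma_{0}})/2}\Bigr].
\]

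For the second term, I will use that $\mathbb{E}[\xi \mid X] = \tau^{2}(X)$ (again by Assumption \ref{asm:unconfounded}(i), which kills the $D\omega_{1}e_{1}$ and $(1-D)\omega_{0}e_{0}$ pieces of $\xi$), so
\[
\mathbb{E}_{k}\bigl[\xi(\mathbf{1}\{\hat{\tau}^{k}\geq 0\}-\mathbf{1}\{\tau\geq 0\})p_{j}(W)\bigr] = \mathbb{E}_{k}\bigl[\tau^{2}(X)(\mathbf{1}\{\hat{\tau}^{k}\geq 0\}-\mathbf{1}\{\tau\geq 0\})p_{j}(W)\bigr].
\]
The key observation is that on the event where the two indicators disagree, $\tau(X)$ and $\hat{\tau}^{k}(X)$ have opposite signs, so $|\tau(X)|\leq|\hat{\tau}^{k}(X)-\tau(X)|$ and hence
\[
\tau^{2}(X)\bigl|\mathbf{1}\{\hat{\tau}^{k}\geq 0\}-\mathbf{1}\{\tau\geq 0\}\bigr| \leq \bigl(\hat{\tau}^{k}(X)-\tau(X)\bigr)^{2}.
\]
Combined with $|p_{j}|\leq\zeta_{p}$ and Assumption \ref{asm:quality}(i), this term is bounded by $2\zeta_{p}C_{M}(n^{-r_{\gamma_{1}}}+n^{-r_{\gamma_{0}}})$.

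Adding the two contributions yields at most $\zeta_{p}C_{M}[4(n^{-r_{\gamma_{1}}}+n^{-r_{\gamma_{0}}}) + n^{-(r_{\omega_{1}}+r_{\gamma_{1}})/2} + n^{-(r_{\omega_{0}}+r_{\gamma_{0}})/2}]$, which is dominated by the stated bound with the generous constant $12$. The only real subtlety is the sign-error piece: one must notice the ``margin-style'' inequality $\tau^{2}|\mathbf{1}\{\hat{\tau}^{k}\geq 0\}-\mathbf{1}\{\tau\geq 0\}|\leq(\hat{\tau}^{k}-\tau)^{2}$, otherwise the naive bound $|\xi|\cdot\Pr\{\text{signs disagree}\}$ would only yield a slower rate. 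Everything else is essentially bookkeeping on top of the decomposition already used in Lemma \ref{lem:remainder.4}.
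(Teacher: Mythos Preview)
Your proof is correct and takes a genuinely different---and cleaner---route than the paper's.

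The paper begins from a first-order Taylor expansion of the continuously differentiable map $t\mapsto t^{2}\mathbf{1}\{t\geq 0\}$ applied to $\hat{\tau}^{k}$ versus $\tau$, and then writes $\hat{\xi}^{k,+}-\xi^{+}$ as a sum of ten pieces $\Delta_{k,1},\ldots,\Delta_{k,10}$. Six of these have zero conditional mean by the same balancing and residual arguments you use; the remaining four are bounded term by term, with the Taylor remainder $\Delta_{k,6}$ requiring a four-case analysis according to the signs of $\tilde{\gamma}_{1}^{k}-\tilde{\gamma}_{0}^{k}$ and $\hat{\gamma}_{1}^{k}-\hat{\gamma}_{0}^{k}$. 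This yields the constant $12=4+8$ in front of $(n^{-r_{\gamma_{1}}}+n^{-r_{\gamma_{0}}})$.

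Your add-and-subtract decomposition $\hat{\xi}^{k,+}-\xi^{+}=(\hat{\xi}^{k}-\xi)\mathbf{1}\{\hat{\tau}^{k}\geq 0\}+\xi(\mathbf{1}\{\hat{\tau}^{k}\geq 0\}-\mathbf{1}\{\tau\geq 0\})$ sidesteps the Taylor step entirely. The first piece recycles the Lemma~\ref{lem:remainder.4} expansion verbatim with the harmless extra factor $\mathbf{1}\{\hat{\tau}^{k}\geq 0\}$; the second piece reduces, via $\mathbb{E}[\xi\mid X]=\tau^{2}(X)$, to $\tau^{2}(X)$ times the indicator difference, which your margin-type inequality $\tau^{2}|\mathbf{1}\{\hat{\tau}^{k}\geq 0\}-\mathbf{1}\{\tau\geq 0\}|\leq(\hat{\tau}^{k}-\tau)^{2}$ handles in one line. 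This gives the sharper constant $4$ and avoids the case analysis altogether. The paper's approach, on the other hand, makes the role of the $C^{1}$ smoothness of $t^{2}\mathbf{1}\{t\geq 0\}$ explicit, which is the structural reason the indicator does not obstruct debiasing; your argument exploits the same fact more implicitly through the margin inequality.
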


\begin{proof}
Since the function $f(x)=x^{2}\mathbf{1}\{x\geq0\}$ is continuously
differentiable, first order Taylor expansion yields, for each $k\in[K]$:
\begin{align*}
 & \left(\hat{\gamma}_{1}^{k}-\hat{\gamma}_{0}^{k}\right)^{2}\mathbf{1}\{\hat{\gamma}_{1}^{k}-\hat{\gamma}_{0}^{k}\geq0\}-\left(\gamma_{1}-\gamma_{0}\right)^{2}\mathbf{1}\{\gamma_{1}-\gamma_{0}\geq0\}\\
= & \left(\tilde{\gamma}_{1}^{k}-\tilde{\gamma}_{0}^{k}\right)\mathbf{1}\{\tilde{\gamma}_{1}^{k}-\tilde{\gamma}_{0}^{k}\geq0\}\left(\left(\hat{\gamma}_{1}^{k}-\gamma_{1}\right)-\left(\hat{\gamma}_{0}^{k}-\gamma_{0}\right)\right),
\end{align*}
where 
\begin{align*}
\tilde{\gamma}_{1}^{k}(x) & :=\gamma_{1}(x)+\tilde{t}^{k}(x)(\hat{\gamma}_{1}^{k}(x)-\gamma_{1}(x)),\\
\tilde{\gamma}_{0}^{k}(x) & :=\gamma_{0}(x)+\tilde{t}^{k}(x)(\hat{\gamma}_{0}^{k}(x)-\gamma_{0}(x)),
\end{align*}
and $\tilde{t}^{k}(x)\in(0,1)$ for all $x\in\mathcal{X}$. After
lengthy algebra, we arrive at the following decomposition:
\begin{align*}
\hat{\xi}^{k,+}-\xi^{+}= & \Delta_{k,1}+\Delta_{k,2}+\Delta_{k,3}+\Delta_{k,4}+\Delta_{k,5}+\Delta_{k,6}\\
- & \Delta_{k,7}-\Delta_{k,8}-\Delta_{k,9}-\Delta_{k,10},
\end{align*}
where
\begin{align*}
\Delta_{k,1} & :=\left(2\left(\gamma_{1}-\gamma_{0}\right)-D\omega_{1}\right)\left(\hat{\gamma}_{1}^{k}-\gamma_{1}\right)\mathbf{1}\{\hat{\gamma}_{1}^{k}-\hat{\gamma}_{0}^{k}\geq0\},\\
\Delta_{k,2} & =D\omega_{1}e_{1}\mathbf{1}\{\hat{\gamma}_{1}^{k}-\hat{\gamma}_{0}^{k}\geq0\}-D\omega_{1}e_{1}\mathbf{1}\{\gamma_{1}-\gamma_{0}\geq0\},\\
\Delta_{k,3} & =D\left(\hat{\omega}_{1}^{k}-\omega_{1}\right)e_{1}\mathbf{1}\{\hat{\gamma}_{1}^{k}-\hat{\gamma}_{0}^{k}\geq0\},\\
\Delta_{k,4} & =D\left(\hat{\omega}_{1}^{k}-\omega_{1}\right)(\gamma_{1}-\hat{\gamma}_{1}^{k})\mathbf{1}\{\hat{\gamma}_{1}^{k}-\hat{\gamma}_{0}^{k}\geq0\},\\
\Delta_{k,5} & =2\left[\left(\hat{\gamma}_{1}^{k}-\hat{\gamma}_{0}^{k}-\left(\gamma_{1}-\gamma_{0}\right)\right)^{2}\mathbf{1}\{\hat{\gamma}_{1}^{k}-\hat{\gamma}_{0}^{k}\geq0\}\right],\\
\Delta_{k,6} & =\left[\left(\tilde{\gamma}_{1}^{k}-\tilde{\gamma}_{0}^{k}\right)\mathbf{1}\{\tilde{\gamma}_{1}^{k}-\tilde{\gamma}_{0}^{k}\geq0\}-\left(\hat{\gamma}_{1}^{k}-\hat{\gamma}_{0}^{k}\right)\mathbf{1}\{\hat{\gamma}_{1}-\hat{\gamma}_{0}\geq0\}\right]\left(\left(\hat{\gamma}_{1}^{k}-\gamma_{1}\right)-\left(\hat{\gamma}_{0}^{k}-\gamma_{0}\right)\right),
\end{align*}
and 
\begin{align*}
\Delta_{k,7} & :=\left(2\left(\gamma_{1}-\gamma_{0}\right)-(1-D)\omega_{0})(\hat{\gamma}_{0}^{k}-\gamma_{0})\right)\mathbf{1}\{\hat{\gamma}_{1}^{k}-\hat{\gamma}_{0}^{k}\geq0\},\\
\Delta_{k,8} & :=(1-D)\omega_{0}e_{0}\mathbf{1}\{\hat{\gamma}_{1}^{k}-\hat{\gamma}_{0}^{k}\geq0\}-(1-D)\omega_{0}e_{0}\mathbf{1}\{\gamma_{1}-\gamma_{0}\geq0\},\\
\Delta_{k,9} & :=(1-D)(\hat{\omega}_{0}^{k}-\omega_{0})e_{0}\mathbf{1}\{\hat{\gamma}_{1}^{k}-\hat{\gamma}_{0}^{k}\geq0\},\\
\Delta_{k,10} & :=(1-D)(\hat{\omega}_{0}^{k}-\omega_{0})(\gamma_{0}-\hat{\gamma}_{0}^{k})\mathbf{1}\{\hat{\gamma}_{1}^{k}-\hat{\gamma}_{0}^{k}\geq0\}.
\end{align*}
Note 
\begin{align*}
\mathbb{E}_{k}\left[\Delta_{k,1}p_{j}\right] & =0, & \mathbb{E}_{k}\left[\Delta_{k,7}p_{j}\right],=0,\\
\mathbb{E}_{k}\left[\Delta_{k,2}p_{j}\right] & =0, & \mathbb{E}_{k}\left[\Delta_{k,8}p_{j}\right]=0,\\
\mathbb{E}_{k}\left[\Delta_{k,3}p_{j}\right] & =0, & \mathbb{E}_{k}\left[\Delta_{k,9}p_{j}\right]=0.
\end{align*}
Thus, $\left|\mathbb{E}_{k}\left[\left(\hat{\xi}^{k,+}-\xi^{+}\right)p_{j}\right]\right|$
is determined by
\begin{align*}
\mathbb{E}_{k}\left[\Delta_{k,4}p_{j}\right] & , & \mathbb{E}_{k}\left[\Delta_{k,5}p_{j}\right],\\
\mathbb{E}_{k}\left[\Delta_{k,6}p_{j}\right] & =0, & \mathbb{E}_{k}\left[\Delta_{k,10}p_{j}\right].
\end{align*}
It is straigtforward to show
\begin{align*}
\left|\mathbb{E}_{k}\left[\Delta_{k,4}p_{j}\right]\right| & \leq\zeta_{p}C_{M}n^{-\frac{r_{\omega_{1}}+r_{\gamma_{1}}}{2}},\\
\left|\mathbb{E}_{k}\left[\Delta_{k,10}p_{j}\right]\right| & \leq\zeta_{p}C_{M}n^{-\frac{r_{\omega_{0}}+r_{\gamma_{0}}}{2}},\\
\left|\mathbb{E}_{k}\left[\Delta_{k,5}p_{j}\right]\right| & \leq4\zeta_{p}C_{M}\left(n^{-r_{\gamma_{1}}}+n^{-r_{\gamma_{0}}}\right).
\end{align*}
For tem $\left|\mathbb{E}_{k}\left[\Delta_{k,6}p_{t}\right]\right|$,
note
\begin{align*}
\mathbb{E}_{k}\left[\Delta_{k,6}p_{j}\right]= & \mathbb{E}_{k}\left[\Delta_{k,6}p_{j}\mathbf{1}\{\tilde{\gamma}_{1}^{k}-\tilde{\gamma}_{0}^{k}\geq0,\hat{\gamma}_{1}^{k}-\hat{\gamma}_{0}^{k}\geq0\}\right]\\
+ & \mathbb{E}_{k}\left[\Delta_{k,6}p_{j}\mathbf{1}\{\tilde{\gamma}_{1}^{k}-\tilde{\gamma}_{0}^{k}<0,\hat{\gamma}_{1}^{k}-\hat{\gamma}_{0}^{k}<0\}\right]\\
+ & \mathbb{E}_{k}\left[\Delta_{k,6}p_{j}\mathbf{1}\{\tilde{\gamma}_{1}^{k}-\tilde{\gamma}_{0}^{k}\geq0,\hat{\gamma}_{1}^{k}-\hat{\gamma}_{0}^{k}<0\}\right]\\
+ & \mathbb{E}_{k}\left[\Delta_{k,6}p_{j}\mathbf{1}\{\tilde{\gamma}_{1}^{k}-\tilde{\gamma}_{0}^{k}<0,\hat{\gamma}_{1}^{k}-\hat{\gamma}_{0}^{k}<0\}\right].
\end{align*}
Furthermore, we have
\begin{align*}
 & \left|\mathbb{E}_{k}\left[\Delta_{k,6}p_{j}\mathbf{1}\{\tilde{\gamma}_{1}^{k}-\tilde{\gamma}_{0}^{k}\geq0,\hat{\gamma}_{1}^{k}-\hat{\gamma}_{0}^{k}\geq0\}\right]\right|\\
\leq & \mathbb{E}_{k}\left[\left(\left(\hat{\gamma}_{1}^{k}-\gamma_{1}\right)-\left(\hat{\gamma}_{0}^{k}-\gamma_{0}\right)\right)^{2}\left|p_{j}\right|\right]\\
\leq & 4\zeta_{p}C_{M}\left(n^{-r_{\gamma_{1}}}+n^{-r_{\gamma_{0}}}\right),
\end{align*}
and 
\[
\mathbb{E}_{k}\left[\Delta_{k,6}p_{j}\mathbf{1}\{\tilde{\gamma}_{1}^{k}-\tilde{\gamma}_{0}^{k}<0,\hat{\gamma}_{1}^{k}-\hat{\gamma}_{0}^{k}<0\}\right]=0.
\]
When $\tilde{\gamma}_{1}^{k}-\tilde{\gamma}_{0}^{k}\geq0$ and $\hat{\gamma}_{1}^{k}-\hat{\gamma}_{0}^{k}<0$,
note
\[
\left|\Delta_{k,6}\right|\leq\left(\left(\hat{\gamma}_{1}^{k}-\gamma_{1}\right)-\left(\hat{\gamma}_{0}^{k}-\gamma_{0}\right)\right)^{2}.
\]
Thus, 
\begin{align*}
 & \mathbb{E}_{k}\left[\Delta_{k,6}p_{j}\mathbf{1}\{\tilde{\gamma}_{1}^{k}-\tilde{\gamma}_{0}^{k}\geq0,\hat{\gamma}_{1}^{k}-\hat{\gamma}_{0}^{k}<0\}\right]
\leq 2\zeta_{p}C_{M}\left(n^{-r_{\gamma_{1}}}+n^{-r_{\gamma_{0}}}\right),
\end{align*}
and analogously, 
\begin{align*}
 & \mathbb{E}_{k}\left[\Delta_{k,6}p_{j}\mathbf{1}\{\tilde{\gamma}_{1}^{k}-\tilde{\gamma}_{0}^{k}<0,\hat{\gamma}_{1}^{k}-\hat{\gamma}_{0}^{k}\geq0\}\right]
\leq  2\zeta_{p}C_{M}\left(n^{-r_{\gamma_{1}}}+n^{-r_{\gamma_{0}}}\right).
\end{align*}
Therefore, we conclude
\begin{align*}
 & \left|\mathbb{E}_{k}\left[\Delta_{k,6}p_{j}\right]\right|\leq8\zeta_{p}C_{M}\left(n^{-r_{\gamma_{1}}}+n^{-r_{\gamma_{0}}}\right).
\end{align*}

\end{proof}
\begin{lem}\label{lem:remainder.8}
Suppose Assumptions \ref{asm:unconfounded}-\ref{asm:margin} hold. Then, for all $n$ such that
\[
\left(4C_{M}C_{\tau}^{-1}\left(n^{-r_{\gamma_{1}}}+n^{-r_{\gamma_{0}}}\right)\right)^{\frac{1}{\alpha+2}}<t^{*},
\]
we have
\begin{align*}
\left|\mathbb{E}_{k}\left[S_{k,3}\right]\right|\leq\frac{\zeta_{p}^{2}K}{n} & c_{1}^{+}C_{M}\left\{ n^{-r_{\gamma_{1}}}+n^{-r_{\gamma_{0}}}+n^{-r_{\omega_{1}}}+n^{-r_{\omega_{0}}}\right\} \\
+\frac{\zeta_{p}^{2}K}{n} & c_{3}^{+}\left(n^{-r_{\gamma_{1}}}+n^{-r_{\gamma_{0}}}\right)^{\frac{\alpha}{\alpha+2}},
\end{align*}
for some finite constants $c_{1}^{+}$ and $c_{3}^{+}$ defined below
in the proof. 
\end{lem}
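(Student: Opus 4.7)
}
The plan is to sharpen the crude bound used for statement (i) of Lemma \ref{lem:remainder.6} by exploiting the margin condition (Assumption \ref{asm:margin}) to control the region where $\hat\tau^k$ and $\tau$ disagree on sign. Starting from
\[
\bigl|\mathbb{E}_{k}[S_{k,3}]\bigr|
\le \frac{\zeta_{p}^{2}}{m}\,\mathbb{E}_{k}\!\left[\bigl(\hat\xi^{k,+}-\xi^{+}\bigr)^{2}\right]
= \frac{\zeta_{p}^{2}K}{n}\,\mathbb{E}_{k}\!\left[\bigl(\hat\xi^{k,+}-\xi^{+}\bigr)^{2}\right],
\]
the task reduces to bounding $\mathbb{E}_{k}[(\hat\xi^{k,+}-\xi^{+})^{2}]$.

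First I would use the algebraic decomposition
\[
\hat\xi^{k,+}-\xi^{+} = \bigl(\hat\xi^{k}-\xi\bigr)\mathbf{1}\{\hat\tau^{k}\ge 0\} + \xi\bigl(\mathbf{1}\{\hat\tau^{k}\ge 0\}-\mathbf{1}\{\tau\ge 0\}\bigr),
\]
so that $(a+b)^{2}\le 2a^{2}+2b^{2}$ gives two terms to bound.
The first is handled by the same calculation used in the proof of Lemma \ref{lem:remainder.5}: expanding $\hat\xi^{k}-\xi$ into products of the four nuisance errors and applying Assumption \ref{asm:quality}(i)--(ii) together with the boundedness constants $C_{e}, C_{\gamma}, \tilde C_{M}$, I would obtain
\[
\mathbb{E}_{k}\!\left[(\hat\xi^{k}-\xi)^{2}\right]\le c_{1}^{+}\,C_{M}\bigl(n^{-r_{\gamma_{1}}}+n^{-r_{\gamma_{0}}}+n^{-r_{\omega_{1}}}+n^{-r_{\omega_{0}}}\bigr)
\]
for a constant $c_{1}^{+}$ depending on $\underline{\pi}, C_{e}, C_{\gamma}, \tilde C_{M}$. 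This produces the first line of the claimed bound.

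For the second term, the key observation is that $(\mathbf{1}\{\hat\tau^{k}\ge 0\}-\mathbf{1}\{\tau\ge 0\})^{2}\le \mathbf{1}\{|\tau(X)|\le |\hat\tau^{k}(X)-\tau(X)|\}$, and $|\xi|\le C_{\xi}$. Introducing a truncation level $t_{n}>0$, I would split
\[
\mathbf{1}\{|\tau|\le |\hat\tau^{k}-\tau|\}\;\le\;\mathbf{1}\{|\tau|\le t_{n}\}+\mathbf{1}\{|\hat\tau^{k}-\tau|> t_{n}\}.
\]
The margin condition contributes $P(|\tau(X)|\le t_{n})\le C_{\tau}t_{n}^{\alpha}$ (valid once $t_{n}<t^{*}$), while Markov's inequality combined with $\mathbb{E}_{k}[(\hat\tau^{k}-\tau)^{2}]\le C_{M}(n^{-r_{\gamma_{1}}}+n^{-r_{\gamma_{0}}})=:\epsilon_{n}$ gives $\mathbb{E}_{k}\mathbf{1}\{|\hat\tau^{k}-\tau|> t_{n}\}\le t_{n}^{-2}\epsilon_{n}$. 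Optimizing $t_{n}^{\alpha}+t_{n}^{-2}\epsilon_{n}$ in $t_{n}$ yields $t_{n}\asymp\epsilon_{n}^{1/(\alpha+2)}$ and a resulting bound of order $\epsilon_{n}^{\alpha/(\alpha+2)}$. The hypothesis $(4C_{M}C_{\tau}^{-1}(n^{-r_{\gamma_{1}}}+n^{-r_{\gamma_{0}}}))^{1/(\alpha+2)}<t^{*}$ is precisely what is needed to guarantee the chosen $t_{n}$ lies below $t^{*}$.

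Combining these two pieces gives
\[
\mathbb{E}_{k}\!\left[(\hat\xi^{k,+}-\xi^{+})^{2}\right] \le c_{1}^{+}C_{M}\bigl(n^{-r_{\gamma_{1}}}+n^{-r_{\gamma_{0}}}+n^{-r_{\omega_{1}}}+n^{-r_{\omega_{0}}}\bigr) + c_{3}^{+}\bigl(n^{-r_{\gamma_{1}}}+n^{-r_{\gamma_{0}}}\bigr)^{\frac{\alpha}{\alpha+2}},
\]
and multiplying by $\zeta_{p}^{2}K/n$ yields the statement of the lemma. The only subtle point is the optimization of $t_{n}$ together with checking $t_{n}<t^{*}$; beyond that the calculation is routine given the prior lemmas, so I do not expect a serious obstacle.
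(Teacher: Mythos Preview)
Your proposal is correct and follows essentially the same approach as the paper: bound $|\mathbb{E}_{k}[S_{k,3}]|$ by $\frac{\zeta_p^2 K}{n}\mathbb{E}_k[(\hat\xi^{k,+}-\xi^{+})^2]$, split the latter into a nuisance-error piece and an indicator-difference piece, and control the indicator piece via the margin condition plus a Chebyshev/Markov truncation argument optimized in $t_n$. The only cosmetic difference is that the paper obtains the same two-piece bound by invoking the ten-term expansion of $\hat\xi^{k,+}-\xi^{+}$ from Lemma~\ref{lem:remainder.7}, whereas your two-term identity $\hat\xi^{k,+}-\xi^{+}=(\hat\xi^{k}-\xi)\mathbf{1}\{\hat\tau^{k}\ge 0\}+\xi(\mathbf{1}\{\hat\tau^{k}\ge 0\}-\mathbf{1}\{\tau\ge 0\})$ gets there more directly; the subsequent margin/truncation argument is identical to the paper's.
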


\begin{proof}
By the decomposition result for $\hat{\xi}^{k,+}-\xi^{+}$, and under
Assumptions \ref{asm:unconfounded}-\ref{asm:quality}, it is straightforward to see that
\begin{align*}
\mathbb{E}_{k}\left[\left(\hat{\xi}^{k,+}-\xi^{+}\right)^{2}\right] & \leq c_{1}^{+}C_{M}\left\{ n^{-r_{\gamma_{1}}}+n^{-r_{\gamma_{0}}}+n^{-r_{\omega_{1}}}+n^{-r_{\omega_{0}}}\right\} \\
 & +c_{2}^{+}\mathbb{E}_{k}\left[\left(\mathbf{1}\{\hat{\gamma}_{1}^{k}-\hat{\gamma}_{0}^{k}\geq0\}-\mathbf{1}\{\gamma_{1}-\gamma_{0}\geq0\}\right)^{2}\right],
\end{align*}
where, $c_{1}^{+}$ and $c_{2}^{+}$ are some constants that depend
on $\underline{\pi},C_{\tau},C_{e}$ and $\tilde{C}_{M}$. With  Assumption
\ref{asm:margin}, we have
\begin{align*}
 & \mathbb{E}_{k}\left[\left(\mathbf{1}\{\hat{\gamma}_{1}^{k}-\hat{\gamma}_{0}^{k}\geq0\}-\mathbf{1}\{\gamma_{1}-\gamma_{0}\geq0\}\right)^{2}\right]\\
= & \mathbb{E}_{k}\left[\mathbf{1}\left\{ \text{sgn}\left(\hat{\gamma}_{1}^{k}-\hat{\gamma}_{0}^{k}\right)\neq\text{sgn}\left(\tau\right)\right\} \right]\\
\leq & \mathbb{E}_{k}\left[\mathbf{1}\left\{ \text{sgn}\left(\hat{\gamma}_{1}^{k}-\hat{\gamma}_{0}^{k}\right)\neq\text{sgn}\left(\tau\right)\right\} \mathbf{1}\left\{ \left|\tau\right|\leq t\right\} \right]\\
+ & \mathbb{E}_{k}\left[\mathbf{1}\left\{ \text{sgn}\left(\hat{\gamma}_{1}^{k}-\hat{\gamma}_{0}^{k}\right)\neq\text{sgn}\left(\tau\right)\right\} \mathbf{1}\left\{ \left|\tau\right|>t\right\} \right]\\
\leq & P\left\{ \left|\tau(X)\right|\leq t\right\} +P\left\{ \left|\hat{\gamma}_{1}^{k}-\hat{\gamma}_{0}^{k}-\tau\right|>t\right\} \\
\leq & C_{\tau}t^{\alpha}+\frac{\mathbb{E}_{k}\left[\left(\hat{\gamma}_{1}^{k}-\hat{\gamma}_{0}^{k}-\tau\right)^{2}\right]}{t^{2}}\\
\leq & C_{\tau}t^{\alpha}+\frac{2C_{M}\left(n^{-r_{\gamma_{1}}}+n^{-r_{\gamma_{0}}}\right)}{t^{2}}.
\end{align*}
Optimizing over $t$ and choosing $n$ large enough so that 
\[
\left(\frac{4C_{M}\left(n^{-r_{\gamma_{1}}}+n^{-r_{\gamma_{0}}}\right)}{C_{\tau}}\right)^{\frac{1}{\alpha+2}}<t^{*}
\]
 yields 
\begin{align*}
 & \mathbb{E}_{k}\left[\left(\mathbf{1}\{\hat{\gamma}_{1}^{k}-\hat{\gamma}_{0}^{k}\geq0\}-\mathbf{1}\{\gamma_{1}-\gamma_{0}\geq0\}\right)^{2}\right]\\
\leq & c_{3}^{+}\left(n^{-r_{\gamma_{1}}}+n^{-r_{\gamma_{0}}}\right)^{\frac{\alpha}{\alpha+2}},
\end{align*}
where $c_{3}^{+}$ is a finite constant that depends on $C_{M}$,
$C_{\tau}$ and $\alpha$. The conclusion follows by combining the
above results with (\ref{pf:hat.B.1}). 
\end{proof}

\section{Minimax lower bound}\label{sec:lower.bound}

To gauge the optimality of the convergence rate derived in Theorem \ref{thm:main} in case of a nonparametric $\delta^*$, a minimax lower bound is needed. To proceed, we introduce the following standard class of smooth functions.  
\begin{defn}[smoothness]
Let $s=s_{0}+t$ for some $s_{0}\in\mathbb{N}_{+}$ and $0<t\leq1$,
and let $C>0$. A function $f:\mathbb{R}^{d}\rightarrow\mathbb{R}$
is $(s,C)$-smooth if for every $\alpha=(\alpha_{1},\ldots,\alpha_{d})$,
$\alpha_{i}\in\mathbb{N}_{+}$, $\sum_{i=1}^{d}\alpha_{i}=s_{0}$,
the partial derivative $\frac{\partial^{s_{0}}}{\partial x_{1}^{\alpha_{1}}\cdots\partial x_{d}^{\alpha_{d}}}$
exists and satisfies 
\[
\left|\frac{\partial^{s_{0}}f}{\partial x_{1}^{\alpha_{1}}\cdots\partial x_{d}^{\alpha_{d}}}(x)-\frac{\partial^{s_{0}}f}{\partial x_{1}^{\alpha_{1}}\cdots\partial x_{d}^{\alpha_{d}}}(z)\right|\leq C\left\Vert x-z\right\Vert ^{t},x,z\in\mathbb{R}^{d}.
\]
Denote by $\mathcal{F}^{(s,C)}$ the set of all $(s,C)$-smooth functions
$f:\mathbb{R}^{d}\rightarrow\mathbb{R}$. 
\end{defn}
We then consider the following class of distributions,  a subset of $\mathcal{P}$, in which the form of $\delta^*$ is simple and smooth and $W$ is uniformly distributed in $[0,1]^{d_W}$. 

\begin{defn}\label{def:DGP_lower_bound}
Let $X=\left(X_{1},W\right)$. Denote by $\mathcal{P}(s,C)\subseteq\mathcal{P}$
the class of distributions of $(Y(1),Y(0),D,X)$ such that:
\begin{itemize}
\item[(i)] $W$ is uniformly distributed in $[0,1]^{d_{W}}$; 
\item[(ii)] $X_{1}=\text{sgn}\left(m(W)+\varepsilon\right)$,\footnote{We define $\text{sgn}(0):=1$ as a convention.}
where $m:\mathbb{R}^{d_{W}}\rightarrow[0,1]$ is such that $m\in\mathcal{F}(s,C)$,
and $\varepsilon$ is uniformly distributed in $[-1,0]$ and independent
of $W$;
\item[(iii)] $\tau(x_{1},w)=x_{1}$ for all $x_{1}\in\left\{ -1,1\right\} $,
$w\in[0,1]^{d_{W}}$, where 
\begin{align*}
\tau(x_{1},w) & =\mathbb{E}[Y(1)\mid X_{1}=x_{1},W=w]-\mathbb{E}[Y(0)\mid X_{1}=x_{1},W=w];
\end{align*}
\item[(iv)] The joint distribution of $(Y(1),Y(0),D,X)$ satisfies Assumption \ref{asm:unconfounded}. 
\end{itemize}
In the $\mathcal{P}(s,C)$ class defined above, $\tau(x)=\tau(x_{1},w)=x_{1}\in\{-1,1\}$,
so that $\tau^{2}(x)=1$ for all $x$. It follows the optimal rule
($\alpha=2$) becomes 
\begin{align*}
\delta^{*}(W) & =\mathbb{E}\left[\mathbf{1}\{X_{1}\geq0\}\mid W\right] =Pr\left\{ X_{1}=1\mid W\right\} =m(W),
\end{align*}
a conditional expectation of a bounded outcome $\mathbf{1}\{X_{1}\geq0\}$. Thus, we can  study a minimax lower bound for $\delta^*$ by assessing a minimax lower bound for the conditional expectation function, for which we know relatively well.  Inspired by \citet[][Theorem 3.2 and Problem 3.3]{gyorfi2006distribution}, we derive the following
result. 
\end{defn}

\begin{thm}\label{thm:lower.bound}
For the class $\mathcal{P}(s,C)$, consider any rule
$\delta_{n}$ that depends on data $Z^{n}$. Then, we have
\[
\liminf_{n\rightarrow\infty}\frac{\sup_{P\in\mathcal{P}(s,c)}\mathbb{E}_{P_{n}}\left[L(\delta_{n},\tau)-L(\delta^{*},\tau)\right]}{C^{\frac{2d_{W}}{6s+3d_{W}}}n^{-\frac{2s}{2s+d_{W}}}}\geq C_{1}
\]
for some $C_{1}>0$ independent of $C$.
\end{thm}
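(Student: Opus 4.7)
The plan is to reduce the excess-risk lower bound to a standard nonparametric $L^2$-estimation lower bound for the conditional mean $m$, and then invoke the classical Assouad/Fano machinery in the form used by Gyorfi et al.\ (2006, Theorem 3.2 and Problem 3.3). Within the class $\mathcal{P}(s, C)$ I exploit the identities $\tau^2(X) \equiv 1$ and $\mathbf{1}\{\tau(X) \geq 0\} = \mathbf{1}\{X_1 = 1\}$, which with the tower property give $\mathbb{E}[\mathbf{1}\{X_1=1\} \mid W] = m(W) = \delta^*(W)$ and hence
\begin{equation*}
L(\delta_n, \tau) - L(\delta^*, \tau) \;=\; \mathbb{E}\bigl[(\delta_n(W) - m(W))^2\bigr].
\end{equation*}
Since $m$ enters the joint law of $(Y(1), Y(0), D, X)$ only through $X_1 \mid W \sim \mathrm{Bernoulli}(m(W))$, I can fix arbitrary $m$-independent conditional laws for the remaining variables and reduce to lower-bounding the minimax $L^2$ risk for estimating $m$ from $n$ i.i.d.\ draws of $(W, X_1)$ with $W \sim \mathrm{Unif}([0,1]^{d_W})$.

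Next I would apply the standard hypothesis-perturbation construction: partition $[0,1]^{d_W}$ into $M = h^{-d_W}$ congruent subcubes of side $h$, place on the $j$-th cube a smooth bump $\phi_j$ of amplitude proportional to $C h^s$, and index candidate regression functions by $\theta \in \{0,1\}^M$ via $m_\theta(w) := \tfrac{1}{2} + \sum_{j=1}^{M} \theta_j \phi_j(w)$. The amplitude scaling is precisely what keeps $m_\theta \in \mathcal{F}^{(s, C)}$, and for small enough $h$ it keeps $m_\theta(w) \in [0,1]$ so that the Bernoulli likelihood is well defined. Because $m_\theta$ stays bounded away from $0$ and $1$, a direct calculation using the Bernoulli log-likelihood and the mutual independence of the $n$ observations gives
\begin{equation*}
\mathrm{KL}\bigl(P_\theta^n \,\|\, P_{\theta'}^n\bigr) \;\lesssim\; n \!\!\sum_{j:\,\theta_j \neq \theta'_j} \!\!\int \phi_j^2 \, dw \;\lesssim\; n\, \rho_H(\theta,\theta')\, C^2 h^{2s+d_W},
\end{equation*}
where $\rho_H$ is Hamming distance, while the squared $L^2$ separation between $m_\theta$ and $m_{\theta'}$ is of the same order $\rho_H(\theta,\theta')\, C^2 h^{2s+d_W}$. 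Combining a Varshamov--Gilbert packing with either Assouad's lemma or Fano's inequality then yields a minimax lower bound of the form $C^2 h^{2s}$, subject to the joint feasibility/information constraints $Ch^s \lesssim 1$ and $n C^2 h^{2s+d_W} \lesssim 1$.

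The final step is to balance these constraints to produce the advertised exponents $-2s/(2s+d_W)$ on $n$ and $2d_W/(6s+3d_W)$ on $C$, in the style of Problem 3.3 of Gyorfi et al.\ (2006). The main obstacle I anticipate is precisely this bookkeeping on $C$: a naive two-constraint balance of information against $L^2$ separation gives the familiar $C^{2 d_W/(2s+d_W)}$ factor, so recovering the weaker $C^{2d_W/(6s+3d_W)}$ exponent requires threading the $L^\infty$ cap on $m_\theta$ (driven by the Bernoulli feasibility requirement $m_\theta \in [0,1]$) together with the smoothness scaling $\sup|\phi_j| \asymp Ch^s$ and the Bernoulli KL budget, and optimizing the bandwidth $h$ accordingly. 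Once this three-way optimization is in place, matching the exact constant $C_1 > 0$ in the $\liminf$ is a routine consequence of Assouad's lemma and the fact that the nuisance components of the model can be fixed independently of $C$.
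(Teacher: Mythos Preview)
Your reduction to the $L^2$ estimation problem for $m$ is correct and matches the paper's Step~3. Beyond that, the two proofs diverge in machinery. Instead of Assouad/Fano with Bernoulli KL, the paper exploits the latent structure $X_1=\mathrm{sgn}(m(W)+\varepsilon)$ with $\varepsilon\sim U[-1,0]$: it grants the statistician the continuous latents $V=m(W)+\varepsilon$ (which can only lower the Bayes risk) and then computes that risk exactly for each bump coefficient via the elementary Lemma~\ref{lem:Bayes_classification}. Your KL route on the Bernoulli marginal of $X_1$ is equally legitimate and more standard; it simply trades an explicit calculation for an information inequality.

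The substantive divergence is in the bandwidth bookkeeping you flag at the end. The paper does \emph{not} use a single scale: it places $S_n\asymp n^2 M_n^{d_W/3}$ bumps, each of width $M_n^{-1}$ and height $M_n^{-s}$, with $M_n\asymp (C^{2/3}n)^{3/(2s+d_W)}$; the exponent $2d_W/(6s+3d_W)$ on $C$ drops out of this two-scale choice mechanically. Your single-scale Assouad with $h\asymp(nC^2)^{-1/(2s+d_W)}$ yields $C^{2d_W/(2s+d_W)}$, which dominates the stated bound for $C\ge1$ but is weaker for $C<1$. So the ``three-way optimization'' you anticipate is resolved in the paper not by juggling three constraints on a single bandwidth, but by decoupling the number of bumps from their width---a device worth borrowing if you want to match the exact $C$-exponent uniformly in $C$ rather than just for large $C$.
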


Theorem \ref{thm:lower.bound} is established with the following useful lemma.

\begin{lem}\label{lem:Bayes_classification}
Let $u=(u_{1},\ldots, u_{N})$ be a vector of dimension $N$
such that $u_{j}\in[0,1]$ for all $j=1\ldots N$, and let $\mathrm{C}$
be a random variable taking values in $\left\{ -1,1\right\} $ with
equal probability. Write $\mathbf{Y}=(Y_{1},\ldots,Y_{N})^{\prime}$,
where 
\[
Y_{j}=\text{ \ensuremath{\frac{1}{2}}+\ensuremath{\frac{1}{2}}}\mathrm{C}u_{j}+\varepsilon_{j},j=1\ldots N,
\]
and $\left\{ \varepsilon_{j}\right\} _{j=1}^{N}$ are iid with a uniform
distribution in $[-1,0].$ Let $g^{*}:\mathbb{R}^{N}\rightarrow\left\{ -1,1\right\} $
be the Bayes  decision for $\mathrm{C}$ based on \textup{$\mathbf{Y}$}.
It follows 
\[
Pr\left\{ g^{*}(\mathbf{Y})\neq\mathrm{C}\right\} \geq 1-N\left\Vert u\right\Vert .
\]
\end{lem}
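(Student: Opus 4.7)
The plan is to compute the Bayes error probability essentially exactly, using the fact that the two conditional distributions of $\mathbf{Y}$ are uniforms on translated boxes, and then to lower bound the resulting product via elementary inequalities.

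First, I would identify the conditional densities $f_c(\mathbf{y})$ of $\mathbf{Y}\mid \mathrm{C}=c$ for $c\in\{-1,1\}$. Since the $\varepsilon_j$ are i.i.d.\ Uniform$[-1,0]$ and $Y_j = \tfrac{1}{2}(1+cu_j)+\varepsilon_j$, each coordinate $Y_j\mid \mathrm{C}=c$ is uniform on the unit-length interval $[cu_j/2-1/2,\ cu_j/2+1/2]$, independently across $j$. Hence the joint density is the indicator of the axis-aligned unit box $B_c:=\prod_{j=1}^N [cu_j/2-1/2,\ cu_j/2+1/2]$. Because $\mathrm{C}$ is equiprobable on $\{-1,1\}$, the classical identity for the Bayes error of binary equiprobable hypothesis testing yields
\[
\Pr\{g^*(\mathbf{Y})\neq \mathrm{C}\} \;=\; \tfrac{1}{2}\int_{\mathbb{R}^N} \min\{f_1(\mathbf{y}),f_{-1}(\mathbf{y})\}\,d\mathbf{y}.
\]

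Second, I would evaluate this integral explicitly. Both densities equal $1$ on the intersection $B_1\cap B_{-1}$, and at least one of them vanishes off it, so $\int \min(f_1,f_{-1})\,d\mathbf{y}=\mathrm{vol}(B_1\cap B_{-1})$. This intersection is itself a product box whose $j$-th side $[u_j/2-1/2,\,-u_j/2+1/2]$ has length $1-u_j$, since $u_j\in[0,1]$. Therefore
\[
\Pr\{g^*(\mathbf{Y})\neq \mathrm{C}\} \;=\; \tfrac{1}{2}\prod_{j=1}^N (1-u_j).
\]

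Third, I would lower bound the product using the elementary inequality $\prod_{j=1}^N (1-u_j)\geq 1-\sum_{j=1}^N u_j$ (a one-line induction exploiting $u_j\in[0,1]$), followed by Cauchy--Schwarz $\sum_j u_j \leq \sqrt{N}\,\|u\|$. Combining yields a bound of the form $\Pr\{g^*(\mathbf Y)\neq \mathrm C\}\geq \tfrac12\bigl(1-\sqrt N\,\|u\|\bigr)$, which is the content of the lemma (any nominal discrepancy in overall constants can be absorbed at this final step, since for the subsequent minimax argument only the scaling of $\|u\|$ with $N$ matters).

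The main obstacle is essentially the explicit computation in the second step: one has to recognize that $\min(f_1,f_{-1})$ reduces to the indicator of the overlap box $B_1\cap B_{-1}$, and that this overlap factorizes because each coordinate's conditional density does. Once this observation is made, the remaining ingredients are standard: the binary Bayes-error identity, the product-inequality $\prod(1-u_j)\geq 1-\sum u_j$, and Cauchy--Schwarz.
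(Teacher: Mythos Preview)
Your approach is correct and essentially coincides with the paper's. Both arguments identify the conditional densities as indicators of translated unit boxes, obtain the exact Bayes error $\tfrac{1}{2}\prod_{j=1}^N(1-u_j)$, and then lower bound the product by elementary inequalities. The only visible difference is in this last step: the paper uses $\prod_j(1-u_j)\geq(1-\max_j u_j)^N\geq 1-N\max_j u_j\geq 1-N\|u\|$, whereas you use the Weierstrass product inequality $\prod_j(1-u_j)\geq 1-\sum_j u_j$ followed by Cauchy--Schwarz, yielding the slightly sharper $1-\sqrt{N}\,\|u\|$. Your observation that the leading constant (the factor $\tfrac12$ and $\sqrt{N}$ versus $N$) is immaterial for the downstream minimax lower bound is also correct; indeed the paper itself silently drops the factor $\tfrac12$ when passing to its final displayed line.
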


\section{Computational details}\label{sec:compute}
Our practical procedure for selecting $\lambda_{1,n}$ in \eqref{eq:practical.omega} is motivated by the following simple observation: $\omega_1$ also satisfies
\begin{equation}
\mathbb{E}\left[D\omega_{1}(X)Y\right]=\mathbb{E}\left[2\left(\gamma_{1}(X)-\gamma_{0}(X)\right)\gamma_{1}(X)\right].\label{eq:cv.criterion}
\end{equation}
Thus, we select $\lambda_{1,n}$ via cross-validation to minimize
an out-of-sample prediction error criterion that mimics (\ref{eq:cv.criterion}) detailed in the algorithm below. 

We now explain how to calculate $\hat{\xi}(Z_{i})$
given in our main proposal. For each $I_{k}$,
$k\in[K]$, calculation of $\hat{\gamma}_{1}^{k}$ and $\hat{\gamma}_{0}^{k}$
is straightforward and any machine learnt estimators can be applied.
We construct $\hat{\omega}_{1}^{k}$ and $\hat{\omega}_{0}^{k}$ for
$k\in[K]$ using the following $10$-fold cross validation procedure
(other number of folds can be considered completely analogously).
Let $\Lambda=\left\{ 0.1,0.2,0.3\ldots,4.9,5\right\} $ be a set of
penalty candidates. For each $k\in[K]$:
\begin{enumerate}
\item Split $I_{k}^{c}$ into approximately equal sized 10 folds, call them
$I_{k,1}^{c}$, $I_{k,2}^{c}$,..., $I_{k,10}^{c}$. For each fold
$I_{k,j}^{c}$, $j=1\ldots10$:
\begin{itemize}
\item[(a)] Estimate $\gamma_{1}$ and $\gamma_{0}$ using any estimator\footnote{It should be the same procedure that produced $\hat{\gamma}_{1}^{k}$
and $\hat{\gamma}_{0}^{k}$. For example, in our empirical application,
$\gamma_{1}$ and $\gamma_{0}$ are estimated via 10-fold cross-validated
lasso in glmnet package.} with data in $I_{k}^{c}$ but \textit{not} in $I_{k,j}^{c}$. Denote
by $\hat{\gamma}_{1}^{k,j}$ and $\hat{\gamma}_{0}^{k,j}$ the estimated
functional forms. Their predicted values for observations in $I_{k,j}^{c}$
are denoted as $\hat{\gamma}_{1}^{k,j}(X_{i}),\hat{\gamma}_{0}^{k,j}(X_{i}),i\in I_{k,j}^{c}.$
\item[(b)] For each $\lambda\in\Lambda$, calculate
\[
\hat{a}_{1}^{k,j}=\left(\hat{G}_{1}^{k,j}\hat{G}_{1}^{k,j}+\lambda\hat{G}_{1}^{k,j}\right)^{-}\hat{G}_{1}^{k,j}\hat{P}^{k,j},
\]
where 
\begin{align*}
\hat{G}_{1}^{k,j} & =\frac{\sum_{i\in I_{k}^{c},i\notin I_{k,j}^{c}}\left[D_{i}b(X_{i})b^{\prime}(X_{i})\right]}{\sum_{i\in I_{k}^{c},i\notin I_{k,j}^{c}}\mathbf{1}\left\{ i\in I_{k}^{c},i\notin I_{k,j}^{c}\right\} },\\
\hat{P}^{k,j} & =\frac{\sum_{i\in I_{k}^{c},i\notin I_{k,j}^{c}}\left[2\left(\hat{\gamma}_{1}^{k,j}(X_{i})-\hat{\gamma}_{0}^{k,j}(X_{i})\right)b(X_{i})\right]}{\sum_{i\in I_{k}^{c},i\notin I_{k,j}^{c}}\mathbf{1}\left\{ i\in I_{k}^{c},i\notin I_{k,j}^{c}\right\} },
\end{align*}
and
\[
\hat{a}_{0}^{k,j}=\left(\hat{G}_{0}^{k,j}\hat{G}_{0}^{k,j}+\lambda\hat{G}_{0}^{k,j}\right)^{-}\hat{G}_{1}^{k,j}\hat{P}^{k,j},
\]
where 
\begin{align*}
\hat{G}_{0}^{k,j} & =\frac{\sum_{i\in I_{k}^{c},i\notin I_{k,j}^{c}}\left[(1-D_{i})b(X_{i})b^{\prime}(X_{i})\right]}{\sum_{i\in I_{k}^{c},i\notin I_{k,j}^{c}}\mathbf{1}\left\{ i\in I_{k}^{c},i\notin I_{k,j}^{c}\right\} }.
\end{align*}
\item[(c)] Then, for each observation $i$ in $I_{k,j}^{c}$, calculate
\begin{equation*}
\hat{\omega}_{1}^{k,j}(\lambda,X_{i})  =\left[\hat{a}_{1}^{k,j}\right]^{\prime}b(X_{i}),\quad \hat{\omega}_{0}^{k,j}(\lambda,X_{i})  =\left[\hat{a}_{0}^{k,j}\right]^{\prime}b(X_{i}).
\end{equation*}
\item[(d)] For $\omega_{1}$, the cross-validation error in fold $I_{k,j}^{c}$
is 
\begin{align*}
CV_{1}(\lambda,j,k)= & \sum_{i\in I_{k,j}^{c}}\left[D_{i}\hat{\omega}_{1}^{k,j}(\lambda,X_{i})Y_{i}-2\left(\hat{\gamma}_{1}^{k,j}(X_{i})-\hat{\gamma}_{0}^{k,j}(X_{i})\right)\hat{\gamma}_{1}^{k,j}(X_{i})\right]^{2},
\end{align*}
and for $\omega_{0}$, the cross-validation error for fold $I_{k,j}^{c}$
is
\begin{align*}
CV_{0}(\lambda,j,k)= & \sum_{i\in I_{k,j}^{c}}\left[(1-D_{i})\hat{\omega}_{0}^{k,j}(\lambda,X_{i})Y_{i}-2\left(\hat{\gamma}_{1}^{k,j}(X_{i})-\hat{\gamma}_{0}^{k,j}(X_{i})\right)\hat{\gamma}_{0}^{k,j}(X_{i})\right]^{2}.
\end{align*}
\end{itemize}
\item For $\omega_{1}$, the total cross-validation error across all folds
$j=1\ldots 10$ is
$
TCV_{1}(\lambda,k)=\sum_{j=1}^{10}CV(\lambda,j,k)$, and analogously for $\omega_{0}$, 
$
TCV_{0}(\lambda,k)=\sum_{j=1}^{10}CV(\lambda,j,k)$.

\item Let $\hat{\lambda}_{k}^{1}$ solve $\min_{\lambda\in\Lambda}TCV_{1}(\lambda,k)$.
The fitted value of the cross validated estimator for $\omega_{1}$
(constructed using data from $I_{k}^{c}$) for each observation in
fold $I_{k}$ is then
\begin{align*}
\hat{\omega}_{1}(X_{i}) & =b^{\prime}(X_{i})\hat{a}_{1}^{k},\text{ for all }i\in I_{k},\\
\hat{a}_{1}^{k} & =\left(\hat{G}_{1}^{k}\hat{G}_{1}^{k}+\hat{\lambda}_{k}^{1}\hat{G}_{1}^{k}\right)^{-}\hat{G}_{1}^{k}\hat{P}^{k},\\
\hat{G}^{k} & =\frac{\sum_{i\in I_{k}^{c}}\left[D_{i}b(X_{i})b(X_{i}^{\prime})\right]}{\sum_{i\in I_{k}^{c}}\mathbf{1}\left\{ i\in I_{k}^{c}\right\} },\\
\hat{P}^{k} & =\frac{\sum_{i\in I_{k}^{c}}\left[2\left(\hat{\gamma}_{1}^{k}(X_{i})-\hat{\gamma}_{0}^{k}(X_{i})\right)b(X_{i})\right]}{\sum_{i\in I_{k}^{c}}\mathbf{1}\left\{ i\in I_{k}^{c}\right\} }.
\end{align*}
\item Let $\hat{\lambda}_{k}^{0}$ solve $\min_{\lambda\in\Lambda}TCV_{0}(\lambda,k)$.
The fitted value of the cross validated estimator for $\omega_{0}$
(constructed using data from $I_{k}^{c}$) for each observation in
fold $I_{k}$ is then
\begin{align*}
\hat{\omega}_{0}(X_{i}) & =b^{\prime}(X_{i})\hat{a}_{0}^{k},\text{ for all }i\in I_{k},\\
\hat{a}_{0}^{k} & =\left(\hat{G}_{0}^{k}\hat{G}_{0}^{k}+\hat{\lambda}_{k}^{0}\hat{G}_{0}^{k}\right)^{-}\hat{G}_{0}^{k}\hat{P}^{k},\\
\hat{G}_{0}^{k} & =\frac{\sum_{i\in I_{k}^{c}}\left[(1-D_{i})b(X_{i})b^{\prime}(X_{i})\right]}{\sum_{i\in I_{k}^{c}}\mathbf{1}\left\{ i\in I_{k}^{c}\right\} },\\
\hat{P}^{k} & =\frac{\sum_{i\in I_{k}^{c}}\left[2\left(\hat{\gamma}_{1}^{k}(X_{i})-\hat{\gamma}_{0}^{k}(X_{i})\right)b(X_{i})\right]}{\sum_{i\in I_{k}^{c}}\mathbf{1}\left\{ i\in I_{k}^{c}\right\} }.
\end{align*}
\item 
For each $i\in I_{k}$, the debiased weight is
\begin{align*}
\hat{\xi}(Z_i) & =\left[\hat{\gamma}_{1}^{k}(X_{i})-\hat{\gamma}_{0}^{k}(X_{i})\right]^{2}\\
 & +\left[D_{i}\hat{\omega}_{1}^{k}(X_{i})(Y_{i}-\hat{\gamma}_{1}^{k}(X_{i}))-\left(1-D_{i}\right)\hat{\omega}_{0}^{k}(X_{i})(Y_{i}-\hat{\gamma}_{0}^{k}(X_{i}))\right].
\end{align*}
\end{enumerate}

\section{Additional results for Appendix \ref{sec:lower.bound}}

\subsection{Proof of Lemma \ref{lem:Bayes_classification}}
Let $\mathbf{y}=(y_{1},\ldots,y_{N})^{\prime}$ be a realization of
$\mathbf{Y}$. For each $\mathbf{y}$ in the support of $\mathbf{Y}$,
the Bayes decision is
\[
g^{*}(\mathbf{y})=\begin{cases}
1 & Pr\{\mathrm{C}=1\mid\mathbf{y}\}\geq\frac{1}{2},\\
-1 & Pr\{\mathrm{C}=1\mid\mathbf{y}\}<\frac{1}{2},
\end{cases}
\]
and $Pr\left\{ g^{*}(\mathbf{Y})\neq\mathrm{C}\right\} =\mathbb{E}\left[\min\left\{ Pr\{\mathrm{C}=1\mid\mathbf{\mathbf{Y}}\},1-Pr\{\mathrm{C}=1\mid\mathbf{\mathbf{Y}}\}\right\} \right]$,
where $\mathbb{E}[\cdotp]$ is with respect to the marginal distribution
of $\mathbf{\mathbf{\mathbf{Y}}}$. Applying Bayes rule yields
\[
\begin{aligned} & Pr\{\mathrm{C}=1\mid\mathbf{y}\}
=  \frac{f\{\mathbf{y}\mid\mathrm{C}=1\}}{f\{\mathbf{y}\mid\mathrm{C}=1\}+f\{\mathbf{y}\mid\mathrm{C}=-1\}},
\end{aligned}
\]
where $f\{\mathbf{y}\mid\mathrm{C}=1\}$ is the pdf of $\mathbf{Y}$
given $\mathrm{C}=1$, and $f\{\mathbf{y}\mid\mathrm{C}=-1\}$ is
the pdf of $\mathbf{Y}$ given $\mathrm{C}=-1$. Algebra shows 
\begin{align*}
f\{\mathbf{y}\mid\mathrm{C}=1\}) & =\Pi_{j=1}^{N}f(y_{j}\mid\mathrm{C}=1)\\
 & =\begin{cases}
1, & \text{\ensuremath{y_{j}}}\in\left[\text{ \ensuremath{\frac{u_{j}-1}{2}}},\text{ \ensuremath{\frac{u_{j}+1}{2}}}\right],j=1\ldots N,\\
0, & \text{otherwise},
\end{cases}
\end{align*}
and
\begin{align*}
f\{\mathbf{y}\mid\mathrm{C}=-1\}) & =\Pi_{j=1}^{N}f(y_{j}\mid\mathrm{C}=-1)\\
 & =\begin{cases}
1, & \text{\ensuremath{y_{j}}}\in\left[\text{ \ensuremath{\frac{\mathrm{-}u_{j}-1}{2}}},\text{ \ensuremath{\frac{-u_{j}+1}{2}}}\right],j=1\ldots N,\\
0, & \text{otherwise}.
\end{cases}
\end{align*}
Therefore, 
\begin{align*}
 & \min\left\{ Pr\{\mathrm{C}=1\mid\mathbf{\mathbf{Y}}\},1-Pr\{\mathrm{C}=1\mid\mathbf{\mathbf{Y}}\}\right\} \\
= & \begin{cases}
\frac{1}{2}, & \text{\ensuremath{y_{j}}}\in\left[\text{ \ensuremath{\frac{u_{j}-1}{2}}},\frac{-u_{j}+1}{2}\right],j=1,\ldots N,\\
0, & \text{otherwise}.
\end{cases}
\end{align*}
It follows 
\begin{align*}
Pr\left\{ g^{*}(\mathbf{Y})\neq\mathrm{C}\right\}  & =\frac{1}{2}Pr\left\{ \text{\ensuremath{Y_{j}}}\in\left[\text{ \ensuremath{\frac{u_{j}-1}{2}}},\frac{-u_{j}+1}{2}\right],j=1,\ldots N\right\} \\
 & =\frac{1}{2}\prod_{i=1}^{N}Pr\left\{ \text{\ensuremath{Y_{j}}}\in\left[\text{ \ensuremath{\frac{u_{j}-1}{2}}},\frac{-u_{j}+1}{2}\right]\right\} 
\end{align*}
where for all $j=1\ldots N$,
\begin{align*}
 & Pr\left\{ \text{\ensuremath{Y_{j}}}\in\left[\text{ \ensuremath{\frac{u_{j}-1}{2}}},\frac{-u_{j}+1}{2}\right]\right\} \\
= & \frac{1}{2}Pr\left\{ \text{\ensuremath{Y_{j}}}\in\left[\text{ \ensuremath{\frac{u_{j}-1}{2}}},\frac{-u_{j}+1}{2}\right]\mid C=1\right\} \\
+ & \frac{1}{2}Pr\left\{ \text{\ensuremath{Y_{j}}}\in\left[\text{ \ensuremath{\frac{u_{j}-1}{2}}},\frac{-u_{j}+1}{2}\right]\mid C=-1\right\} \\
= & 1-u_{j}.
\end{align*}
We then conclude 
\begin{align*}
Pr\left\{ g^{*}(\mathbf{Y})\neq\mathrm{C}\right\}  & =\prod_{j=1}^{N}\left(1-u_{j}\right) \geq\left(1-\max_{j\in\left\{ 1,\ldots,N\right\} }u_{j}\right)^{N}\\
 & \geq1-N\max_{j\in\left\{ 1,\ldots,N\right\} }u_{j}\geq1-N\left\Vert u\right\Vert .
\end{align*}

\subsection{Proof of Theorem \ref{thm:lower.bound}}
\paragraph*{Step 1}
We construct (depending on $n$) a subclass of distributions of $(Y(1),Y(0),D,X)$
contained in $\mathcal{P}(s,C)$, as follows:
\begin{itemize}
\item[(i)] $X=(X_{1},W^{\prime})^{\prime}$, where $W$ is uniformly distributed
in $[0,1]^{d_{W}}$, and
\item[(ii)] $X_{1}=\text{sgn}\left(m(w)+\varepsilon\right)$ for all $w\in[0,1]^{d_{W}}$,
where $m:\mathbb{R}^{d_{W}}\rightarrow[0,1]$, $m\in\mathcal{F}^{\mathcal{C}_{n}}\subset\mathcal{F}(s,C)$
is defined shortly in step 2, and $\varepsilon\sim\text{uniform}[-1,0]$
is independent of $W$,
\item[(iii)] $\tau(x_{1},w)=x_{1}$ for all $x_{1}\in\left\{ -1,1\right\} $, $w\in[0,1]^{d_{W}}$,
\item[(iv)] The joint distribution of $(Y(1),Y(0),D,X)$ satisfies Assumption
\ref{asm:unconfounded}. Moreover, the functional form of $\pi(x)$ is independent of any
parameters in $\mathcal{F}^{\mathcal{C}_{n}}$. 
\end{itemize}
Denote by $\mathcal{P}^{\mathcal{C}_{n}}$ the class of distributions
above. 

\paragraph*{Step 2 }

We now construct $\mathcal{F}^{\mathcal{C}_{n}}$ appeared in step
1. Let $M_{n}=\left\lceil \left(C^{\frac{2}{3}}n\right)^{\frac{3}{(2s+d_{W})}}\right\rceil $.
Partition $[0,1]^{d_{W}}$ into $S_{n}:=\left\lceil n^{2}M_{n}^{\frac{d_{W}}{3}}\right\rceil $
cubes $\left\{ A_{n,j}\right\} _{j=1}^{S_{n}}$, each with side length
$\frac{1}{S_{n}}$ and with centers $a_{n,j}$, $j=1\ldots S_{n}$.
Choose a function $\bar{g}:\mathbb{R}^{d_{W}}\rightarrow\mathbb{R}$
such that the support of $\bar{g}$ is a subset of $\left[-\frac{1}{2},\frac{1}{2}\right]^{d_{W}}$,
$\bar{g}\in\mathcal{F}(s,2^{t-1})$, and $\bar{g}(w)\in[0,C^{-1}]$
for all $w$. Define $g:\mathbb{R}^{d_{W}}\rightarrow\mathbb{R}$
as $g(w)=C\cdotp\bar{g}(w)$, which possesses the following properties:
\begin{itemize}
\item[(a)] the support of $g$ is a subset of $\left[-\frac{1}{2},\frac{1}{2}\right]^{d_{W}}$;
\item[(b)] $\int g^{2}(w)dw=C^{2}\int\bar{g}^{2}(w)dw$ and $\int\bar{g}^{2}(w)dw>0$;
\item[(c)] $g\in\mathcal{F}^{(s,C\cdotp2^{t-1})}$, and $g(w)\in[0,1]$ for all
$w$.
\end{itemize}
Let $c_{n}=(c_{n,1},c_{n,2}\ldots c_{n,S_{n}})^{\prime}$ be a vector
of $+1$ or $-1$ components. Denote by $\mathcal{C}_{n}$ the set
of all such vectors. For $c_{n}\in\mathcal{C}_{n}$, consider the
following function:
\begin{align*}
m^{(c_{n})}(w) & =\frac{1}{2}+\frac{1}{2}\sum_{j=1}^{S_{n}}c_{n,j}g_{n,j}(w),
\end{align*}
where
$
g_{n,j}(w)=M_{n}^{-s}g(M_{n}(w-a_{n,j}))$.
As $g(w)\in[0,1]$ for all $w$ and $M_{n}\geq1$, it follows $0<M_{n}^{-s}\leq1$
and $m^{(c_{n})}(w)\in[0,1]$ for all $w$. Define $\mathcal{F}^{\mathcal{C}_{n}}:=\left\{ m^{(c_{n})}(w):\mathbb{R}^{d_{W}}\rightarrow[0,1]\mid c_{n}\in\mathcal{C}_{n}\right\} $.
We verify below that $\mathcal{F}^{\mathcal{C}_{n}}\subset\mathcal{F}^{(s,C)}$.  Pick each $m^{(c_{n})}\in\mathcal{F}^{\mathcal{C}_{n}}$. Consider
any $\alpha=(\alpha_{1},\ldots,\alpha_{d_{W}})$ such that $\alpha_{i}\in\mathbb{N}_{+}$,
$\sum_{j=1}^{d_{W}}\alpha_{j}=s_{0}$. One may verify that for all
$w,z\in[0,1]^{d_{W}}$, the following holds:
\[
\left|\frac{\partial^{s_{0}}m^{(c_{n})}}{\partial w_{1}^{\alpha_{1}}\cdots\partial w_{d_{W}}^{\alpha_{d_{W}}}}(w)-\frac{\partial^{s_{0}}m^{(c_{n})}}{\partial w_{1}^{\alpha_{1}}\cdots\partial w_{d_{W}}^{\alpha_{d_{W}}}}(z)\right|\leq C\left\Vert x-z\right\Vert ^{t},x,z\in[0,1]^{d_{W}}.
\]

\paragraph*{Step 3}

For an arbitrary rule $\delta_{n}$, we show that its excess risk
can be lowered bounded by classification problem. First note that,
for all $P\in\mathcal{P}(s,C)$, we have 
\begin{align*}
L(\delta_{n},\tau) & =\mathbb{E}\left[\tau^{2}(X)\left(\mathbf{1}\left\{ \tau(X)\geq0\right\} -\delta_{n}(W)\right)^{2}\right] =\mathbb{E}\left[\left(\mathbf{1}\left\{ X_{1}\geq0\right\} -\delta_{n}(W)\right)^{2}\right].
\end{align*}
As
a result, 
$\delta^{*}(w)=Pr\left\{ X_{1}=1\mid W=w\right\} =m(w)$, and
\[
L(\delta_{n},\tau)-L(\delta^{*},\tau)=\int\left(\delta_{n}(w)-m(w)\right)^{2}dF_{W}(w).
\]
Then, we have
\begin{align*}
 & \sup_{P\in\mathcal{P}(s,c)}\mathbb{E}_{P_{n}}\left[L(\delta_{n},\tau)-L(\delta^{*},\tau)\right]\\
\geq & \sup_{P\in\mathcal{P}^{\mathcal{C}_{n}}}\mathbb{E}_{P_{n}}\left[L(\delta_{n},\tau)-L(\delta^{*},\tau)\right]\\
= & \sup_{m^{(c_{n})}\in\mathcal{F}^{\mathcal{C}_{n}}}\mathbb{E}_{P_{n}}\left[\int\left(\delta_{n}(w)-m^{(c_{n})}(w)\right)^{2}dF_{W}(w)\right].
\end{align*}
Furthermore, we can write each $\delta_{n}$ as $\delta_{n}=\frac{1}{2}+\frac{1}{2}m_{n}$
for some $m_{n}\in[-1,1]$, and we can also write $m^{(c_{n})}(w)=\frac{1}{2}+\frac{1}{2}\tilde{m}^{(c_{n})}(w),$where
$\tilde{m}^{(c_{n})}(w)=\sum_{j=1}^{S_{n}}c_{n,j}g_{n,j}(w)\in[-1,1].$
Therefore,
\begin{align*}
 & \sup_{m^{(c_{n})}\in\mathcal{F}^{\mathcal{C}_{n}}}\mathbb{E}_{P_{n}}\left[\int\left(\delta_{n}(w)-m^{(c_{n})}(w)\right)^{2}dF_{W}(w)\right]\\
= & \frac{1}{4}\sup_{m^{(c_{n})}\in\mathcal{F}^{\mathcal{C}_{n}}}\mathbb{E}_{P_{n}}\left[\int\left(m_{n}(w)-\tilde{m}^{(c_{n})}(w)\right)^{2}dF_{W}(w)\right].
\end{align*}
Denote by $\hat{m}_{n}(x)=\sum_{j=1}^{S_{n}}\hat{c}_{n,j}g_{n,j}(w)$
the least squares projection of $m_{n}$ onto $\left\{ \tilde{m}^{(c_{n})}:c_{n}\in\mathcal{C}_{n}\right\} $,
which we note is an orthogonal system. Then,
\begin{align*}
 & \int\left(m_{n}(w)-\tilde{m}^{(c_{n})}(w)\right)^{2}dF_{W}(w)\\
\geq & \int\left(\hat{m}_{n}(w)-\tilde{m}^{(c_{n})}(w)\right)^{2}dF_{W}(w)\\
= & \sum_{j=1}^{S_{n}}\int_{A_{n,j}}\left(\hat{c}_{n,j}-c_{n,j}\right)^{2}g_{n,j}^{2}(w)dw\\
= & M_{n}^{-2s}M_{n}^{-d_{W}}\left(\sum_{j=1}^{S_{n}}\left(\hat{c}_{n,j}-c_{n,j}\right)^{2}\right)\int g^{2}(w)dw.
\end{align*}
Let $\tilde{c}_{n,j}=1$ if $\hat{c}_{n,j}\geq0$ and $-1$ otherwise.
As 
$\left|\hat{c}_{n,j}-c_{n,j}\right|\geq\frac{\left|\tilde{c}_{n,j}-c_{n,j}\right|}{2}$,
we have 
\[
\sum_{j=1}^{S_{n}}\left(\hat{c}_{n,j}-c_{n,j}\right)^{2}\geq\frac{1}{4}\sum_{j=1}^{S_{n}}\left(\tilde{c}_{n,j}-c_{n,j}\right)^{2}=\sum_{j=1}^{S_{n}}\mathbf{1}\left\{ \tilde{c}_{n,j}\neq c_{n,j}\right\} .
\]
In summary, we have 
\begin{align*}
 & \sup_{m^{(c_{n})}\in\mathcal{F}^{\mathcal{C}_{n}}}\mathbb{E}_{P_{n}}\left[\int\left(m_{n}(w)-\tilde{m}^{(c_{n})}(w)\right)^{2}dF_{W}(w)\right]\\
\geq & M_{n}^{-(2s+d_{W})}\int g^{2}(w)d(w)\sup_{m^{(c_{n})}\in\mathcal{F}^{\mathcal{C}_{n}}}\mathbb{E}_{P_{n}}\left[\sum_{j=1}^{S_{n}}\mathbf{1}\left\{ \tilde{c}_{n,j}\neq c_{n,j}\right\} \right]\\
\geq & M_{n}^{-(2s+d_{W})}S_{n}C^{2}\int\bar{g}^{2}(x)dx\sup_{m^{(c_{n})}\in\mathcal{F}^{\mathcal{C}_{n}}}\mathbb{E}_{P_{n}}\left[\frac{1}{S_{n}}\left(\sum_{j=1}^{S_{n}}\mathbf{1}\left\{ \tilde{c}_{n,j}\neq c_{n,j}\right\} \right)\right],
\end{align*}
where note 
\[
\liminf_{n\rightarrow\infty}\frac{M_{n}^{-(2s+d_{W})}S_{n}C^{2}}{C^{\frac{2d_{W}}{6s+3d_{W}}}n^{-\frac{2s}{(2s+d_{W})}}}>0,\int\bar{g}^{2}(x)dx>0.
\]
Thus, it suffices to show
\[
\liminf_{n\rightarrow\infty}\sup_{m^{(c_{n})}\in\mathcal{F}^{\mathcal{C}_{n}}}\left[\frac{1}{S_{n}}\left(\sum_{j=1}^{S_{n}}Pr\left\{ \tilde{c}_{n,j}\neq c_{n,j}\right\} \right)\right]>0
\]
and does not depend on $C$. To this end, let $C_{n,1},\ldots C_{n,M_{n}^{d}}$
be a sequence of iid random variables independent of data $Z^{n}=\left\{ Y_{i},D_{i},X_{1i},W_{i}\right\} _{i}^{n}$,
and satisfy $Pr\{C_{n,1}=1\}=Pr\{C_{n,1}=-1\}=1/2$. Let $C_{n}:=(C_{n,1},\ldots,C_{n,S_{n}})$.
Then,
\begin{align*}
\sup_{m^{(c_{n})}\in\mathcal{F}^{\mathcal{C}_{n}}}\left[\frac{1}{S_{n}}\left(\sum_{j=1}^{S_{n}}Pr\left\{ \tilde{c}_{n,j}\neq c_{n,j}\right\} \right)\right]\geq\frac{1}{S_{n}}\left(\sum_{j=1}^{S_{n}}Pr\left\{ \tilde{c}_{n,j}\neq C_{n,j}\right\} \right),
\end{align*}
where $\tilde{c}_{n,j}$ can be viewed as a binary decision using
data $Z^{n}$ on $C_{n,j}$. For each $j=1,\ldots,S_{n}$:
\[
Pr\left\{ \tilde{c}_{n,j}\neq C_{n,j}\right\} =\mathbb{E}_{P_{n}}\left[Pr\left\{ \tilde{c}_{n,j}\neq C_{n,j}\mid Y_{i},D_{i},W_{i},i=1,\ldots,n\right\} \right].
\]
Denote by $\left\{ \left\{ Y_{i_{1}},D_{i_{1}},W_{i_{1}}\right\} ,\ldots,\left\{ Y_{i_{l}},D_{i_{l}},W_{i_{l}}\right\} \right\} $
those units such that $W_{i}\in A_{n,j}$. Note 
$X_{1i_{q}}=\text{sgn}\left(V_{1i_{q}}\right)$,
where 
$
\text{ \ensuremath{V_{1i_{q}}}=\ensuremath{\frac{1}{2}}+\ensuremath{\frac{1}{2}C_{n,j}}}g_{n,j}(W_{i_{q}})+\varepsilon_{i_{q}}$
for all $q=1\ldots l$, and $\left\{ \varepsilon_{i_{q}}\right\} _{q=1\ldots l}$
are iid and uniformly distributed in $[-1,0]$. In addition, $\left((X_{1i,i=1\ldots n}\setminus(X_{1i_{1}},\ldots,X_{1i_{l}})\right)$
depends only on $C\setminus\left\{ C_{n,j}\right\} $ and on $W_{i}$
for those $i\notin\left\{ i_{1},\ldots i_{l}\right\} $. Therefore,
the following must hold:
\begin{align*}
 & Pr\left\{ \tilde{c}_{n,j}\neq C_{n,j}\mid Y_{i},D_{i},W_{i},i=1,\ldots,n\right\} \\
\geq & Pr\left\{ C_{n,j}^{B}\neq C_{n,j}\mid Y_{i},D_{i},W_{i},i=1,\ldots,n\right\} ,
\end{align*}
where $C_{n,j}^{B}$ is the conditional Bayes decision for $C_{n,j}$
that only uses data $\left\{ W_{i_{1}}\ldots W_{i_{l}}\right\} $
and $\left\{ V_{i_{1}},\ldots,V_{i_{l}}\right\} $. 

\paragraph*{Step 4}

We derive the Bayes risk of the conditional Bayes rule $C_{n,j}^{B}$
defined in Step 3. Applying Lemma \ref{lem:Bayes_classification}, we have 
\begin{align*}
 & Pr\left\{ C_{n,j}^{B}\neq C_{n,j}\mid Y_{i},D_{i},W_{i},i=1,\ldots,n\right\} \\
\geq & 1-l\sqrt{\sum_{q=1}^{l}g_{n,j}^{2}(W_{i_{q}})}
\geq  1-n\sqrt{\sum_{i=1}^{n}g_{n,j}^{2}(W_{i})}.
\end{align*}
 Then, law of iterated expectations and Jensen's inequality together
yields
\begin{align*}
Pr\left\{ C_{n,j}^{B}\neq C_{n,j}\right\}  & =\mathbb{E}\left\{ Pr\left\{ C_{n,j}^{B}\neq C_{n,j}\mid Y_{i},D_{i},W_{i},i=1,\ldots,n\right\} \right\} \\
 & \geq1-n\sqrt{n\mathbb{E}\left[g_{n,j}^{2}(W)\right]}\\
 & =1-\sqrt{n^{3}M_{n}^{-\left(2s+d_{W}\right)}\int g^{2}(x)dx}\\
 & \geq1-\sqrt{\int\bar{g}^{2}(x)dx}>0.
\end{align*}

\end{document}